\documentclass[pra,twocolumn,showpacs,amsmath,amssymb,superscriptaddress,floatfix,longbibliography,nofootinbib]{revtex4-2}
\usepackage{times}
\usepackage{helvet}
\usepackage{courier}
\usepackage{hyperref}
\hypersetup{
    colorlinks=true,   
    linkcolor=cyan,    
    citecolor=magenta, 
    filecolor=magenta, 
    urlcolor=cyan,     
    runcolor=cyan
}

\usepackage{graphicx}
\usepackage{natbib}

\usepackage{algorithm}
\usepackage{algorithmic}

\usepackage{newfloat}
\usepackage{listings}

\usepackage{amsmath}
\usepackage{amsfonts}
\usepackage{amssymb}
\usepackage{amsthm}
\usepackage{braket}
\usepackage[capitalize]{cleveref}
\usepackage{xcolor}
\usepackage{booktabs}
\usepackage{tikz}
\usepackage{mathtools}

\newtheorem{mytheorem}{Theorem}[section]
\newtheorem{mylemma}{Lemma}[section]
\newtheorem{mycorollary}{Corollary}[section]
\newtheorem{myproposition}{Proposition}[section]

\DeclareMathOperator{\Res}{Res}
\DeclareMathOperator{\Tr}{Tr}
\DeclareMathOperator{\diag}{diag}

\DeclareMathOperator{\rank}{rank}
\DeclareMathOperator{\real}{Re}
\DeclareMathOperator{\NN}{NN}
\DeclareMathOperator{\NNN}{NNN}
\newcommand{\floor}[1]{\left\lfloor #1 \right\rfloor}
\newcommand{\myvec}[1]{{\boldsymbol{#1}}}
\newcommand{\vlambda}{\myvec{\lambda}}
\newcommand{\vdlambda}{\myvec{\delta\lambda}}
\newcommand{\vecv}{\myvec{v}}
\newcommand{\veczero}{\myvec{0}}
\newcommand{\abs}[1]{\left|#1\right|}
\newcommand{\norm}[1]{\left\|#1\right\|}
\newcommand{\ketbra}[1]{|{#1}\>\!\<#1|}
\newcommand{\bk}[2]{\<{#1}|{#2}\>}
\def\>{\rangle}
\def\<{\langle}
\newcommand{\mcM}{\mathcal{M}}
\newcommand{\mcH}{\mathcal{H}}
\newcommand{\mcS}{\mathcal{S}}
\newcommand{\mcD}{\mathcal{D}}
\newcommand{\mcDtrain}{\mathcal{D}_{\textnormal{train}}}
\newcommand{\mcDtest}{\mathcal{D}_{\textnormal{test}}}
\DeclareMathOperator{\DKL}{D_{\textnormal{KL}}}

\newcommand{\bal}{\begin{aligned}}
\newcommand{\eal}{\end{aligned}}
\newcommand{\bes}{\begin{subequations}}
\newcommand{\ees}{\end{subequations}}
\newcommand{\bea}{\begin{align}}
\newcommand{\eea}{\end{align}}
\newcommand{\beq}{\begin{equation}}
\newcommand{\eeq}{\end{equation}}

\begin{document}

\title{Detecting Quantum and Classical Phase Transitions via Unsupervised Machine Learning of the Fisher Information Metric}
\author{Victor Kasatkin}
\affiliation{Center for Quantum Information Science \& Technology, University of Southern California}
\affiliation{Department of Electrical \& Computer Engineering, University of Southern California}
\author{Evgeny Mozgunov}
\affiliation{Center for Quantum Information Science \& Technology, University of Southern California}
\affiliation{Department of Electrical \& Computer Engineering, University of Southern California}
\author{Nicholas Ezzell}
\affiliation{Center for Quantum Information Science \& Technology, University of Southern California}
\affiliation{Information Sciences Institute, University of Southern California}
\affiliation{Department of Physics and Astronomy, University of Southern California}
\author{Daniel Lidar}
\affiliation{Center for Quantum Information Science \& Technology, University of Southern California}
\affiliation{Department of Electrical \& Computer Engineering, University of Southern California}
\affiliation{Department of Physics and Astronomy, University of Southern California}
\affiliation{Department of Chemistry, University of Southern California}

\begin{abstract}
The detection of quantum and classical phase transitions in the absence of an order parameter is possible using the Fisher information metric (FIM), also known as fidelity susceptibility. Here, we investigate an unsupervised machine learning (ML) task: estimating the FIM given limited samples from a multivariate probability distribution of measurements made throughout the phase diagram.
We utilize an unsupervised ML method called ClassiFIM (developed in a companion paper) to solve this task and demonstrate its empirical effectiveness in detecting both quantum and classical phase transitions using a variety of spin and fermionic models, for which we generate several publicly available datasets with accompanying ground-truth FIM.
We find that ClassiFIM reliably detects both topological (e.g., XXZ chain) and dynamical (e.g., metal-insulator transition in the Hubbard model) quantum phase transitions. We perform a detailed quantitative comparison with two prior unsupervised ML methods for detecting quantum phase transitions. We demonstrate that ClassiFIM is competitive with these prior methods in terms of appropriate accuracy metrics while requiring significantly less resource-intensive training data compared to the original formulation of the prior methods. In particular, ClassiFIM only requires classical (single-basis) measurements. As part of our methodology development, we prove several theorems connecting the classical and quantum fidelity susceptibilities through equalities or bounds. We also significantly expand the existence conditions of the fidelity susceptibility, e.g., by relaxing standard differentiability conditions. These results may be of independent interest to the mathematical physics community.
\end{abstract}

\maketitle

\section{Introduction}
\label{sec:intro}

The use of machine learning (ML) to study many-body quantum systems~\cite{Sachdev:book,Wen:2017aa,Weimer:2021aa} is an active and exciting area of research~\cite{carleo2017solving, beach2019qucumber, carleo2019netket, carrasquilla2020machine, tibaldi2023unsupervised}. An important application of ML in this setting is the study of phase transitions (PTs)~\cite{vanNieuwenburg2016LearningPT, carrasquilla2017machine, ch2017machine, Lidiak2020UnsupervisedML, uvarov2020machine, Kming2021UnsupervisedML, Guo2022LearningPT, Karsch2022AML, abram2022inferring, Huang:22, Caro:2022aa, Li2023MachineLP}. 

To informally describe a PT, suppose that a system is in state $\rho_A$ subject to external parameters $\vlambda = (\lambda_1,\lambda_2,\dots)$ (e.g., magnetic and electric fields). A PT occurs when by adjusting $\vlambda \rightarrow \vlambda'$, the new state $\rho_B$ is radically different. If the transition is temperature-driven (e.g., water to ice), the PT is usually classified as classical (CPT), whereas if it occurs at $T=0$ due to quantum fluctuations driven by another parameter (e.g., the superfluid to Mott insulator transition in ultracold atoms~\cite{Greiner:2002aa}), it is classified as quantum (QPT)~\cite{Sachdev:book}. For a brief but more detailed discussion of QPTs, see \cref{as:background}.

In most cases, different phases are distinguishable by some local measurement. Specifically, a nonzero value of an order parameter corresponds to one of the (approximately) degenerate ground states in the ordered phase. If no local order parameter distinguishes ordered ground states in this way, the order is considered topological and involves topological QPTs (TQPTs)~\cite{Wen:2017aa,hamma2008entanglement}.  
Measurements of the order parameter (nonlocal in the case of TQPTs) as a function of $\vlambda$ allow one to locate a phase boundary: the values of $\vlambda$ where the order parameter behaves in a non-analytic way in the thermodynamic limit. A phase diagram is, by definition, a plot of regions in $\vlambda$-space (phases) and the boundaries between them. The phase boundaries are smeared for finite-size systems.

Computing a quantum phase diagram or phase boundary is generally numerically intractable, except for simple models with a high degree of symmetry. For this reason, most uses of ML for QPTs are concerned with efficiently determining such boundaries for a fixed system size under different assumptions. Early approaches involved supervised learning on data labeled with its phase using a binary classifier~\cite{carrasquilla2020machine}. More recent approaches focus on the unsupervised setting, where no labels or order parameters are known. Early progress in this setting was made by showing that an unsupervised QPT task could be solved with a series of mislabeled binary classifiers~\cite{vanNieuwenburg2016LearningPT}, but this approach is susceptible to certain practical obstacles we discuss below.
Substantial progress was made in~\cite{Huang:22} using classical shadow tomography~\cite{Huang:2020wo}, the current state-of-the-art method.

However, since the models, input data structures, and the outputs of some of the previous work on ML for QPTs such as Refs.~\cite{vanNieuwenburg2016LearningPT, carrasquilla2017machine, ch2017machine, Huang:22} are different, they are not directly comparable. 
More generally, there is currently no convention regarding goals, input data assumptions, datasets, or quantitative metrics in the area of ML for QPTs. This lack of standardization makes it difficult to quantitatively compare methods even indirectly, an issue we address in this work. We note that this standardization issue was already addressed to some extent by Ref.~\cite{arnold2023machine} by adjusting other methods, including that of Ref.~\cite{vanNieuwenburg2016LearningPT}, to output a lower bound on the Fisher information metric (FIM).

Another difficulty is identifying order parameters in the first place: except in simple cases, it may be difficult to identify a physical observable whose expectation value plays the role of an order parameter. 
For this reason, much attention has focused on alternatives to order parameters for finding transitions detectable via the state $\rho$, such as the Fisher information metric (FIM) or the equivalent fidelity susceptibility, which are known to exhibit sharp peaks at locations of classical or quantum phase transitions~\cite{venuti2007quantum,zanardi2007bures,zanardi2007mixed,zanardi2007information,GU:2010vv} (with certain pathological exceptions~\cite{cincio2019universal}). As also noted by Ref.~\cite{arnold2023machine}, a plot of the FIM as a function of $\vlambda$ can thus be viewed as an ``order-parameter-free phase diagram", which we call an FIM phase diagram. Thus, an accurate estimate of the FIM can be viewed as the ground truth for the detection of phase transitions (either classical or quantum) without the knowledge of an order parameter. The fidelity susceptibility can be defined both for classical probability distributions and for quantum states $\rho$, in which case it is called the quantum fidelity susceptibility. 

The first main contribution we make here concerns the relationship between the FIM extracted from this probability distribution directly, and the quantum fidelity susceptibility of the underlying states. We work in the limit where the number of possible measurement outcomes is much smaller than what is needed for full state tomography. In fact, for most physical models of interest, it is enough to measure all the qubits in the computational basis (the simultaneous eigenstates of all the Pauli $\sigma^z$ operators), resulting in a probability distribution of bitstrings with one bit per qubit. This probability distribution contains less information than the full quantum state $\rho$, and yet the classical and quantum fidelities of the two turn out to be related under very general assumptions.

It may be surprising that measurements in a fixed basis can be used to detect quantum phase transitions. We provide a rigorous mathematical foundation for this result by identifying conditions under which the quantum and classical fidelity susceptibilities are either identical or agree in expectation. We also find sharp bounds relating the quantum and classical susceptibilities. Moreover, generalizing earlier work~\cite{venuti2007quantum,zanardi2007bures,zanardi2007mixed,zanardi2007information,GU:2010vv}, we relax the commonly assumed differentiability conditions under which the fidelity susceptibility is well-defined.

The second main contribution we make in this work relies on a method developed in a companion paper \cite{ClassiFIM-ML}, which introduces a method called `ClassiFIM' for solving the task of estimation of the FIM. Here, we discuss the motivation behind this task and its relationship with unsupervised estimation of phase transitions. We present several datasets containing limited samples of measurement outcomes of various systems, together with the ground truth FIM of the corresponding probability distribution. We demonstrate using various example systems we analyze, that phase transitions have clear FIM signatures that are also detectable using ClassiFIM. 
For example, we use ClassiFIM to produce an accurate FIM phase diagram for a $300$-qubit XXZ model that is known to exhibit a TQPT~\cite{Elben}. 
ClassiFIM relies on estimating the \emph{classical} fidelity susceptibility, i.e., as mentioned above, it uses measurements in a \emph{fixed} basis, e.g., the computational basis. One major benefit of only needing fixed basis measurements is that ClassiFIM is straightforwardly compatible with analog quantum devices such as quantum annealers, as well as with various Monte Carlo methods. We show that ClassiFIM applies straightforwardly to classical phase transitions as well, and performs comparably or better than previous ML methods such as those proposed in Refs.~\cite{vanNieuwenburg2016LearningPT,Huang:22}.
Specifically, in addition to introducing machine learning methods, Refs.~\cite{vanNieuwenburg2016LearningPT,Huang:22} make suggestions about the datasets to which to apply these methods: \cite{vanNieuwenburg2016LearningPT} suggests using entanglement spectra, and \cite{Huang:22} presents the machine learning method in the context of shadow tomography data. Such datasets are much more resource-intensive to obtain than those discussed in this work, which contain bitstrings measured in the computational basis. To ensure a fair comparison, here we provide the methods of Refs.~\cite{vanNieuwenburg2016LearningPT,Huang:22} with the same datasets as the ones we generated for ClassiFIM, and moreover, we impose the same bounds on computational resources in all cases.

The structure of this paper is the following. In \cref{sec:FIM-FS} we first briefly review the FIM and its connection to the fidelity susceptibility. We then present our main theoretical results: we prove the equivalence of the classical and quantum fidelity susceptibilities under mild conditions and present additional formal results that generalize the known set of conditions under which the quantum fidelity susceptibility (for both pure and mixed states) is well-defined. In \cref{sec:classifim}, we start by defining the ML task of FIM-estimation (\cref{sec:task}) and describe its applicability to QPTs (\cref{ss:motivation}). Then, in \cref{ss:classifim}, we outline the ClassiFIM method.
\Cref{sec:validation-metrics-and-datasets} 
presents datasets we have generated to test and validate algorithms that attempt to solve the task of estimating the FIM. 
\Cref{sec:numerical-experiments} presents our main applied results: we use ClassiFIM for the ML task of detecting both classical and quantum phase transitions. To do so we use the datasets defined in the previous section.
We also demonstrate that ClassiFIM is competitive with earlier approaches~\cite{vanNieuwenburg2016LearningPT,Huang:22} in terms of the accuracy it achieves.
We provide a summary and outlook in \cref{as:limits-and-future}. The Appendices provide many supporting details, including more background and a literature review, theorem proofs, numerical implementation details, and additional simulation data.

Finally, we note that this work and the companion paper \cite{ClassiFIM-ML} are complementary, written for different audiences, and can be read independently. This work is aimed primarily at physicists, while the companion paper is aimed primarily at computer scientists.

\section{Fisher Information Metric and Fidelity Susceptibility}
\label{sec:FIM-FS}

We start with a brief review of the FIM and then define the classical and quantum fidelity susceptibilities. The classical fidelity susceptibility relates to probability distributions obtained when access to the quantum state is restricted, or the state is simulated by quantum Monte Carlo techniques, while the quantum one can be obtained via complete access to the quantum state. We present our first result: conditions under which the classical and quantum fidelity susceptibilities are equivalent (\cref{th:comput.basis} and \cref{th:avggc}). We then present a number of results relaxing the differentiability conditions under which the fidelity susceptibility (classical or quantum) is well-defined.

\subsection{FIM}
\label{subsec:fim}

The FIM is a formal way to measure the rate of change of a probability distribution. We briefly review some concepts needed for the rest of our discussion; see \cref{as:background} for additional pertinent facts and notation, and \cref{as:fim} for a discussion of the conditions needed for the FIM to be well defined.

Consider a manifold $\mcM$ and a family of probability distributions $P_{\vlambda}(x)$ depending on a continuous parameter $\vlambda \in \mcM$. Throughout this work, for simplicity, we
refer to $(\mcM, P)$ as a \emph{statistical manifold},
although the formal definition of a statistical manifold requires additional
regularity conditions~\cite{Schervish:1995aa}.
In local coordinates, one can write $\vlambda = (\lambda_1,\cdots,\lambda_N)$.
The score function for the $\mu$-th parameter $\lambda_{\mu}$
is the derivative of the log-likelihood and is given by
\begin{equation}
\label{eq:sff}
s_{\mu}(x;\vlambda) = \partial_{\lambda_{\mu}} \log P_{\vlambda}(x) \ .
\end{equation}

The \emph{Fisher Information Metric} $g$ is the variance of the score function or, equivalently (since by normalization $\mathbb{E}[s_{\mu}(X;\vlambda)]=\int dx\, P_{\vlambda}(x)s_{\mu}(x;\vlambda)=0$), the expectation value of the product of pairs of score functions. I.e., the components of the FIM are:
\begin{equation}
\label{eq:fim}
g_{\mu\nu}(\vlambda) = \mathbb{E}[s_{\mu}(X;\vlambda)s_{\nu}(X;\vlambda)] .
\end{equation}

Chencov's theorem implies that $g(\vlambda)$ is (up to rescaling) the unique Riemannian metric on the space of probability distributions satisfying a monotonicity condition and invariant under taking sufficient statistics \cite{chencov1972statistical,amari2016information}.
That is, the FIM is the natural way to measure the speed with which the probability distribution changes with respect to the parameters $\vlambda$. In fact, the FIM (up to a scaling factor) can be written as the first non-trivial term in the Taylor series expansion in the parameter differences $\vdlambda$ of multiple divergence measures,
including the Kullback-Leibler divergence and the Jensen-Shannon divergence.

The concept of \emph{fidelity susceptibility} plays a special role in this work:
quantum fidelity susceptibility is known as a universal (i.e., order-parameter-independent)
method for detecting phase transitions
\cite{venuti2007quantum,zanardi2007mixed,zanardi2007information,zanardi2007bures,GU:2010vv}.
The classical fidelity susceptibility, denoted $\chi_{F_c}$, is related to the FIM by a factor of 4 \cite{GU:2010vv}:
\begin{equation}
    g = 4 \chi_{F_c}.
    \label{eq:g-chi}
\end{equation}
The same factor of $4$ relates the quantum FIM and the quantum fidelity susceptibility, $\chi_F$. We use the FIM and fidelity susceptibility interchangeably in the remainder of this work.

\subsection{Formal equivalence between the classical and quantum fidelity susceptibilities}
\label{s:fidelity.susceptibility}

The FIM-estimation task we describe in \cref{sec:task} below estimates the classical FIM. Yet, as we will show, it successfully captures properties of the quantum fidelity susceptibility. This seemingly counterintuitive result has a firm theoretical foundation captured by the two theorems we present in this section. Before presenting the theorems, we establish the necessary terminology.

The quantum systems we discuss in this work occupy a complex Hilbert space
$\mathcal{H}$ with a finite orthonormal basis $\{\ket{x}\}_{x\in\mathcal{S}}$
indexed by bitstrings $x \in \mathcal{S} = \{0,1\}^n$, i.e.,
$\mathcal{H} \simeq \mathbb{C}^d$, where $d=\abs{\mathcal{S}} = 2^n$. When measuring the state $\ket{\psi}\in\mathcal{H}$ we obtain a bitstring $x$ with probability
$p_x = \abs{\bk{x}{\psi}}^2$.
In this case, we say that the probability distribution $p$ corresponds to the state $\ket{\psi}$.

Consider a state $\ket{\psi(\vlambda)}$
depending on a parameter $\vlambda \in \mcM \subset \mathbb{R}^m$, where $\mcM$ is the parameter manifold. The fidelity between two pure states $\ket{\phi}$ and $\ket{\psi}$ is $F(\ket{\phi}, \!\ket{\psi}) = \abs{\bk{\phi}{\psi}}$, and we denote the quantum fidelity as $F(\vlambda, \vlambda') = F(\ket{\psi(\vlambda)}, \ket{\psi(\vlambda')})$.
Let $p(\vlambda)$ and $q(\vlambda)$ denote the probability distributions obtained from measuring $\ket{\psi(\vlambda)}$ and $\ket{\phi(\vlambda)}$ in the same orthonormal basis. The fidelity between two discrete probability distributions $p$ and $q$ is defined as
\begin{equation}
  \label{eq:Fc.def0}
  F_c(p, q) = \sum_x \sqrt{p_x q_x},
\end{equation}
and we likewise denote the classical fidelity as
$F_c(\vlambda, \vlambda') = F_c(p(\vlambda), p(\vlambda'))$.
The quantum fidelity susceptibility $\chi_{F}(\vlambda)$ is the
coefficient in the first nontrivial term in the Taylor expansion of
$F(\vlambda, \vlambda')$:
\begin{align}
\label{eq:def.chi.F}
  & F\left(\ket{\psi(\vlambda)}, \ket{\psi(\vlambda')}\right) =\\
  & \ \ 1 - \frac{1}{2} \sum_{\mu,\nu} (\vlambda' - \vlambda)_{\mu} (\vlambda' - \vlambda)_{\nu} \left(\chi_{F}(\vlambda)\right)_{\mu\nu} + o(\abs{\vlambda' - \vlambda}^2). \notag
\end{align}
The classical fidelity susceptibility $\chi_{F_c}(\vlambda)$
is defined identically by expanding $F_c(\vlambda, \vlambda')$.
By performing such an expansion of \cref{eq:Fc.def0}
(or, more generally, of the Bhattacharyya distance; see \cref{as:fim})
and comparing the result with \cref{eq:def.chi.F}, one can check
that \cref{eq:g-chi} holds under certain regularity conditions.
As we discuss in \cref{as:fim}, when these regularity conditions do not hold, we define (for the purposes of this paper) the FIM $g$ using \cref{eq:g-chi}
instead of relying on the score function definition above.
We define the quantum FIM as $4\chi_F$. Both $g$ and $\chi_F$ can be interpreted as quadratic forms
on the tangent space of $\mcM$ at $\vlambda$, i.e., as metric tensors:
\begin{equation}
\chi_{F}(\vlambda; \vecv) = \sum_{\mu,\nu} \left(\chi_{F}(\vlambda)\right)_{\mu\nu} v_{\mu} v_{\nu}.
\end{equation}
Again, an identical expression holds for $\chi_{F_c}(\vlambda)$.
The classical fidelity upper bounds its quantum counterpart, while the classical fidelity susceptibility lower bounds its quantum counterpart~\cite{gu2010fidelity}: 
\bes
\begin{align}
&0\leq F(\vlambda, \vlambda') \leq F_c(\vlambda, \vlambda') \leq 1\\
&0 \leq \chi_{F_c}(\vlambda) \leq \chi_{F}(\vlambda) .
\end{align}
\ees

Since a quantum phase transition is often characterized by a drastic change in a (ground) state $\ket{\psi(\vlambda)}$ as $\vlambda$ is varied, the locations of the peaks of
$\chi_{F}(\vlambda)$ are of interest as indicators of the locations of
phase transitions. However, if --- as we assume in the QPT datasets throughout this work --- one only has access to samples from
$p(\vlambda)$ and not to other characteristics of $\ket{\psi(\vlambda)}$,
then one could only hope to estimate $\chi_{F_c}(\vlambda)$: indeed, for any
such distribution one could construct
$\ket{\phi(\vlambda)} = \sum_{x\in\mathcal{S}} \sqrt{p_x(\vlambda)} \ket{x}$
for which $\chi_{F}(\vlambda) = \chi_{F_c}(\vlambda)$, thus
the information about whether
$\chi_{F}(\vlambda) > \chi_{F_c}(\vlambda)$
for the original $\ket{\psi(\vlambda)}$ is lost.

Let $H(\vlambda)$ be a family of Hamiltonians and let $\ket{\psi_0(\vlambda)}$ denote its ground states.
We now state two theorems ensuring that in many important
cases, $\chi_{F_c}$ exhibits peaks at or close to the peaks of $\chi_{F}$.

\begin{mytheorem}
  \label{th:comput.basis}
  Let $H(\vlambda)$ be a continuously differentiable Hamiltonian family
  with a non-degenerate ground state and real matrix elements in the
  $\{\ket{x}\}_{x\in\mathcal{S}}$ basis, where $x$ is a bitstring in $\mathcal{S} = \{0,1\}^n$.  Then
  $\chi_{F}(\vlambda) = \chi_{F_c}(\vlambda)$ for its ground state.
\end{mytheorem}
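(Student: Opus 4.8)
The plan is to reduce the equality of susceptibilities to a pointwise equality of the two fidelity functions on a neighborhood of the base point $\vlambda$, so that their Taylor expansions, and hence the second-order coefficients $\chi_{F}$ and $\chi_{F_c}$, coincide term by term. The first step is to exploit the reality hypothesis: since $H(\vlambda)$ has real matrix elements in the $\{\ket{x}\}_{x\in\mathcal{S}}$ basis and a non-degenerate ground state, the ground-state spectral projector $P_0(\vlambda) = \ketbra{\psi_0(\vlambda)}$ is a real symmetric matrix that inherits the continuous differentiability of $H$ (standard non-degenerate perturbation theory, e.g.\ via the Riesz projector with a locally fixed contour). I would then fix a real reference vector $\ket{\chi}$ and define a local representative $\ket{\psi_0(\vlambda)} = P_0(\vlambda)\ket{\chi}/\norm{P_0(\vlambda)\ket{\chi}}$, which is real-valued and continuous (indeed $C^1$) in any neighborhood where the denominator does not vanish. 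Thus the amplitudes $\psi_x(\vlambda) \equiv \bk{x}{\psi_0(\vlambda)}$ are real and vary continuously with $\vlambda$.

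With real amplitudes in hand, write the two fidelities as $F(\vlambda,\vlambda') = \abs{\sum_x \psi_x(\vlambda)\psi_x(\vlambda')}$ and $F_c(\vlambda,\vlambda') = \sum_x \abs{\psi_x(\vlambda)}\abs{\psi_x(\vlambda')}$. The second step is to show these agree for $\vlambda'$ near $\vlambda$. At $\vlambda'=\vlambda$ the overlap $\sum_x \psi_x(\vlambda)^2 = 1 > 0$, so by continuity the overlap stays positive on a neighborhood and the absolute value in $F$ can be dropped there. For the classical side I would argue termwise: for every $x$ with $\psi_x(\vlambda)\neq 0$, continuity forces $\psi_x(\vlambda')$ to keep the sign of $\psi_x(\vlambda)$ for $\vlambda'$ close enough, so that $\abs{\psi_x(\vlambda)}\abs{\psi_x(\vlambda')} = \psi_x(\vlambda)\psi_x(\vlambda')$, while for every $x$ with $\psi_x(\vlambda)=0$ both products vanish. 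Because $\mathcal{S}$ is finite, a single neighborhood works for all $x$ at once, giving $F_c(\vlambda,\vlambda') = \sum_x \psi_x(\vlambda)\psi_x(\vlambda') = F(\vlambda,\vlambda')$ on that neighborhood.

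Finally, since $F(\vlambda,\cdot)$ and $F_c(\vlambda,\cdot)$ coincide as functions on a neighborhood of $\vlambda$, their Taylor expansions agree to every order that exists, and comparing with the defining expansion in \cref{eq:def.chi.F} yields $\chi_{F}(\vlambda) = \chi_{F_c}(\vlambda)$. I expect the main obstacle to be the first step rather than the second: one must ensure that a \emph{continuous, real} branch of the ground state genuinely exists under only $C^1$ dependence together with non-degeneracy, rather than tacitly assuming real-analyticity. Phrasing everything through the gauge-invariant projector $P_0(\vlambda)$ and a fixed reference vector is what makes this rigorous and simultaneously disposes of the global sign ambiguity of the eigenvector. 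The apparent danger of an individual amplitude $\psi_x$ crossing zero and flipping sign is harmless precisely because the claim is local and $\mathcal{S}$ is finite; note also that only continuity of the amplitudes (not $C^1$ regularity) is used for the local equality itself, the smoothness being relevant only to the separate question of whether the susceptibilities exist.
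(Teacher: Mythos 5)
Your proposal is correct, and while it reaches the same two intermediate goals as the paper's proof, the second half follows a genuinely different route. For the first step, the paper's \cref{lm:thm1.a} obtains a real, continuously differentiable ground-state branch by a variational decomposition $\ket{\psi}=\cos\theta\ket{\psi_0}+i\sin\theta\ket{\psi_1}$ followed by the implicit function theorem applied to the eigenvalue equation; your gauge-invariant construction $P_0(\vlambda)\ket{\chi}/\norm{P_0(\vlambda)\ket{\chi}}$ via the Riesz projector accomplishes the same thing more cleanly and handles the sign/phase ambiguity automatically. The real divergence is in the second step: the paper's \cref{lm:thm1.b} computes $(\chi_F)_{00}$ and $(\chi_{F_c})_{00}$ separately from the explicit derivative formulas of \cref{lm:explicit.chi-1} and \cref{lm:explicit.chi-3} (the latter requiring the careful treatment of the zero-probability set $\mcS_0$, where a second-order term appears) and checks that both equal $\norm{\ket{\partial_0\psi}}^2$. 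You instead prove the stronger, purely pointwise statement that $F(\vlambda,\vlambda')=F_c(\vlambda,\vlambda')$ for all $\vlambda'$ in a neighborhood of $\vlambda$, using only continuity, finiteness of $\mathcal{S}$, and the observation that components with $\psi_x(\vlambda)=0$ contribute zero to both sums identically; the equality of susceptibilities then follows by uniqueness of the coefficient in the expansion \cref{eq:def.chi.F}. Your route is more elementary, avoids the machinery of the explicit susceptibility formulas entirely, and yields a local identity of fidelities rather than merely of their second-order coefficients; what it gives up is the explicit expression $\chi_F=\norm{\ket{\partial_0\psi}}^2$ that the paper's computation produces as a byproduct and reuses elsewhere (e.g., in the proof of \cref{th:avggc}). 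One point worth stating explicitly in a final write-up is that the existence of the expansions themselves is supplied by the $C^1$ branch from your first step together with \cref{lm:explicit.chi-1}, so that the phrase ``to every order that exists'' can be replaced by an actual existence claim; you acknowledge this, so it is a presentational rather than a mathematical gap.
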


\begin{proof}[Proof sketch]
Locally (near each $\vlambda_0 \in \mcM$) we can pick
  $\ket{\psi_0(\vlambda)}$ to have real
amplitudes. By evaluating
the definitions of $\chi_F$ and $\chi_{F_c}$ we conclude that they are equal.
\end{proof}

\begin{mytheorem}
  \label{th:avggc}
  Suppose $\ket{\psi(\vlambda)}$ is continuously differentiable, $\chi_{F}$ is its
  fidelity susceptibility, $p'(\vlambda)$ is the associated probability distribution, where the measurement
  is performed with respect to a Haar-random orthonormal basis $\{\ket{x'}\}_{x'\in \mathcal{S}'}$,
and $\chi_{F_c}$ is the classical fidelity susceptibility of $p'$. Then
  \begin{equation}
    \label{eq:Emeas}
    \mathbb{E}_{\mathcal{S}'} \left(\chi_{F_c}(\vlambda)\right) = \chi_{F}(\vlambda)/2.
  \end{equation}
\end{mytheorem}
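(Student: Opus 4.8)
The plan is to prove the equivalent quadratic-form statement $\mathbb{E}_{\mathcal{S}'}[\chi_{F_c}(\vlambda;\vecv)] = \tfrac12\chi_{F}(\vlambda;\vecv)$ for every tangent vector $\vecv$, which collapses the problem to a single effective parameter. Writing $\ket{\partial\psi} = \sum_\mu v_\mu\ket{\partial_\mu\psi}$ for the directional derivative, the Fubini--Study form gives $\chi_{F}(\vlambda;\vecv) = \norm{\partial\psi}^2 - \abs{\bk{\psi}{\partial\psi}}^2$, while combining \cref{eq:g-chi} with the score-function definition gives $\chi_{F_c}(\vlambda;\vecv) = \tfrac14\sum_{x'}(\partial_v p'_{x'})^2/p'_{x'}$, where $\partial_v p'_{x'} = 2\real(\bk{x'}{\partial\psi}\bk{\psi}{x'})$. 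A Haar-random orthonormal basis is $\ket{x'} = U\ket{x}$ for Haar-random $U$, so each $\ket{x'}$ is marginally a single Haar-random unit vector $\ket{v}$, and all $d = \abs{\mathcal{S}}$ of them share this marginal. Since $\chi_{F_c}(\vlambda;\vecv)$ is a sum over $x'$ of the same function of $\ket{x'}$, linearity of expectation turns it into $d$ times the expectation of one term, so it suffices to evaluate $\mathbb{E}_{\ket{v}}\big[\,[\real(\bk{v}{\partial\psi}\bk{\psi}{v})]^2/\abs{\bk{v}{\psi}}^2\,\big]$.

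First I would use the gauge and frame freedom. Because both the fidelity and the probabilities $p'_{x'} = \abs{\bk{x'}{\psi}}^2$ are invariant under $\ket{\psi}\to e^{i\alpha(\vlambda)}\ket{\psi}$, I may fix locally the gauge $\bk{\psi}{\partial\psi}=0$, so that $\chi_{F}(\vlambda;\vecv) = \norm{\partial\psi}^2$, and rotate the fixed frame so that $\ket{\psi} = \ket{e_1}$. Expanding $\ket{v} = \sum_k c_k\ket{e_k}$ then gives $\bk{\psi}{v} = c_1$, $\abs{\bk{v}{\psi}}^2 = \abs{c_1}^2$, and $\bk{v}{\partial\psi} = \sum_{k\geq2}\bar c_k w_k =: B$ with $w_k = \bk{e_k}{\partial\psi}$, the $k=1$ term dropping out by the gauge choice ($w_1 = \bk{\psi}{\partial\psi}=0$). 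The single-vector integrand is therefore $[\real(c_1 B)]^2/\abs{c_1}^2$, and $\chi_{F}(\vlambda;\vecv) = \sum_{k\geq2}\abs{w_k}^2$.

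The crux is the phase average. Using the polar decomposition of the Haar measure -- the phases of the $c_k$ are i.i.d.\ uniform and independent of the magnitudes, and $B$ depends only on $c_2,\dots,c_d$ -- I condition on everything except the phase $\theta_1$ of $c_1$ and average it out. Writing $\real(c_1 B) = \abs{c_1}[\cos\theta_1\,\real B - \sin\theta_1\,\mathrm{Im}\,B]$, the $\abs{c_1}^2$ cancels the denominator, and $\mathbb{E}_{\theta_1}[\cos^2]=\mathbb{E}_{\theta_1}[\sin^2]=\tfrac12$, $\mathbb{E}_{\theta_1}[\cos\sin]=0$ leave $\tfrac12\abs{B}^2$. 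Finally the second-moment identity $\mathbb{E}[c_k\bar c_l] = \delta_{kl}/d$ for a Haar-random unit vector yields $\mathbb{E}[\abs{B}^2] = \tfrac1d\sum_{k\geq2}\abs{w_k}^2 = \tfrac1d\chi_{F}(\vlambda;\vecv)$, so the per-vector expectation is $\tfrac1{2d}\chi_{F}(\vlambda;\vecv)$, and multiplying by $d$ gives $\tfrac12\chi_{F}(\vlambda;\vecv)$.

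I expect the main obstacle to be integrability: for a single Haar vector $\mathbb{E}[1/\abs{c_1}^2]$ diverges (logarithmically), so the Fisher-information term is not manifestly integrable over the ensemble. The phase average is exactly what resolves this, and I would make it rigorous via Tonelli's theorem: the integrand is non-negative, so one may integrate over $\theta_1$ first, after which it equals $\tfrac12\abs{B}^2 \leq \tfrac12\norm{\partial\psi}^2$ (Cauchy--Schwarz), a bounded quantity with finite expectation. Because the sum over $x'$ has only $d$ terms, linearity then applies without further subtlety. A secondary point to verify is regularity: for almost every Haar basis the state has full support, so $\chi_{F_c}$ exists pointwise in $\vlambda$ through the score-function formula, and the continuous differentiability of $\ket{\psi(\vlambda)}$ guarantees the validity of the expansion \cref{eq:def.chi.F} that defines $\chi_{F_c}$ prior to averaging.
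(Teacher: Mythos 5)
Your proposal is correct and follows essentially the same route as the paper's proof: reduce to a single direction, use linearity of expectation to replace the sum over basis vectors by $\dim(\mathcal{H})$ times a single Haar-random unit-vector average, split into components parallel and perpendicular to $\ket{\psi}$, average out the phase of the parallel coefficient, and finish with the second-moment identity $\mathbb{E}[\bar c_k c_l]=\delta_{kl}/\dim(\mathcal{H})$. The only differences are cosmetic: you fix the gauge $\braket{\psi|\partial\psi}=0$ where the paper instead keeps this overlap and notes it is purely imaginary so it drops out of the real part, and your Tonelli argument for integrability makes explicit a point the paper leaves implicit.
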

Here $\mathbb{E}_{\mathcal{S}'}$ denotes the expectation value with respect to $\mathcal{S}'$. \cref{th:avggc} is proven by deriving explicit formulas for the two sides of \cref{eq:Emeas} in terms of $\ket{\psi(\vlambda)}$ and its derivatives.

Complete theorem statements, together with their proofs, are given in \cref{ass:two.theorems}.

The significance of these theorems is that it is both typical (\cref{th:avggc}) and relevant (\cref{th:comput.basis}) to use the classical fidelity susceptibility of computational basis measurements as a proxy for the quantum fidelity susceptibility of the corresponding states.
In many commonly studied quantum models the Hamiltonians can be represented with real matrix elements, including a variety of simple spin Hamiltonians. An important exception arises when particles couple to a magnetic field, which introduces imaginary matrix elements into an otherwise real-valued Hamiltonian. 

We note for later reference that the conditions of \cref{th:comput.basis} are mostly satisfied in the generation of all the QPT datasets in this work (see \cref{subsec:data-sources}): the Hamiltonians we use have only real matrix elements and are differentiable with respect to parameters.
We used symmetries and penalty terms in the Hamiltonian to avoid degeneracy whenever feasible.
However, some degeneracy might remain: the Theorem's conditions are satisfied for the FIL24, Hubbard12, Kitaev20, and XXZ300 datasets (defined in \cref{subsec:data-sources}), except for the locations of level crossings and small regions where the ground state is effectively degenerate.
However, this has only a minimal impact on the performance of ClassiFIM, as we illustrate in \cref{ss:level.crossing} and the phase diagrams in \cref{sec:numerical-experiments}.

In the remainder of this section, we provide additional technical results on the quantum fidelity susceptibility.
The reader who is interested in the results for various phase diagrams may safely skip ahead to \cref{sec:classifim}.

\subsection{Generalizations of the quantum fidelity susceptibility}
\label{ass:explicit.chi}

It is clear from \cref{eq:def.chi.F} that the fidelity susceptibility
can be expressed as the second derivative of the fidelity between pairs of pure states.
We now show that the second derivative of $\ket{\psi(\vlambda)}$ does not need to exist
in order for the fidelity susceptibility to be well defined:

\begin{mytheorem}
  \label{lm:explicit.chi-1}
  Assume that $\ket{\psi(\vlambda')}$ is a state defined in a
  neighborhood of
  $\vlambda' = \vlambda \in \mcM$ and differentiable at
  $\vlambda' = \vlambda$.  Then the fidelity susceptibility
  is well-defined at $\vlambda$ and is given by
  \begin{align}
    \label{eq:explicit.chi.pure}
    (\chi_F(\vlambda))_{\mu\nu} &= \real\bigl(\braket{\partial_\mu \psi(\vlambda)
      | \partial_\nu \psi(\vlambda)}\bigr)\notag \\
    &\quad - \braket{\partial_\mu \psi(\vlambda) | \psi(\vlambda)}
      \braket{\psi(\vlambda) | \partial_\nu \psi(\vlambda)}.
  \end{align}
\end{mytheorem}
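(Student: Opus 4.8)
\textit{Proof proposal.} The plan is to expand the squared fidelity $F^2 = \abs{\braket{\psi(\vlambda)|\psi(\vlambda')}}^2$ to second order in $\vlambda' - \vlambda$ and then take a square root at the very end, rather than trying to expand the overlap or $F$ itself. Working with $F^2$ is essential here because the hypothesis supplies only a \emph{first}-order Taylor expansion of the state: differentiability of $\ket{\psi(\vlambda')}$ at $\vlambda$ gives $\ket{\psi(\vlambda')} = \ket{\psi(\vlambda)} + \sum_\mu \delta_\mu \ket{\partial_\mu \psi(\vlambda)} + \ket{r}$ with $\norm{r} = o(\abs{\vlambda'-\vlambda})$ and $\delta_\mu := (\vlambda'-\vlambda)_\mu$, and a second derivative of $\ket{\psi}$ is never available. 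First I would set $\ket{\delta\psi} := \ket{\psi(\vlambda')} - \ket{\psi(\vlambda)}$ and $A := \braket{\psi(\vlambda)|\delta\psi}$, so that $F^2 = (1+A)(1+\bar A) = 1 + 2\real A + \abs{A}^2$.

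The crux is to upgrade the \emph{a priori} first-order quantities to genuinely second-order ones using exact normalization, which holds throughout the neighborhood. Expanding $1 = \braket{\psi(\vlambda')|\psi(\vlambda')} = 1 + 2\real A + \norm{\delta\psi}^2$ yields the identity $2\real A = -\norm{\delta\psi}^2$, so $\real A$ is automatically of order $\abs{\vlambda'-\vlambda}^2$ and $(\real A)^2 = O(\abs{\vlambda'-\vlambda}^4)$ is negligible. Differentiating $\braket{\psi|\psi}=1$ at $\vlambda$ shows each $c_\mu := \braket{\psi(\vlambda)|\partial_\mu \psi(\vlambda)}$ is pure imaginary; hence $\operatorname{Im} A = \sum_\mu \delta_\mu\,\operatorname{Im} c_\mu + o(\abs{\vlambda'-\vlambda})$ and $(\operatorname{Im} A)^2 = \big(\sum_\mu \delta_\mu\,\operatorname{Im} c_\mu\big)^2 + o(\abs{\vlambda'-\vlambda}^2)$. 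The relation $\braket{\partial_\mu \psi|\psi}\braket{\psi|\partial_\nu \psi} = \overline{c_\mu}\,c_\nu = (\operatorname{Im} c_\mu)(\operatorname{Im} c_\nu)$ then recasts this term into precisely the projector piece of \cref{eq:explicit.chi.pure}.

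Next I would expand $\norm{\delta\psi}^2 = \sum_{\mu\nu}\delta_\mu\delta_\nu\,\real\braket{\partial_\mu \psi|\partial_\nu \psi} + o(\abs{\vlambda'-\vlambda}^2)$, bounding every remainder cross term by Cauchy--Schwarz via $\abs{\braket{\partial_\mu \psi|r}} \le \norm{\partial_\mu \psi}\,\norm{r} = o(\abs{\vlambda'-\vlambda})$ and $\norm{r}^2 = o(\abs{\vlambda'-\vlambda}^2)$. Collecting the pieces gives $F^2 = 1 - \sum_{\mu\nu}\delta_\mu\delta_\nu\big[\real\braket{\partial_\mu \psi|\partial_\nu \psi} - \braket{\partial_\mu \psi|\psi}\braket{\psi|\partial_\nu \psi}\big] + o(\abs{\vlambda'-\vlambda}^2)$. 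Taking $F=\sqrt{F^2}$ and using $\sqrt{1-x}=1-\tfrac{x}{2}+O(x^2)$ with $x=O(\abs{\vlambda'-\vlambda}^2)$ produces the factor $-\tfrac12$ of \cref{eq:def.chi.F}, and matching coefficients reads off $(\chi_F)_{\mu\nu}$ exactly as claimed; the resulting bracket is manifestly real and symmetric, consistent with its role as a metric tensor.

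The main obstacle, and indeed the whole point of the theorem, is that first-order differentiability gives only $o(\abs{\vlambda'-\vlambda})$ control on the remainder, which at face value is too weak to determine a quadratic coefficient. The resolution lies entirely in the normalization constraint, which converts the otherwise merely first-order real part of the overlap into $-\tfrac12\norm{\delta\psi}^2$, a bona fide second-order quantity, while the surviving imaginary contribution is quadratic in the first derivatives alone. I would therefore be careful to split $A$ into real and imaginary parts \emph{before} estimating orders, since it is only after this separation that the $o(\abs{\vlambda'-\vlambda})$ remainders can be shown to enter at $o(\abs{\vlambda'-\vlambda}^2)$ and be safely discarded.
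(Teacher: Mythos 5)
Your proposal is correct and follows essentially the same route as the paper's proof: the same first-order decomposition $\ket{\delta\psi}=\sum_\mu\delta_\mu\ket{\partial_\mu\psi}+\ket{r}$, the same use of exact normalization to convert the real part of the overlap into the genuinely second-order quantity $-\tfrac12\norm{\delta\psi}^2$, and the same observation that $\braket{\psi|\partial_\mu\psi}$ is purely imaginary so that only $(\operatorname{Im}A)^2$ survives in the modulus. The only (cosmetic) difference is that you expand $F^2$ and take a square root at the end, whereas the paper expands $F=\abs{1+x+iy}$ directly via $1+x+\tfrac{y^2}{2}+O(x^2+y^4)$; the two computations are identical in content.
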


\cref{lm:explicit.chi-1} is essentially Ref.~\cite[Eq.~(3)]{zanardi2007information} along with a specification of the exact conditions that $\ket{\psi(\lambda)}$ must satisfy.
As mentioned in Ref.~\cite{zanardi2007information}, the proof is done by
Taylor expansion of the pure state fidelity $F(\ket{\phi}, \!\ket{\psi}) = \abs{\bk{\phi}{\psi}}$
and usage of state normalization.
Since we do not require the second derivative to exist, we have to perform the
expansion more carefully than Ref.~\cite{zanardi2007information}. Our proof is given in \cref{app:proof-lm:explicit.chi-1}.

The classical fidelity $F_c$ generalizes for pairs of mixed quantum states $\rho$ and $\sigma$ to the Uhlmann fidelity $F(\rho, \sigma) =\|\sqrt{\rho}\sqrt{\sigma}\|_1$ (see \cref{as:background}). 
The statement that the fidelity susceptibility
can be expressed as a second derivative also holds when considering two mixed states $\rho(\vlambda)$ and $\rho(\vlambda')$, for which the fidelity susceptibility is given by 
\begin{equation}
\label{eq:def.chi.dFmixed}
  \left(\chi_{F}(\vlambda)\right)_{\mu\nu} =
    -\frac{\partial^2 F\left(\rho(\vlambda), \rho(\vlambda')\right)}{%
    \partial \lambda'_{\mu} \partial \lambda'_{\nu}} \Bigg|_{\vlambda' = \vlambda}.
\end{equation}
For mixed states $\chi_{F}$ is also known as the Bures metric~\cite{bures-extension-1969}.

Despite the second derivative appearing in \cref{eq:def.chi.dFmixed}, just as in the pure state case, we can again show that it is often sufficient to assume that just the first derivative of the state or probability distribution is defined in order for the fidelity susceptibility to be well-defined. This further generalizes previous work, which made stricter assumptions
regarding rank and differentiability than the ones we make below. Namely, in \cref{lm:explicit.chi-2} below,
we do not make any assumptions concerning the rank of $\rho(\vlambda')$ and only assume the existence of the first derivative of its components
and the second derivative of a trace of a restriction of $\rho(\lambda')$ onto $\ker(\rho(\vlambda))$.

We first introduce the notation used in \cref{lm:explicit.chi-2}. If $\rho\colon \mcH \to \mcH$
is a positive-semidefinite operator on a finite-dimensional Hilbert space $\mcH$,
then $\mcH$ can be decomposed
into the direct sum of two orthogonal components
$\mcH = \mcH_{0} \oplus \mcH_{+}$ corresponding to the null and positive eigenspaces of $\rho$
respectively: $\mcH_{0} = \ker(\rho)$ and $\mcH_{+} = \rho \mcH$.
We then can define two surjective maps
$P_{0}\colon \mcH \twoheadrightarrow \mcH_0$
and $P_{+} \colon \mcH \twoheadrightarrow \mcH_{+}$
projecting a vector $\ket{\psi}\in\mcH$ onto the two components.
Their adjoints are then injections $P_0^\dagger: \mcH_0 \hookrightarrow \mcH$
and $\mcH_{+} \hookrightarrow \mcH$. Note that with this notation
$P_{+} \rho P_{+}^\dagger$ is an invertible Hermitian operator on $\mcH_{+}$
(i.e., from $\mcH_{+}$ to $\mcH_{+}$)
even if $\rho$ is not invertible [i.e., even if $\dim(\mcH_0) > 0$]. We also use the notation $\partial_\mu = \frac{\partial}{\partial \lambda'_\mu}$
and $\partial_\mu p_z(\vlambda) = \left.\frac{\partial p_z(\vlambda')}{\partial \lambda'_\mu}\right|_{\vlambda' = \vlambda}$.

\begin{mytheorem}
  \label{lm:explicit.chi-2}
      Let $\rho(\vlambda')$ be a density matrix defined in the
      neighborhood of $\vlambda' = \vlambda \in \mcM$.
      Let $P_0: \mcH \to \ker\rho(\vlambda)$,
      $P_{+}: \mcH \to \rho(\vlambda)(\mathcal{H})$ be the maps
      projecting $\mcH$
      to the null and positive eigenspaces of $\rho(\vlambda)$ respectively, and let $\rho_{+}(\vlambda') = P_{+}\rho(\vlambda') P_{+}^{\dagger}$.
      Assume $\rho(\vlambda')$ is differentiable at $\vlambda' = \vlambda$,
      and $\Tr(P_0 \rho(\vlambda') P_0^{\dagger})$
      is twice differentiable at $\vlambda' = \vlambda$.
       Then, in the basis where
      $\rho_{+}(\vlambda) = \diag(\xi_1,\dots,\xi_{n_{+}})$ for some $n_{+} \leq \dim\mcH$,
      we have
      \begin{align}
        \label{eq:explicit.chi.mixed}
        (\chi_{F}(\vlambda))_{\mu\nu}
        &= \sum_{j,k=1}^{n_{+}} \frac{
            \real\left(
              (\partial_\mu\rho_{+}(\vlambda))_{jk}
              (\partial_\nu\rho_{+}(\vlambda))_{kj}
            \right)
          }{2(\xi_j + \xi_k)}\notag \\
        &\quad + \frac12 \partial_\mu \partial_\nu \Tr(P_0 \rho(\vlambda) P_0^{\dagger}).
      \end{align}
 \end{mytheorem}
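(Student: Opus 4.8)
The plan is to compute the Uhlmann fidelity $F(\rho(\vlambda),\rho(\vlambda'))=\Tr\sqrt{\sqrt{\rho(\vlambda)}\,\rho(\vlambda')\sqrt{\rho(\vlambda)}}$ directly, Taylor expand to second order in $\vlambda'-\vlambda$, and read off $\chi_F$ via \eqref{eq:def.chi.dFmixed}. First I would pass to the block decomposition $\mcH=\mcH_{0}\oplus\mcH_{+}$ and abbreviate $\sigma:=\rho(\vlambda')$, $D:=\rho_{+}(\vlambda')=P_{+}\sigma P_{+}^{\dagger}$, and $A:=P_0\sigma P_0^{\dagger}$. The key structural observation is that $\sqrt{\rho(\vlambda)}$ annihilates $\mcH_{0}$, so $M:=\sqrt{\rho(\vlambda)}\,\sigma\sqrt{\rho(\vlambda)}$ is block diagonal with only the $\mcH_{+}$ block nonzero, whence
\[
F(\rho(\vlambda),\sigma)=\Tr_{\mcH_{+}}\sqrt{\sqrt{\rho_{+}}\,D\sqrt{\rho_{+}}},
\]
which depends on $\sigma$ \emph{only} through the positive block $D$ --- the off-diagonal block $P_0\sigma P_{+}^{\dagger}$ drops out entirely. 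Since $\rho_{+}>0$ on $\mcH_{+}$ and $D(\vlambda)=\rho_{+}$, we have $D(\vlambda')>0$ near $\vlambda$, so this has become an ordinary full-rank perturbation problem on $\mcH_{+}$.

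Next I would record two elementary facts feeding the two terms of the claimed formula. (i) Because $\sigma\ge 0$, its compression $A=P_0\sigma P_0^{\dagger}\ge 0$ and $A(\vlambda)=0$; hence $\Tr A\ge 0$ attains a minimum at $\vlambda$, so its gradient vanishes there and, by the twice-differentiability hypothesis,
\[
\Tr A=\tfrac12\sum_{\mu\nu}(\vlambda'-\vlambda)_{\mu}(\vlambda'-\vlambda)_{\nu}\,\partial_\mu\partial_\nu\Tr\!\big(P_0\rho(\vlambda)P_0^{\dagger}\big)+o(\abs{\vlambda'-\vlambda}^2).
\]
(ii) Normalization $\Tr\sigma=1$ with the block structure gives $\Tr D=1-\Tr A$, i.e. $\Tr(D-\rho_{+})=-\Tr A$.

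Then I would expand $\Phi(X):=\Tr_{\mcH_{+}}\sqrt{\sqrt{\rho_{+}}(\rho_{+}+X)\sqrt{\rho_{+}}}$ to second order in the Hermitian $X$, treating $X=D-\rho_{+}$ as small. Writing $N_0=\rho_{+}^2=\diag(\xi_j^2)$ and $W=\sqrt{\rho_{+}}X\sqrt{\rho_{+}}$, the Daleckii--Krein (divided-difference) formula for $\Tr g(N_0+W)$ with $g(x)=\sqrt{x}$ gives
\[
\Phi(X)=\sum_j\xi_j+\tfrac12\Tr X-\tfrac14\sum_{j,k}\frac{X_{jk}X_{kj}}{\xi_j+\xi_k}+o(\norm{X}^2),
\]
using $g'(\xi_j^2)=1/(2\xi_j)$ and the divided difference $\big(g'(\xi_j^2)-g'(\xi_k^2)\big)/(\xi_j^2-\xi_k^2)=-1/\big(2\xi_j\xi_k(\xi_j+\xi_k)\big)$, the $j=k$ entries being regular since $\xi_j>0$. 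Substituting $X=\delta D$ with $\delta D=\sum_\mu(\vlambda'-\vlambda)_\mu\,\partial_\mu\rho_{+}+o(\abs{\vlambda'-\vlambda})$, the quadratic term keeps only the first-order part of $\delta D$ and yields $-\tfrac14\sum_{\mu\nu}(\vlambda'-\vlambda)_\mu(\vlambda'-\vlambda)_\nu\sum_{jk}(\partial_\mu\rho_{+})_{jk}(\partial_\nu\rho_{+})_{kj}/(\xi_j+\xi_k)$, while the linear term equals $\tfrac12\Tr\delta D=-\tfrac12\Tr A$, captured to second order by (i)--(ii). Comparing with $F=1-\tfrac12\sum_{\mu\nu}(\vlambda'-\vlambda)_\mu(\vlambda'-\vlambda)_\nu(\chi_F)_{\mu\nu}+o(\abs{\vlambda'-\vlambda}^2)$ and symmetrizing in $\mu\nu$ (which replaces the product by its real part) reproduces \eqref{eq:explicit.chi.mixed}.

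The main obstacle is the differentiability bookkeeping, which is exactly where the relaxed hypotheses earn their keep: I must show that differentiability of $\rho(\vlambda')$ together with \emph{twice}-differentiability of $\Tr(P_0\rho(\vlambda')P_0^{\dagger})$ \emph{alone} pin down every order-$\abs{\vlambda'-\vlambda}^2$ contribution while the remainders are genuinely $o(\abs{\vlambda'-\vlambda}^2)$. The resolution is the asymmetry just used: the quadratic functional in the expansion of $\Phi$ is continuous and homogeneous of degree two, so replacing $\delta D$ by its first-order part costs only cross terms of size $\abs{\vlambda'-\vlambda}\cdot o(\abs{\vlambda'-\vlambda})=o(\abs{\vlambda'-\vlambda}^2)$, whereas the linear functional depends on $\delta D$ solely through $\Tr\delta D=-\Tr A$, the one scalar supplied to second order. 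I would also verify the validity of the Daleckii--Krein expansion (analyticity of the square root near the positive-definite $N_0$), and as consistency checks confirm that setting $\mcH_{0}=0$ recovers the standard Bures formula and that the rank-one case reproduces \cref{lm:explicit.chi-1}.
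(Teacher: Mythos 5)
Your proposal is correct and follows essentially the same route as the paper's proof: reduce the Uhlmann fidelity to the positive block $\rho_{+}$, expand $\Tr\sqrt{\cdot}$ to second order, and convert the linear term into $-\tfrac12\Tr(P_0\rho(\vlambda')P_0^{\dagger})$ via normalization and positivity (which also forces the first derivative of that trace to vanish, exactly as the paper argues). The only difference is technical: where you invoke the Daleckii--Krein divided-difference formula for the second-order term of $\Tr\sqrt{N_0+W}$, the paper derives the same coefficients explicitly from the contour-integral representation $f(A)=-\tfrac{1}{2\pi i}\oint f(z)(A-z)^{-1}\,dz$ with a Neumann expansion of the resolvent and a residue computation, then symmetrizes $\xi_k/(\xi_j+\xi_k)^2$ over $j\leftrightarrow k$ to reach $1/(2(\xi_j+\xi_k))$; the two devices are equivalent and both rest on the analyticity of the square root near the positive spectrum of $\rho_{+}^2$, which you correctly flag as the point needing verification.
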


The proof is presented in \cref{app:proof-lm:explicit.chi-2}.

Before we proceed, we make a few remarks concerning \cref{lm:explicit.chi-2}.

\begin{enumerate}

\item \cref{lm:explicit.chi-2} includes the second term in \cref{eq:explicit.chi.mixed},
which generalizes the fidelity susceptibility by accounting for non-full rank mixed states (it vanishes for full-rank states such as Gibbs (mixed) states at finite temperature~\cite{gu2010fidelity}).

\item If one applies \cref{lm:explicit.chi-2} to a rank 1 state $\rho$, i.e., a pure state $\rho(\vlambda') = \ketbra{\psi(\vlambda')}$, then the second term in \cref{eq:explicit.chi.mixed} vanishes. In this case the r.h.s. of \cref{eq:explicit.chi.mixed} is equal to both \cref{eq:explicit.chi.pure} and to
\begin{equation}
  \label{eq:explicit.chi.mixed.v2}
  (\chi_{F}(\vlambda))_{\mu\nu}
  = \sum_{j=1}^{\dim{\mcH}} \sum_{k=1}^{k_{\textnormal{max}}(j)} \frac{
      \real\left(
        (\partial_\mu\rho(\vlambda))_{jk}
        (\partial_\nu\rho(\vlambda))_{kj}
      \right)
    }{2(\xi_j + \xi_k)},
\end{equation}
where
\begin{equation}
  \label{eq:k.max}
  k_{\textnormal{max}}(j) = \begin{cases}
    \dim{\mcH} & \textnormal{if } j \leq n_{+}, \\
    n_{+} & \textnormal{if } j > n_{+},
  \end{cases}
\end{equation}
and $\rho(\vlambda')$ is written in a basis where
$\rho(\vlambda) = \diag(\xi_1,\dots,\xi_{\dim{\mcH}})$ with $\xi_{j} = 0$
for $j > n_{+}$.

\item In the special case when $\rho(\vlambda')$ is twice differentiable
and full rank, \cref{lm:explicit.chi-2} can be found in Ref.~\cite[section 15.1]{bengtsson2006geometry}, with a derivation that uses logarithmic derivatives.
This approach does not allow one to relax the requirements on
$\rho(\vlambda')$ as we do here since some of the objects encountered in the proof with logarithmic derivatives are not well-defined.

\end{enumerate}

 \begin{mycorollary}
  \label{lm:explicit.chi-3}
 Let $p(\vlambda') = \{p_z(\vlambda')\}_{z \in \mcS}$ be a discrete
      probability distribution on a finite set $\mcS$ defined for $\vlambda'$ in
      the neighborhood of $\vlambda \in \mcM$.
      Let $\mcS_0 = \{z \in \mcS: p_z(\vlambda) = 0\}$ and
      $\mcS_+ = \mcS \setminus \mcS_0$ be the set of zeros and support of
      $p_\bullet(\vlambda)$ respectively.
      Assume that $p_z$ is differentiable at $\vlambda' = \vlambda$
      and $\sum_{z\in\mcS_0} p_z(\vlambda')$
      is twice differentiable at $\vlambda' = \vlambda$. Then
      \begin{align}
        \label{eq:explicit.chifc}
        (\chi_{F_c}(\vlambda))_{\mu\nu} &=
          \sum_{z\in\mcS_{+}}
            \frac{
              \left(\partial_{\mu} p_z(\vlambda)\right)
              \left(\partial_{\nu} p_z(\vlambda)\right)
            }{
              4 p_z(\vlambda)
            }\notag\\
          &\quad + \frac12 \left.\partial_{\mu} \partial_{\nu}
            \sum_{z\in\mcS_0} p_z(\vlambda')\right|_{\vlambda' = \vlambda}.
      \end{align}
\end{mycorollary}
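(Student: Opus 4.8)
The plan is to prove \cref{lm:explicit.chi-3} as a direct corollary of \cref{lm:explicit.chi-2} by embedding the classical distribution into the space of density matrices as a diagonal matrix. Concretely, I would introduce the family $\rho(\vlambda') = \sum_{z\in\mcS} p_z(\vlambda')\ketbra{z}$, diagonal in the basis $\{\ket{z}\}_{z\in\mcS}$, and show that applying \cref{lm:explicit.chi-2} to $\rho$ and exploiting its diagonal structure reproduces \cref{eq:explicit.chifc}.

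First I would verify that the classical fidelity susceptibility of $p$ coincides with the mixed-state fidelity susceptibility of $\rho$. For the diagonal matrices $\rho(\vlambda) = \diag(p_z(\vlambda))$ and $\rho(\vlambda') = \diag(p_z(\vlambda'))$, the product $\sqrt{\rho(\vlambda)}\sqrt{\rho(\vlambda')} = \diag(\sqrt{p_z(\vlambda)p_z(\vlambda')})$ is positive semidefinite, so the Uhlmann fidelity collapses to $F(\rho(\vlambda),\rho(\vlambda')) = \|\sqrt{\rho(\vlambda)}\sqrt{\rho(\vlambda')}\|_1 = \sum_{z}\sqrt{p_z(\vlambda)p_z(\vlambda')} = F_c(p(\vlambda),p(\vlambda'))$. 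Since the two fidelities agree for all $\vlambda'$ near $\vlambda$, their Taylor expansions in $\vdlambda = \vlambda'-\vlambda$ agree term by term, and in particular the $\chi_{F_c}(\vlambda)$ defined via $F_c$ equals the mixed-state $\chi_{F}(\vlambda)$ of $\rho$ defined via \cref{eq:def.chi.dFmixed}.

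Next I would translate the hypotheses and identify the objects in \cref{lm:explicit.chi-2}. Because $\rho$ is diagonal, $\ker\rho(\vlambda)$ is spanned by $\{\ket{z} : z\in\mcS_0\}$ and $\rho(\vlambda)(\mcH)$ by $\{\ket{z} : z\in\mcS_+\}$, so $P_0$ and $P_{+}$ are the coordinate projections onto $\mcS_0$ and $\mcS_+$, and the computational basis is already one in which $\rho_{+}(\vlambda) = \diag(p_z(\vlambda))_{z\in\mcS_+}$ is diagonal with strictly positive entries $\xi_z = p_z(\vlambda)$. Differentiability of each $p_z$ at $\vlambda$ gives differentiability of $\rho$, and $\Tr(P_0\rho(\vlambda')P_0^{\dagger}) = \sum_{z\in\mcS_0} p_z(\vlambda')$, whose assumed twice-differentiability is exactly the second hypothesis of \cref{lm:explicit.chi-2}; hence the theorem applies.

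Finally I would substitute into \cref{eq:explicit.chi.mixed}. The crucial simplification is that $\rho_{+}(\vlambda')$ stays diagonal for all $\vlambda'$ in the neighborhood, so $\partial_\mu\rho_{+}(\vlambda)$ is diagonal and only the $j=k$ terms of the double sum survive: the first term collapses to $\sum_{z\in\mcS_+}(\partial_\mu p_z(\vlambda))(\partial_\nu p_z(\vlambda))/(4p_z(\vlambda))$, matching the first line of \cref{eq:explicit.chifc}, while the second term becomes $\tfrac12\,\partial_\mu\partial_\nu\sum_{z\in\mcS_0}p_z(\vlambda')|_{\vlambda'=\vlambda}$, matching the second line. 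There is no genuinely hard step here; the only points requiring care are confirming that positivity of the $p_z$ makes the Uhlmann fidelity reduce cleanly to $F_c$ (so no phase or absolute-value subtleties arise), and checking that the diagonal structure of $\rho$ persists away from $\vlambda$ so that the off-diagonal derivative terms vanish identically rather than merely being absent at the base point.
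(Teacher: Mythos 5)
Your proposal is correct and follows exactly the route the paper intends: the paper's entire proof is the remark that the corollary ``trivially follows from \cref{lm:explicit.chi-2} when applied to diagonal $\rho$,'' and your write-up simply fills in the details of that reduction (Uhlmann fidelity of commuting diagonal states collapsing to $F_c$, the kernel/support split matching $\mcS_0/\mcS_+$, and the survival of only the $j=k$ terms). No gaps; the extra care you flag about the diagonal structure persisting for all $\vlambda'$ is already guaranteed by the construction $\rho(\vlambda')=\sum_z p_z(\vlambda')\ketbra{z}$.
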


\cref{lm:explicit.chi-3} trivially follows from \cref{lm:explicit.chi-2} when applied to diagonal $\rho$.

\subsection{Bound for the mixed state fidelity susceptibility}
\label{ass:chif.mixed.bound}

Next, we bound the second term in
\cref{eq:explicit.chi.mixed} and show that \cref{eq:explicit.chi.mixed} is equal
to \cref{eq:explicit.chi.mixed.v2} for most $\vlambda$. Note that $\rho_{+}(\vlambda')$ is invertible in a neighborhood
  of $\vlambda' = \vlambda$ because we defined it (in \cref{lm:explicit.chi-2}) to be an operator $\rho(\vlambda)\mcH \to \rho(\vlambda) \mcH$ (and not $\mcH \to \mcH$, which would not be invertible).

\begin{myproposition}
  \label{lm:chif.mixed.bound}
  Assume that the conditions in \cref{lm:explicit.chi-2} hold. Then:
  \begin{enumerate}
    \item The following inequality holds for the second term
    in \cref{eq:explicit.chi.mixed}:
      \begin{align}
        \label{eq:chi.mixed2.bound}
        &\frac12 \partial_\mu \partial_\nu \Tr(P_0 \rho(\vlambda) P_0^{\dagger})
        \geq  \\
        &\quad \real\Tr\left(
          P_0 \left(\partial_\mu\rho(\vlambda)\right)
          P_{+}^{\dagger} \left(\rho_{+}(\vlambda)\right)^{-1}
          P_{+} \left(\partial_\nu\rho(\vlambda)\right) P_0^{\dagger}
        \right).\notag
      \end{align}
    \item The strict inequality in \cref{eq:chi.mixed2.bound} along a vector
      $\vecv \in T_{\vlambda} \mcM$, i.e., the inequality
      \begin{multline}
        \label{eq:chi.mixed2.bound.v2}
        \frac12 v_{\mu} v_{\nu}\partial_\mu \partial_\nu \Tr(P_0 \rho(\vlambda) P_0^{\dagger})
        > \\
        v_{\mu} v_{\nu}\real\Tr\left(
          P_0 \left(\partial_\mu\rho(\vlambda)\right)
          P_{+}^{\dagger} \left(\rho_{+}(\vlambda)\right)^{-1}
          P_{+} \left(\partial_\nu\rho(\vlambda)\right) P_0^{\dagger}
        \right),
      \end{multline}
      is only possible when for every curve $\vlambda'\colon(-1, 1)\to \mcM$ with $\vlambda'(0) = \vlambda$ and $\left.\frac{\partial\vlambda'(t)}{\partial t}\right|_{t=0} = \vecv$ there is $\varepsilon > 0$ s.t. $\rank(\rho(\vlambda'(t))) > \rank(\rho(\vlambda))$ for $t \in (-\varepsilon, \varepsilon) \setminus \{0\}$. i.e., $\rank(\rho(\vlambda'))$ is discontinuous at $\vlambda' = \vlambda$ and greater than $\rank(\rho(\vlambda))$ in some punctured neighborhood of $\vlambda$ along any curve tangent to $\vecv$.
    \item When \cref{eq:chi.mixed2.bound} is an equality, \cref{eq:explicit.chi.mixed.v2} holds.
  \end{enumerate}
\end{myproposition}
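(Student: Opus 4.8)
The plan is to exploit the single fact that $\rho(\vlambda')$ is positive semidefinite along every curve through $\vlambda$, and to read off all three parts from a Schur-complement inequality together with second-order Taylor expansions. Fix a direction $\vecv \in T_{\vlambda}\mcM$ and a curve $\vlambda'(t)$ with $\vlambda'(0) = \vlambda$ and tangent $\vecv$ at $t=0$, and write $\rho(t) = \rho(\vlambda'(t))$ in block form with respect to $\mcH = \mcH_0 \oplus \mcH_+$, with blocks $\rho_{00}(t) = P_0\rho(t)P_0^\dagger$, $\rho_{0+}(t) = P_0\rho(t)P_+^\dagger$, $\rho_{+0}(t) = \rho_{0+}(t)^\dagger$, and $\rho_{++}(t) = P_+\rho(t)P_+^\dagger$. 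At $t=0$ the entire $\mcH_0$ row and column vanish (since $\mcH_0 = \ker\rho(\vlambda)$), while $\rho_{++}(0) = \rho_{+}(\vlambda)$ is positive definite, so $\rho_{++}(t)$ stays invertible for small $t$. Because $\rho(t)\succeq 0$ with invertible lower-right block, the Schur complement $S(t) = \rho_{00}(t) - \rho_{0+}(t)\rho_{++}(t)^{-1}\rho_{+0}(t)$ is positive semidefinite, and the block-congruence factorization gives the rank formula $\rank\rho(t) = \dim\mcH_+ + \rank S(t)$. These two consequences of positivity drive the entire argument.

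For part 1 I would take traces. Set $f(t) = \Tr\rho_{00}(t) = \Tr(P_0\rho(t)P_0^\dagger)$ and $g(t) = \Tr(\rho_{0+}(t)\rho_{++}(t)^{-1}\rho_{+0}(t))$, so that $\Tr S(t) = f(t) - g(t) \ge 0$. The map $\vlambda' \mapsto \Tr(P_0\rho(\vlambda')P_0^\dagger)$ is nonnegative and vanishes at $\vlambda$, so $\vlambda$ is a global minimum and its gradient vanishes there; using the assumed twice-differentiability, its second-order Taylor expansion gives $f(t) = \tfrac12 Q_f\, t^2 + o(t^2)$ with $Q_f = v_\mu v_\nu\,\partial_\mu\partial_\nu\Tr(P_0\rho(\vlambda)P_0^\dagger)$. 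For $g$, only first-order differentiability of $\rho$ is needed: writing $\partial_{\vecv} = \sum_\mu v_\mu\partial_\mu$, $A_{\vecv} = P_0(\partial_{\vecv}\rho(\vlambda))P_+^\dagger$, one has $\rho_{0+}(t) = A_{\vecv}t + o(t)$ and $\rho_{++}(t)^{-1} = \rho_{+}(\vlambda)^{-1} + o(1)$, giving $g(t) = Q_g\, t^2 + o(t^2)$ with $Q_g = \Tr(A_{\vecv}\rho_{+}(\vlambda)^{-1}A_{\vecv}^\dagger)$, which is exactly $v_\mu v_\nu$ contracted into the right-hand side of \cref{eq:chi.mixed2.bound}. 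Dividing $f(t)\ge g(t)$ by $t^2$ and letting $t\to 0$ yields $\tfrac12 Q_f \ge Q_g$; since this holds for every $\vecv$, it is precisely the quadratic-form inequality \cref{eq:chi.mixed2.bound}.

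For part 2, the key observation is that $Q_f$ and $Q_g$ depend only on $\vecv$ (the acceleration term in the chain rule drops out because the gradient of $f$ vanishes at $\vlambda$, and $Q_g$ is fixed by $A_{\vecv}$ and $\rho_{+}(\vlambda)^{-1}$), so the coefficient $c := \tfrac12 Q_f - Q_g$ is the same for every curve tangent to $\vecv$, and $\Tr S(t) = c\,t^2 + o(t^2)$. A strict inequality along $\vecv$ means $c > 0$, whence $\Tr S(t) > 0$ for all small $t \neq 0$; as $S(t)\succeq 0$, this forces $S(t)\neq 0$, i.e.\ $\rank S(t)\ge 1$, and the rank formula then gives $\rank\rho(\vlambda'(t)) = \dim\mcH_+ + \rank S(t) > \dim\mcH_+ = \rank\rho(\vlambda)$ on a punctured neighborhood of $0$, for every such curve, which is the claim.

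For part 3 I would identify the right-hand side of \cref{eq:chi.mixed2.bound} with the cross-block terms of \cref{eq:explicit.chi.mixed.v2}. Splitting the double sum in \cref{eq:explicit.chi.mixed.v2} over the four $\mcH_{\pm}$ blocks (the $\mcH_0$--$\mcH_0$ block being excluded by \cref{eq:k.max}), the $\mcH_+$--$\mcH_+$ block reproduces the first term of \cref{eq:explicit.chi.mixed}, while Hermiticity of $\partial_\mu\rho$ makes the $\mcH_+$--$\mcH_0$ and $\mcH_0$--$\mcH_+$ blocks equal; summing them and using $\rho_{+}(\vlambda) = \diag(\xi_1,\dots,\xi_{n_+})$ collapses them to $\real\Tr(P_0(\partial_\mu\rho)P_+^\dagger\rho_{+}^{-1}P_+(\partial_\nu\rho)P_0^\dagger)$, the right-hand side of \cref{eq:chi.mixed2.bound}. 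Hence when \cref{eq:chi.mixed2.bound} is an equality, the second term of \cref{eq:explicit.chi.mixed} may be replaced by these cross-block terms, turning \cref{eq:explicit.chi.mixed} into \cref{eq:explicit.chi.mixed.v2}. The main obstacle I anticipate is regularity bookkeeping rather than any deep idea: confirming that the quadratic Taylor coefficient $Q_g$ of $g$ exists and equals $\Tr(A_{\vecv}\rho_{+}^{-1}A_{\vecv}^\dagger)$ using only the first derivative of $\rho$ (the second derivative of $\rho$ is not assumed), and verifying that $Q_f$, $Q_g$, and hence $c$ are genuinely curve-independent so that the ``for every curve'' conclusion in part 2 is justified.
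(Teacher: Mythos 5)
Your proposal is correct and follows essentially the same route as the paper: the paper's Lemma on block matrices is exactly the Schur-complement positivity criterion and rank formula you invoke, and both proofs then extract \cref{eq:chi.mixed2.bound} from the second-order Taylor coefficient of the trace inequality, deduce the rank jump from $S(t)\neq 0$ for part 2, and obtain part 3 by identifying the right-hand side of \cref{eq:chi.mixed2.bound} with the cross-block terms of \cref{eq:explicit.chi.mixed.v2}. Your added bookkeeping (curve-independence of the quadratic coefficients and the fact that only first-order differentiability of $\rho$ is needed for $Q_g$) is sound and slightly more explicit than the paper's treatment.
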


The proof is given in \cref{app:proof-lm:explicit.chi-3}.

Part 2 of \cref{lm:chif.mixed.bound} essentially states
that the points where \cref{eq:chi.mixed2.bound} is not an equality are
``rare''. For example, if $\mcM$ is a twice differentiable manifold and the conditions
of \cref{lm:explicit.chi-2} are satisfied for all $\vlambda \in \mcM$,
then the set of points where \cref{eq:chi.mixed2.bound} is not an equality has
measure zero; everywhere else, \cref{eq:explicit.chi.mixed.v2} applies.
Given that part 3 of \cref{lm:chif.mixed.bound} is a special case of \cref{lm:explicit.chi-2},
a similar observation applies to \cref{eq:explicit.chifc}:
the second term in \cref{eq:explicit.chifc} is equal to zero unless $\abs{\mcS_{{+}}(\vlambda)}$
satisfies a condition similar to the one in part 2 of \cref{lm:chif.mixed.bound}.

\subsection{Example illustrating \texorpdfstring{\cref{lm:chif.mixed.bound}}{Prop. \ref*{lm:chif.mixed.bound}}}
\label{ass:example}

The following example illustrates the second term in \cref{eq:explicit.chifc} and
a situation where the bound \cref{eq:chi.mixed2.bound} is not tight.
Consider $\mcS = \{0, 1\}$, $\mcM = \mathbb{R}$; i.e.,
the mixed state in a two-dimensional Hilbert space is parameterized by
a single parameter $s = \lambda_0$. Let $p_0(s) = \sin^2(s)$, $p_1(s) = \cos^2(s)$, $\rho(s) = \diag(p_0, p_1)$.
Then
\begin{equation}
  \label{eq:explicit.chi:example}
  \chi_{F}(s) = \chi_{F_c}(s) = ds^2,\quad\text{i.e.,}\quad (\chi_F(s))_{00} = 1.
\end{equation}
The first term in \cref{eq:explicit.chi.mixed,eq:explicit.chifc}
is $1 - \delta_{\sin(2s), 0}$,
where $\delta$ is the Kronecker delta. The second term in both of these
expressions is $\delta_{\sin(2s), 0}$.
The bound (i.e. the r.h.s. of \cref{eq:chi.mixed2.bound}) is $0$ since $\rho(s)$ is diagonal.
If $\tilde{p}(s)$ is the outcome of measuring $\rho(s)$ in any orthonormal basis
of $\mcH$ rotated away from $\{\ket{0}, \ket{1}\}$ (i.e., a basis
$\{\ket{\tilde 0}, \ket{\tilde 1}\}$ such that $\abs{\braket{\tilde 0|0}} \not\in \{0, 1\}$),
and $\chi_{F_c}^{\tilde{p}}(s)$ is the corresponding classical fidelity susceptibility,
then $\chi_{F_c}^{\tilde{p}}(0) = 0 \neq 1 = \chi_{F_c}(0) = \chi_F(0)$. Indeed, at $s=0$ the
first derivative of $\rho(s)$ is zero and the second derivative only appears
in the second term of \cref{eq:explicit.chifc}, which is zero unless some
components of $\tilde{p}(0)$ are equal to $0$.

This example illustrates that when the bound
\eqref{eq:chi.mixed2.bound} is not tight, the classical fidelity susceptibility
is not continuous with respect to the choice of the measurement basis.
In more detail, this example shows that at $s=0$ the average of the classical fidelity susceptibility over all possible
measurement bases vanishes while the quantum fidelity susceptibility is $1$; this shows that \cref{th:avggc} does not hold for mixed states.

\subsection{Example illustrating inapplicability of
\texorpdfstring{\cref{th:comput.basis}}{Theorem \ref*{th:comput.basis}} to level crossings}
\label{ss:level.crossing}
To illustrate the breakdown of \cref{th:comput.basis} at level crossings, consider the simplest example
of a level crossing. I.e., consider Hilbert space $\mcH = \mathbb{C}^2$ and a family of Hamiltonians $H(s) = s X$
parameterized by $s \in (-1, 1)$, where $X$ is the Pauli $X$ matrix. This family is continuously
differentiable and has real matrix elements in the computational basis. Its ground state is non-degenerate
except for $s = 0$, where $H(s) = H(0) = 0$. Therefore, the conditions of \cref{th:comput.basis} are satisfied
everywhere except at $s = 0$. Indeed, $\chi_F(s) = \chi_{F_c}(s) = 0$ for $s \neq 0$. However, at $s = 0$
the ground state is degenerate, $\chi_F(0) = +\infty$, and $\chi_{F_c}(0) = 0$.

One might argue that in a practical numerical experiment, one only computes the ground state at a finite set
of points $s$ (referred to as ``grid points'' later) and it is unlikely that the location of a level crossing
will coincide with a grid point. Indeed, in our datasets, we deal with such finite grids and replace
$\chi_{F_c}$ by a finite difference: for a 1D grid that difference is given by
\begin{equation}
  \label{eq:finite.diff}
  \tilde{\chi}_{F_c}(s) = \frac{2}{\Delta s^2} \left(1 - F_c(s - \Delta s/2, s + \Delta s/2)\right).
\end{equation}
Replacing $F_c$ with $F$, one can write a similar expression for $\tilde{\chi}_F$. For intervals $[s - \Delta s/2, s + \Delta s/2]$ not containing $0$, $\tilde{\chi}_{F_c}(s) = \tilde{\chi}_F(s) = 0$. However, when such an interval contains $0$, $F(s - \Delta s/2, s + \Delta s/2) = 0$ and $\tilde{\chi}_F(s) = 2 / \Delta s^2$, but $F_c(s - \Delta s/2, s + \Delta s/2) = 1$ and $\tilde{\chi}_{F_c}(s) = 0$. This illustrates that a failure of the degeneracy condition in \cref{th:comput.basis} matters even if it occurs between grid points in a numerical experiment.

Note, however, that the lack of a peak in $\tilde{\chi}_{F_c}$ does not have to happen when a level crossing
occurs: if we pick $H(s) = s Z$ instead, then $\tilde{\chi}_{F_c}(s) = \tilde{\chi}_F(s) = 2/\Delta s^2$
for $0 \in (s - \Delta s/2, s + \Delta s/2)$. In the more realistic case of a many-body Hamiltonian,
$\tilde{\chi}_{F_c}$ is expected to capture the level crossing unless the probability distributions corresponding to
the states between which the transition occurs are identical (e.g., due to a symmetry).
Also note that ClassiFIM can be used to estimate $\chi_{F_c}$ regardless of whether $\chi_{F_c}$ matches $\chi_F$.

Among the four Hamiltonian families corresponding to the quantum datasets used in this work, Hubbard12 is the only one exhibiting an unavoided degeneracy on the grid points, specifically the 17 points with $\lambda_0 = 0$, i.e., the points on the boundary of the region. In the remaining three, this degeneracy is only expected at level crossings between the grid points.

\section{The FIM-estimation task and the
\texorpdfstring{C\lowercase{lassi}FIM}{ClassiFIM} method}
\label{sec:classifim}

In this section, we define the FIM-estimation task, outline its applications, and summarize the ClassiFIM method as well as its formal justification.

\subsection{The FIM-estimation ML task}
\label{sec:task}
The FIM-estimation task is based on a statistical manifold $(\mcM, P)$. We are given the following:
\begin{itemize}
\item A dataset of the form $\mcDtrain = \{(\vlambda_i, x_i)\}_{i=1}^{\abs{\mcDtrain}}$,
  where $\vlambda_i$ is a point in the parameter space $\mcM$ and $x_i$ is a sample from $P_{\vlambda_i}$.
\item The structure of the samples $x_i$. For example, if $x_i\in\{0,1\}^n$ is a bitstring representing a measurement of a quantum system on a lattice with $n$ sites, then there is a correspondence between the bitstring bits and the lattice sites, and the said structure is the lattice.
\item Some additional but optional information about the system. For example, the description or symmetries of the Hamiltonian may be provided for quantum systems on a lattice.
\end{itemize}

The task is to produce a representation of a function $\hat{g}$ --- an estimate of the (classical) FIM $g$ --- taking a point $\vlambda$ in the parameter space and returning a symmetric matrix corresponding to an estimate of the FIM from which we can construct a phase diagram. In particular, we require it to provide a precise and accurate estimate of the peak locations corresponding to critical points and reflect the qualitative features of the ground truth FIM phase diagram.

\subsection{Motivation and applications}
\label{ss:motivation}
As explained in \cref{sec:intro}, in the context of quantum systems, the task we set out to solve is to estimate the quantum FIM from \emph{classical data}. To reiterate, by classical data we mean data obtained from measurements in a single fixed basis, and despite the classical nature of the data, we still wish to estimate the \emph{quantum} fidelity susceptibility, e.g., in order to identify QPTs. As formally defined in \cref{sec:task}, the solution of the FIM estimation task provides an estimate of the \emph{classical} fidelity susceptibility $\chi_{F_c} = g/4$, but in many important cases it is expected to coincide with the quantum fidelity susceptibility $\chi_F$, as firmly grounded in the results we presented in \cref{s:fidelity.susceptibility} regarding the formal equivalence between the classical and quantum fidelity susceptibilities.

However, note that should random basis measurements be available (e.g., as in shadow tomography), the task of estimating $\chi_{F_c}$ from those measurements also fits into the definition of the FIM-estimation task: In that case, a sample $x_i$ would include the information about the basis in which the measurement was made.

Potential uses of solvers for the FIM-estimation task include, but are not limited to:
(i) phase transitions of quantum systems, whether in their ground state or in an excited state;
(ii) classical phase transitions in systems to which we have sampling access (e.g., from measurements or Monte Carlo simulations);
(iii) applications outside physics that involve access to random samples from a statistical manifold of interest.

The solvers for this task include ClassiFIM (described in \cref{ss:classifim} below), FINE~\cite{duy2022fisher}, the confusion-based method denoted $I_2$ in Ref.~\cite{arnold2023machine}, and the parameter estimation based method denoted $I_3$ in Ref.~\cite{arnold2023machine}. 

The datasets we introduce below represent (i) and (ii), and the context is given for each physical model used to obtain samples. In \cref{app:related-work} we provide a detailed literature survey of previous work related to our ML task.

\subsection{\texorpdfstring{C\lowercase{lassi}FIM}{ClassiFIM}}
\label{ss:classifim}

The companion paper~\cite{ClassiFIM-ML} proposes ClassiFIM as an unsupervised ML method to solve the task of estimating the FIM given a limited number of samples from a probability distribution. The method consists of three steps.

First, the original dataset $\mcDtrain = \{(\vlambda_i, x_i)\}_{i=1}^{\abs{\mcDtrain}}$ (which has no labels) is transformed into a new dataset, each entry of which can be viewed as a pair of points $\vlambda+\vdlambda/2$, $\vlambda -  \vdlambda/2$ in the parameter space and a sample $x$ from the probability distribution corresponding to one of these points.
The label, which a binary classifier (BC) is supposed to predict,
indicates which of the two points the sample $x$ corresponds to.
This dataset can be generated, e.g., via \cite[Algorithm 1]{ClassiFIM-ML},
or a vectorized version \cite[Appendix E.3]{ClassiFIM-ML}. Generation can
also be performed on the fly during training.

Second, one has to select and train a BC for this task. Specifically, the trained BC will output a function $\ln(p/(1-p))$, where $p = p(\vlambda, \vdlambda, x)$ is the estimated posterior probability that $x$ was taken from $\vlambda + \vdlambda /2$ (the prior of this probability without knowledge of $x$ is $1/2$). It is this BC that will contain a neural network (NN) in our specific realization.
This BC has to satisfy three technical conditions,
which are detailed in \cite[Section 4.3]{ClassiFIM-ML}.

Third, one can estimate the FIM from the predictions of this BC using \cite[Eq.~(2)]{ClassiFIM-ML}. Roughly, the estimate $\hat{g}$ is equal to an average of $\left(\left.\frac{d \ln(p/(1-p))}{d \vdlambda}\right|_{\vdlambda = 0}\right)^2$.

This approach is rigorously justified in the sense that if the input dataset
is infinite and the BC model is perfect, then the estimate $\hat{g}$ of the FIM
is accurate (i.e., $\hat{g} = g$; see \cite[Theorem 1]{ClassiFIM-ML}).

In practice, the dataset size is finite and the BC model
is imperfect, so it remains to be seen how well ClassiFIM performs in
practice. We present phase diagrams obtained using ClassiFIM
to answer this question numerically in \cref{sec:numerical-experiments}.

\section{Datasets}
\label{sec:validation-metrics-and-datasets}

In this section, we present two datasets (frustrated Ising ladder, Hubbard) to test and validate any algorithm that attempts to solve the task of estimating the FIM, and four datasets associated with two prior approaches to machine learning quantum phase transitions (nearest neighbor Ising, next-nearest neighbor Ising, Kitaev chain, XXZ) that were not open-source and required nontrivial modifications for our present purposes.
The datasets are available in \cite{public-datasets} and the code to generate them in \cite{ClassiFIM-gencode}.

\subsection{Generating datasets}
\label{subsec:data-sources}

In order to create inputs for FIM-estimation, one first needs to generate and collect a dataset $\mcDtrain = \{(\vlambda_i, x_i)\}_{i=1}^{\abs{\mcDtrain}}$ of samples taken from the probability distribution $P_{\vlambda}(x)$ at different parameter values.
For QPTs, each $x_i$ can be viewed as the outcome of a measurement of a quantum system, e.g., measurements in a fixed basis, classical shadow techniques, full tomography, etc.
For each of the datasets we present, we generated a larger dataset $\mcD$ with 140 samples per $\vlambda$ and randomly split it into $\mcDtrain$ ($90\%$) and $\mcDtest$ ($10\%$).

FIM-estimation assumes that we have access to a device that can perform this data generation and collection.
Here, we restrict ourselves to models for which $\mathcal{D}$ can be generated with a classical computer using a numerical scheme such as Lanczos, Markov chain Monte Carlo (MCMC), Quantum Monte Carlo (QMC), density matrix renormalization group (DMRG), or Fermionic Fourier Transform applied to an analytically known expression for the ground state in the momentum basis.
But in principle, this could be a ground state obtained using a quantum annealer \cite{king2022coherent}, analog quantum simulator \cite{Daley:2022vu}, or a gate-model quantum computer \cite{Kim:2023aa}, and measured in a fixed basis (the computational basis) at different points in the parameter space. Quantum annealers and analog quantum simulators are settings in which measurements are available in only one fixed basis, which was the original motivation for this work.

Importantly, since a direct classical simulation of the underlying quantum dynamics is typically inefficient, a quantum computational advantage could be expected in generating $\mcD$ directly using a quantum device \cite{Childs:2017aa}.
Combined with an efficient method of solving FIM-estimation (such as ClassiFIM \cite{ClassiFIM-ML}), we may thus expect a speedup over classical simulation methods in generating quantum phase diagrams.

A schematic illustrating the entire protocol when, e.g., a quantum annealer is the source of the data, is presented in \cref{fig:S1}.

\begin{figure}[t]
\begin{center}
\pgfmathparse{\columnwidth/13.37cm}%
\edef\tikzscale{\pgfmathresult}%
\begin{tikzpicture}[scale=\tikzscale]
\newlength{\ux}
\newlength{\uy}
\setlength{\ux}{15mm}
\setlength{\uy}{11mm}
\draw (-3\ux, 0) node (H) {$\left\{H_\lambda\right\}$};
\draw (-1.9\ux, 0\uy) node (A2a)
{\includegraphics[width=0.1\columnwidth]{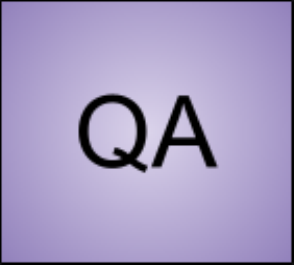}};
\draw (-0.15\ux, 0\uy) node (A2b)
{\includegraphics[width=0.15\columnwidth]{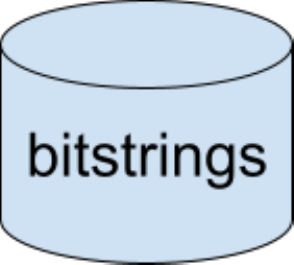}};
\draw (1.6\ux, 0\uy) node (A2c)
{\includegraphics[width=0.1\columnwidth]{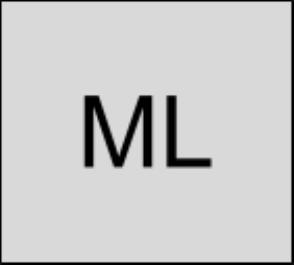}};
\draw (4\ux, 0\uy) node (B2)
{\includegraphics[width=0.28\columnwidth]{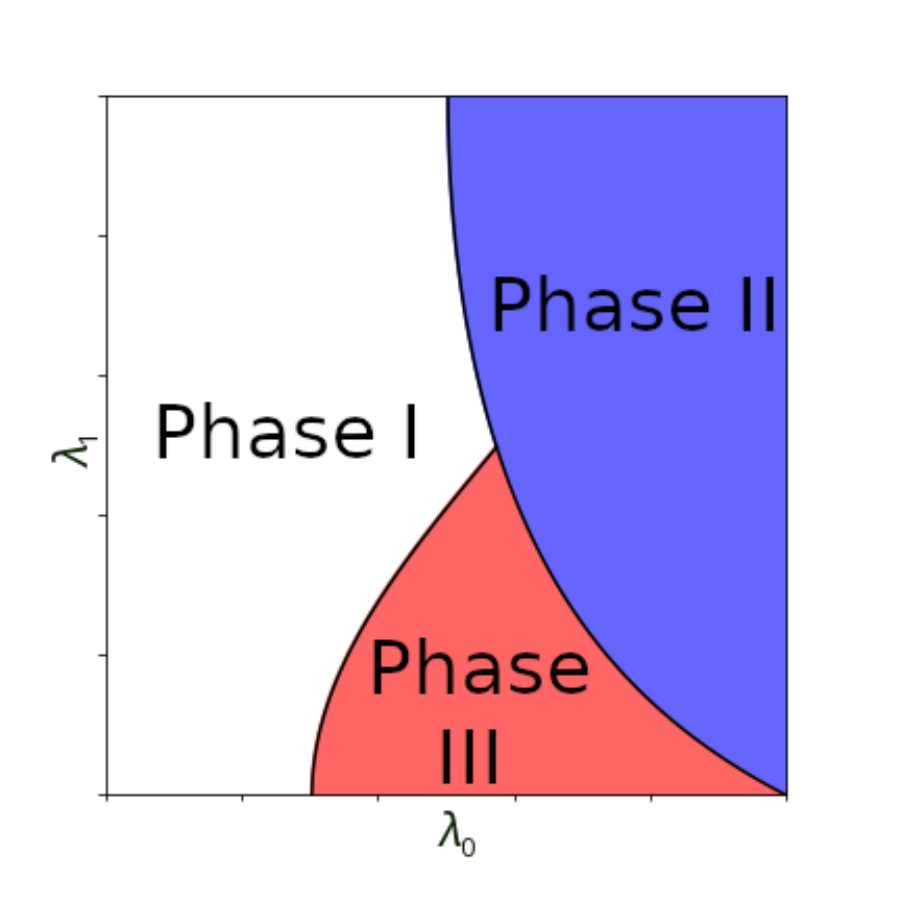}};
\draw [->] (H) -- (A2a.west);
\draw [->] (A2a) -- (A2b);
\draw [->] (A2b) -- (A2c);
\draw [->] (A2c) -- (B2);
\end{tikzpicture}
\end{center}
\caption{A setup for applications of ClassiFIM to quantum phase transitions. The Hamiltonian family $\{H_\lambda\}$ is time-evolved using quantum hardware such as a quantum annealer (QA), or its ground state is obtained by simulation on a classical computer. The system is measured in one or multiple bases that return classical data (samples), which are fed into a machine learning algorithm (ML). Our focus here is on the ML box, for which the ClassiFIM method has been developed \cite{ClassiFIM-ML}. The goal is to make predictions about the different phases characterizing the Hamiltonian family, as illustrated schematically in the last box.}
\label{fig:S1}
\end{figure}

We note that with a universal gate model quantum computer, the quantum fidelity between two states (from which the fidelity susceptibility, or FIM, can be computed) can be measured directly (e.g., via the SWAP test \cite{Barenco:1997aa}), and generating the bitstring dataset is, in principle, unnecessary. In practice, however, it remains relevant since the SWAP test requires storing two copies of the system's state and a multi-qubit gate between the two copies, which is far more resource-intensive than the single-copy measurement that is the starting point of the FIM-estimation task. Additionally, the SWAP test only provides the fidelity between two pure states, while the FIM-estimation task is valid for mixed states as well.
Similarly, if one has a unitary circuit $U(\vlambda)$ to prepare $\ket{\psi(\vlambda)}$, one can run $\bra{0} U(\vlambda')^{\dagger} U(\vlambda) \ket{0}$ to obtain the fidelity $F(\vlambda, \vlambda')$, but that approach requires doubling the circuit depth and may fail if the circuit includes non-unitary operations (e.g., measurements).

\subsection{Datasets used in this work}
\label{subsec:datasets}

\begin{table*}[t]
\centering
\begin{tabular}{cccccccccc}
\toprule
Name   &$\#x_i$& $\mcM$ &$\abs{\mcM'}$& System Type & PT Order & PT Type & State & Sampling method & Time \\
\toprule
Ising400 &$400$&$(0, 1]$& $1000$ & 2D Spin-1/2        & 2nd      & CPT     & Gibbs & MCMC          & $10$ \\
IsNNN400 &$400$& $(0, 1]^2$& $64 \times 64$& 2D Spin-1/2 & 1st, 2nd & CPT & Gibbs  & MCMC          & $20$ \\
FIL24    &$24$ & $[0, 1)^2$& $64 \times 64$& 2D Spin-1/2 & 1st, 2nd & QPTs & GS  & Lanczos         & $10$ \\
Hubbard12&$24$ & $[0, 1)^2$& $64 \times 64$& 2D Fermionic& unknown  & QPTs & GS  & Lanczos         & $10$ \\
Kitaev20 &$20$ & $(-4, 4)$ & $20000$       & 1D Fermionic& 2nd      & TQPT & GS  & FermFT          & $10$ \\
XXZ300   &$300$& $[0, 1]^2$& $64 \times 64$& 1D Spin-1/2 & 2nd      & TQPT & GS  & DMRG            & $125$\\
\bottomrule
\end{tabular}
\caption{We summarize the datasets generated and used in this work. The $\#x_i$ column is the number of bits in the bitstring, and $\abs{\mcM'}$ indicates the number of points and the shape of the grid $\mcM'$ of $\vlambda$'s. The ``Time'' column includes the suggested time limit in minutes for any method of FIM-estimation on the processor described in \cref{sec:numerical-experiments} (which can be rescaled as appropriate when training is done on other processors). ``GS'' means the probability distribution corresponding to the ground state of the Hamiltonian.}
\label{tab:datasets}
\end{table*}

A summary of the six primary datasets considered in this work is given in \cref{tab:datasets}. These datasets are generated from well-studied models on a lattice known to exhibit different types of phase transitions: CPTs (Ising, next-nearest-neighbor Ising), standard QPTs (frustrated Ising ladder, Hubbard model), and topological QPTs (Kitaev chain, XXZ chain). The dataset names are shortened forms of the corresponding models with a suffix $n$ indicating the number of lattice sites involved; e.g., in Hubbard12, the fermions are placed on a 12-site lattice. The precise Hamiltonian of each model and its physical significance are discussed next.

\subsubsection{Ising400}
The 2D Ising model is a simple classical model for magnetism and phase transitions~\cite{Sachdev:book}. The Hamiltonian of this model is given by
\begin{equation}
H = - J \sum_{\langle i, j \rangle} \sigma_i \sigma_j - h \sum_j \sigma_j
\end{equation}
where $\langle i, j \rangle$ denotes the underlying lattice of nearest neighbor connections, $\sigma_j$ can take on values $\pm 1$, and a state configuration consists of the value of each $\sigma_j$. For a given temperature $T$, the Gibbs thermal equilibrium state is given by
\begin{equation}
\rho_G = \frac{e^{- H / T} } { \Tr(e^{-H / T})},
\end{equation}
where we have chosen units where Boltzmann's constant $k_B = 1$. For the Ising400 dataset, we select the classical 2D Ising model on a $20 \times 20$ square lattice with periodic boundary conditions and a single parameter, $T$: we set $J=1$ and $h=0$. In the limit of infinite lattice size the model is known analytically to exhibit a phase transition at $T_c = 2 / \ln(1 + \sqrt{2})\simeq 2.269$ \cite{kramers1941statistics}. We choose the range $0\leq T \leq 4$, thus $\lambda_0 =T/4$.

\subsubsection{IsNNN400: Ising model with next-nearest-neighbor interactions}
\label{ass:ising-nnn-400}
Similarly to Ising400, IsNNN400 is an Ising model on a square $20\times 20$
lattice. This time, however, the couplings are antiferromagnetic and include
interactions along diagonals of the squares of the lattice. We denote the lattice by $\Lambda$,
the set of vertical and horizontal edges by $\NN(\Lambda)$, and the set of diagonal
edges by $\NNN(\Lambda)$. The Hamiltonian is given by
\begin{equation}
  H = J_1 \; \sum_{\mathclap{(i, j)\in \NN(\Lambda)}}\; Z_i Z_j + J_2 \; \sum_{\mathclap{(i,j) \in \NNN(\Lambda)}}\; Z_i Z_j.
\end{equation}
This Hamiltonian family was previously analyzed in Ref.~\cite{kalz2011analysis}, where Fig.~2 contains two glued phase diagrams. We pick a similar region in
the parameter space as the one displayed in Ref.~\cite[Fig.~2]{kalz2011analysis} with slight
adjustments to ensure that our axes are $\lambda_0, \lambda_1 \in (0, 1]$ and that $T, J_1, J_2$ depend analytically on $\lambda_0$ and $\lambda_1$:
we pick $J_1 = \lambda_0$, $J_2 = 1 - \lambda_0$, $T = 2.5 \lambda_1$.
Alternatively, since the Gibbs distribution depends only on $H/T$, one can interpret
the same phase diagram as $T = 1$, $J_1 = 0.4 \lambda_0 / \lambda_1$,
$J_2 = 0.4 (1 - \lambda_0) / \lambda_1$.

\subsubsection{FIL24: 2D version of the frustrated Ising ladder}

The frustrated Ising ladder is a simple model that exhibits a first-order QPT~\cite{laumann2012quantum}. The 1D model has a Hamiltonian given by
\begin{equation}
\label{eq:Hladder2}
H = (1-s) H_0 + s H_1,
\end{equation}
where
\bes
\begin{align}
\label{eq:Hladder0}
H_0 &= -\sum_{i=0}^{L-1} (X_{T_i} + X_{B_i})\\
\label{eq:Hladder1}
H_1 &= \sum_{i=0}^{L-1} \big(K Z_{T_i} Z_{T_{i+1}} - K Z_{T_i} Z_{B_i} - K Z_{B_i} Z_{B_{i+1}}\notag \\
&\qquad - K Z_{T_i} + \frac{U}{2} Z_{B_i}\big),
\end{align}
\ees
where $T_L = T_0$, $B_L = B_0$. This Hamiltonian can be identified with the lattice in \cref{fig:FIL1D.lattice}.

Similarly to the 1D version above, the 2D version that we study in this work has a top layer and a bottom layer. In FIL24 each layer is represented by the same
12-site lattice (see~\cref{fig:twelve-sites}) as used in Hubbard12
with NN edges of the top layer used in anti-ferromagnetic couplings and NN edges of the bottom layer used in ferromagnetic couplings. The NNN edges of the 12-site lattice are not used.
The Hamiltonian is given by \cref{eq:Hladder2} with
\bes
\begin{align}
\label{eq:FIL24.H0}
H_0 &= -\sum_{i\in\Lambda} (X_{T_i} + X_{B_i})\\
\label{eq:FIL24.H1}
H_1 &=
\sum_{i\in\Lambda}\left(
\frac{U}{2} Z_{B_i} - K Z_{T_i} Z_{B_i} - K Z_{T_i}\right)\notag \\
&\quad + \frac{K}{2}\sum_{(i,j)\in\text{NN}(\Lambda)} \left(
Z_{T_i} Z_{T_{j}}  - Z_{B_i} Z_{B_{j}}\right) .
\end{align}
\ees
The main difference relative to \cref{eq:Hladder0,eq:Hladder1}
is the factor of $1/2$ in the second term of \cref{eq:FIL24.H1}, which is
included to balance the fact that the ``horizontal'' degree of each site is 4 instead of 2 in the 1D frustrated Ising ladder.
For FIL24 we select $\lambda_0=s, \lambda_1=U$ and fix $K=1$.

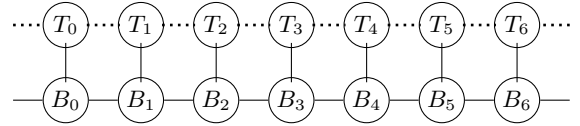
\begin{figure} 
\centering
\begin{tikzpicture}
\foreach \x in {0,...,6}{
\draw (\x,0) circle (3mm) node (n0x\x){$T_{\x}$};
\draw (\x,-1) circle (3mm) node (n1x\x){$B_{\x}$};
\draw (n0x\x) -- (n1x\x);
}
\foreach \x[count=\xnext from 1] in {0,...,5}{
\draw[dotted,line width=1pt] (n0x\x) -- (n0x\xnext);
\draw (n1x\x) -- (n1x\xnext);
}
\draw[dotted,line width=1pt] (-0.7,0) -- (n0x0);
\draw (-0.7,-1) -- (n1x0);
\draw[dotted,line width=1pt] (6.7,0) -- (n0x6);
\draw (6.7,-1) -- (n1x6);
\end{tikzpicture}
\caption{$L=7$ lattice for the 1D frustrated Ising ladder Hamiltonian family. Solid lines represent ferromagnetic couplings, dotted lines represent antiferromagnetic couplings.}
\label{fig:FIL1D.lattice}
\end{figure}

\subsubsection{Hubbard12: Hubbard Hamiltonian family on a 12-site lattice at half-filling}

\begin{figure} 
\centering
\begin{tikzpicture}
\def\dx{1cm}
\def\dy{1cm}
\clip (-3 * \dx / 2,7 * \dy/6) rectangle (7*\dx/2, -29*\dy/6);
\newcommand{\calcX}[1]{mod(#1, 3) * \dx}
\newcommand{\calcY}[1]{-#1 * \dy / 3}

\foreach \j in {0, 1, ..., 55} {
\pgfmathtruncatemacro{\jx}{mod(\j,7)}
\pgfmathtruncatemacro{\jy}{div(\j,7)}
\pgfmathtruncatemacro{\label}{mod(4 + \jx + 3 * \jy, 12)}
\tikzset{sitestyle/.style={circle, draw, minimum size=0.6cm, inner sep=0pt}}
\node[sitestyle] (n-\j) %
at ({(\jx - 2) * \dx}, {-(\j-16) * \dy / 3 + (\jy-2) * 4 * \dy / 3}) {\label};
}

\foreach \jx in {0, 1, ..., 5} {
\foreach \jy in {0, 1, ..., 7} {
\pgfmathtruncatemacro{\j}{\jx + 7 * \jy}
\pgfmathtruncatemacro{\nextj}{\j + 1}
\draw[blue] (n-\j) -- (n-\nextj);
}
}

\foreach \jx in {0, 1, ..., 6} {
\foreach \jy in {0, 1, ..., 6} {
\pgfmathtruncatemacro{\j}{\jx + 7 * \jy}
\pgfmathtruncatemacro{\nextj}{\j + 7}
\draw[blue] (n-\j) -- (n-\nextj);
}
}

\foreach \jx in {0, 1, ..., 5} {
\foreach \jy in {0, 1, ..., 6} {
\pgfmathtruncatemacro{\jnw}{\jx + 7 * \jy}
\pgfmathtruncatemacro{\jne}{\jnw + 1}
\pgfmathtruncatemacro{\jsw}{\jnw + 7}
\pgfmathtruncatemacro{\jse}{\jnw + 8}
\draw[red] (n-\jnw) -- (n-\jse);
\draw[red] (n-\jne) -- (n-\jsw);
}
}
\fill[white, opacity=0.7, even odd rule] %
(current bounding box.south west) %
rectangle (current bounding box.north east) %
(-\dx/2, 2*\dy/3) -- (5*\dx/2, -\dy/3) -- (5*\dx/2, -13*\dy/3) %
-- (-\dx/2, -10*\dy/3) -- cycle;
\end{tikzpicture}
\caption{12-site lattice $\Lambda$ with two edge types: NN edges (blue) and NNN edges (red).
NN edges connect node $j$ with nodes $(j\pm 1) \bmod 12$ and $(j\pm 3) \bmod 12$.
NNN edges connect node $j$ with nodes $(j\pm 2) \bmod 12$ and $(j\pm 4) \bmod 12$.}
\label{fig:twelve-sites}
\end{figure}

The Hubbard model exhibits a transition from a normal to a superconducting phase, usually represented on a phase diagram of temperature and doping~\cite{gull2013superconductivity}. It is a fermionic model on a 2D lattice, with doping roughly equivalent to deviations of the density of fermions from half-filling. The dataset we construct corresponds to a single point on the phase diagram: zero temperature and zero doping. We explore different phases by varying nearest-neighbor and next-nearest-neighbor hopping. While phases are usually defined in the thermodynamic limit, we restrict ourselves to a numerically tractable 12-site lattice. The lattice is described in \cref{fig:twelve-sites}. The reason we chose the boundary condition as shown is so that the lattice is bipartite (which would not have been the case for the regular boundary conditions on a $3\times 4$ square lattice).
The Hubbard Hamiltonian family acts on a Hilbert space describing configurations
of electrons on this grid with two possible spin values:
$\sigma \in \{\uparrow,\downarrow\}$.
The Hubbard Hamiltonian is given by
\begin{equation}
\label{eq:Hubbard12.H}
H = u H_{\text{int}} + t H_{\text{NN}} + t' H_{\text{NNN}},
\end{equation}
where
\bes
\begin{align}
H_{\text{int}} &= \sum_{i\in\Lambda} (n_{j\uparrow}-1/2) (n_{j\downarrow}-1/2)\\
H_{\text{NN}} &= -\;\sum_{\mathclap{(i,j)\in\text{NN}(\Lambda),\sigma}}\;c_{i\sigma}^{\dagger}c_{j\sigma},
\quad H_{\text{NNN}} = -\;\sum_{\mathclap{(i,j)\in\text{NNN}(\Lambda),\sigma}}\;c_{i\sigma}^{\dagger}c_{j\sigma},
\end{align}
\ees
where $n_{j\sigma} = c_{j\sigma}^{\dagger} c_{j\sigma}$ is the number operator.
Here $\NN(\Lambda)$ is the set of all pairs $(i,j)$ of lattice
sites connected via nearest-neighbor (NN) edges: e.g., both $(0, 1)$ and $(1, 0)$ are in $\NN(\Lambda)$. Similarly, $\NNN(\Lambda)$ is the set of
pairs of lattice sites connected via next-nearest-neighbor (NNN) edges.

We note that the ground state is not changed when the Hamiltonian is multiplied by a positive real number.
We parameterize the region $u \geq 0$, $t \geq 0$, $t' \geq 0$ (up to overall scaling; the sign of $t$ can be freely changed by redefining $c$ on one of the sublattices)
by two real parameters:
\begin{equation}
\label{eq:lambda01}
\lambda_0 = \frac{16 t}{u + 16 t}, \quad \lambda_1 = \frac{16t'}{u + 16t'}.
\end{equation}
The coefficient $16$ is intended to place $u$ and $t$ on the same
scale: $H_{\textnormal{int}}$ has a single term per site
with absolute value at most $1/4$, while $H_{\textnormal{NN}}$ and $H_{\textnormal{NNN}}$ each have four terms per site. Hence, in \cref{eq:lambda01} we multiply the parameters $u, t, t'$ by the same coefficients as a rough estimate of
the scale of the corresponding Hamiltonian terms in
\cref{eq:Hubbard12.H}. When converting $\vlambda$ to $(u, t, t')$
we select $u, t, t'$ satisfying \cref{eq:lambda01} and the normalization
condition $u/4 + 4(t+t') = 1$.

\subsubsection{Kitaev20: Kitaev chain} 

The Kitaev chain is a 1D model defined in terms of fermionic creation and annihilation operators~\cite{Kitaev:2001aa}. It exhibits a TQPT and Majorana modes. The general Hamiltonian is given by
\begin{align}
\hat{H} = -t\sum_{i=0}^{L-1} \left(  \hat{c}_{i+1}^{\dagger}\hat{c}_{i}
+ \hat{c}_{i+1}\hat{c}_{i} + h.c. \right)
- \mu\sum_{i=0}^{L-1} \hat{c}_{i}^{\dagger}\hat{c}_{i}, \label{eq:KitaevH}
\end{align}
where $t>0$ controls the hopping and the pairing of spinless fermions alike and $\mu$ is a chemical potential. The ground state of this model has a quantum phase transition from a topologically trivial ($|\mu|>2t$) to a nontrivial state ($|\mu|<2t$) as the chemical potential $\mu$ is tuned across $\mu = \pm2t$. This model was studied in Ref.~\cite{vanNieuwenburg2016LearningPT} for $L = 20$. We likewise choose $L = 20$, which defines the Kitaev20 model. We set $t=1$ and investigate the range $-4 \leq \mu \leq 4$ using 20,000 grid points.

\subsubsection{XXZ300: Bond-alternating XXZ model}
\label{ss:XXZdef}
The bond-alternating XXZ model is a 1D spin chain studied in Ref.~\cite{Elben} that is known to exhibit a TQPT. A 300 qubit instantiation of this model was considered in Ref.~\cite{Huang:22} with the following Hamiltonian:
\begin{equation}
\label{eq:XXZ-H0}
\bal
H_0 &= 0.1 Z_0 + \sum_{i=1: \text{odd}}^{299} (X_i X_{i-1} + Y_i Y_{i-1} + \delta Z_i Z_{i-1}) \\
&\quad + \sum_{i=2:\text{even}}^{298} J' (X_i X_{i-1} + Y_i Y_{i-1} + \delta Z_i Z_{i-1}).
\eal
\end{equation}
This is known as bond-alternating since the interaction strength alternates between $1$ and $J'$. The first term is not part of the XXZ model and is included to break the ground state degeneracy. However, we found that when using DMRG, in some regions of the parameter space \cref{eq:XXZ-H0} allows for the energies of the ground and first excited state to be indistinguishable despite this first term.
In order to fix this, we work with the Hamiltonian
$H = H_0 + 2 \cdot 10^{-8} (\sum Z_i)^2$, where
the additional term is added so that DMRG avoids these degeneracies.
We call the dataset we generated from this Hamiltonian XXZ300. We choose $\lambda_0 =\frac{2}{5}J',~\lambda_1 = \frac{2}{7}\delta$ to lie in the interval $(0,1)$ to match Ref.~\cite[Fig.~4B]{Huang:22}.

\subsection{Choice of sample size}

For convenience, we rescaled the physical parameters in most datasets, so that $\lambda_i \in [0, 1]$ in most datasets
(see column $\mcM$ of \cref{tab:datasets}
for the exact range for each dataset).
The Ising400 and Kitaev20 models are defined by a single parameter $\lambda_0$, whereas FIL24, Hubbard12, IsNNN400 and XXZ300 are defined by two: $\lambda_0$ and $\lambda_1$. We note that for the Ising400 and IsNNN400 datasets, which arise from classical models, $\lambda_0$ is proportional to the temperature, but for the other models the temperature is set to zero. Temperature is one of the essential differences between classical and quantum phase transitions, but ClassiFIM is insensitive to the meaning of these parameters.

For each dataset, we (i) computed the ground truth FIM and (ii) generated 140 samples for a $64 \times 64$ grid of $\vlambda$ values (or a line of $\abs{\mcM'}$ values for Ising400 and Kitaev20) and split the resulting dataset $\mcD$ into $90\%$ $\mcDtrain$ and $10\%$ $\mcDtest$. For Ising400 and IsNNN400, directly sampling the classical Gibbs state using MCMC is sufficient for both (i) and (ii) (see \cref{as:Ising}). For the remaining (quantum) models, we first needed to prepare the ground state of $H_n(\vlambda)$.
For FIL24 and Hubbard12, we employed the general-purpose Lanczos-Arnoldi method with several nontrivial system-specific modifications (see \cref{ass:fill-hubbard-datasetgen}). For Kitaev20, we used the inverse fermionic Fourier transform to convert the known analytical ground state (in Fourier space) back into real space (see \cref{ass:kit20-dataset-gen}).  For XXZ300, we employed a DMRG scheme similar to Ref.~\cite{Huang:22} but with several modifications to more reliably sample the ground state (see \cref{ass:xxz300-dataset-gen}).

After preparing the ground state with these methods, it is straightforward to efficiently estimate the ground truth FIM and sample measurements in the computational basis, i.e., the simultaneous eigenbasis of the Pauli operators $Z_i$. For this reason, we defaulted to computational basis samples, which, while not necessary (i.e., sampling in another basis would have worked equally well), are both backed by our theoretical findings and empirically sufficient; see \cref{sec:numerical-experiments}.

All datasets and methods for their generation are open-source and can be found in Ref.~\cite{public-datasets,ClassiFIM-gencode,ClassiFIM-code}.

\begin{figure*}[t]
\centering
\includegraphics[width=0.99\linewidth]{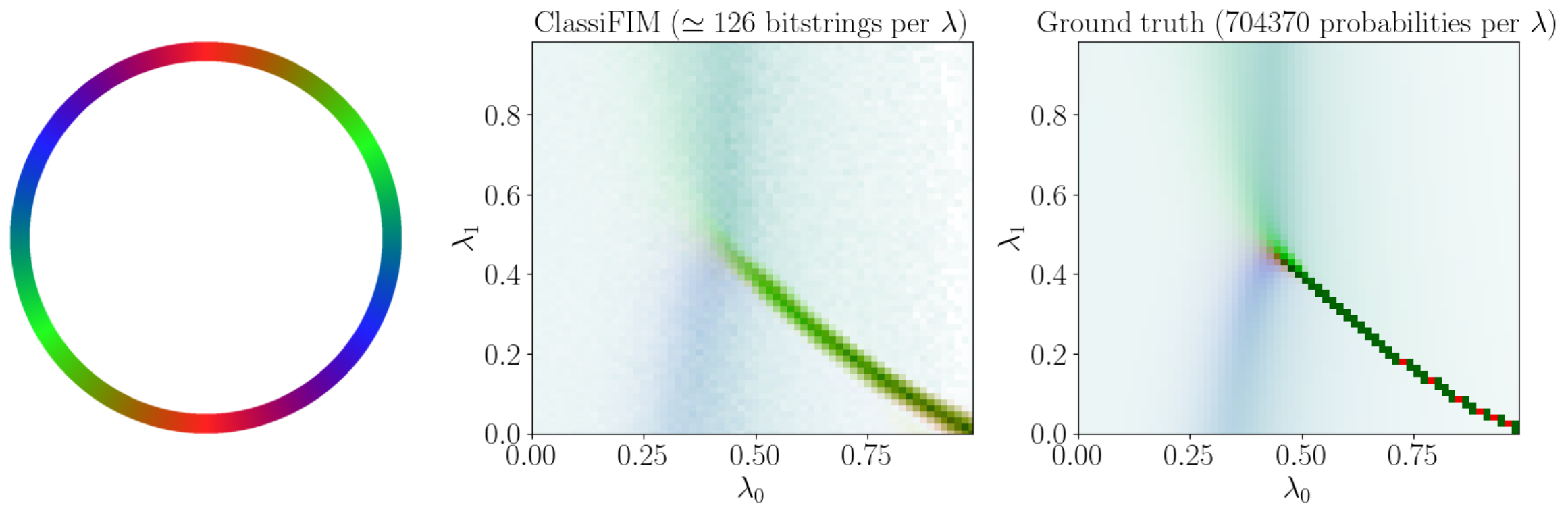}
\includegraphics[width=0.99\linewidth]{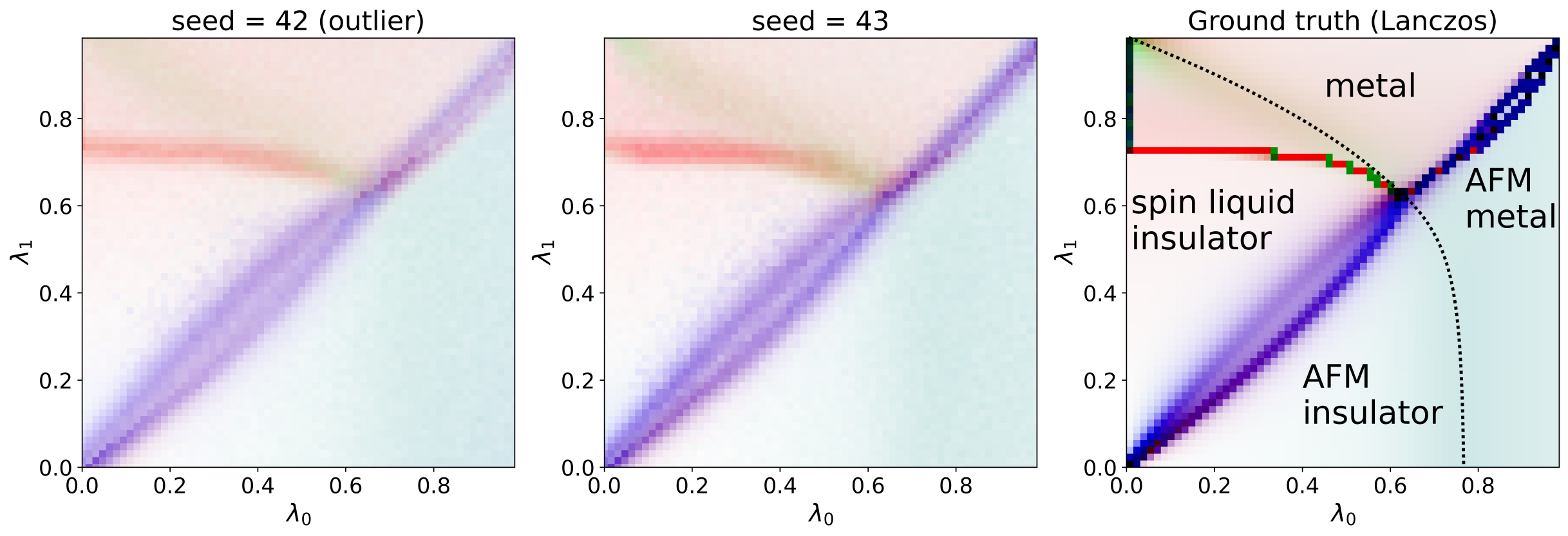}
\caption{Phase diagrams for FIL24 and Hubbard12. In general, darker colors correspond to higher values of the FIM. In all diagrams, the colored lines indicate locations with a higher FIM, delineating putative phase boundaries.\\
Top row: FIL24.
Left: The color ring represents the metrics $g_{\mu\nu}(\vlambda)$ and $\hat{g}_{\mu\nu}(\vlambda)$. Each color on the ring indicates how the rank-$1$ metric $g(\vlambda) = \vecv \vecv^T$ is colored, where $\vecv$ is a length $C = 54.0$ vector from the center of the circle to the point on the ring. Hue is determined by the direction of the vector $\vecv$. Black (not shown on the color ring) represents $g(\vlambda) \geq C^2 I$(where $I$ is the identity matrix), while white represents $g(\vlambda) = 0$. In general, darker colors on the diagrams correspond to higher values of the FIM. 
Middle: ClassiFIM prediction ($g_{\mu\nu}(\vlambda)$). 
Right: ground truth FIM ($\hat{g}_{\mu\nu}(\vlambda)$ as computed by Lanczos-Arnoldi diagonalization.\\
Bottom row: Hubbard12. 
Left and center:  ClassiFIM prediction of $g_{\mu\nu}(\vlambda)$ for two different random seeds. As indicated by the lighter colors, the diagram on the left represents an outlier, where an instability occurred during training, affecting the quality of the predictions. See also \cref{fig:supp-phase1D}.
Right: ground truth FIM ($\hat{g}_{\mu\nu}(\vlambda)$ as computed by Lanczos-Arnoldi diagonalization.
}
\label{fig:phase2d}
\end{figure*}

\begin{figure*}[t]
\centering
\includegraphics[width=0.495\linewidth]{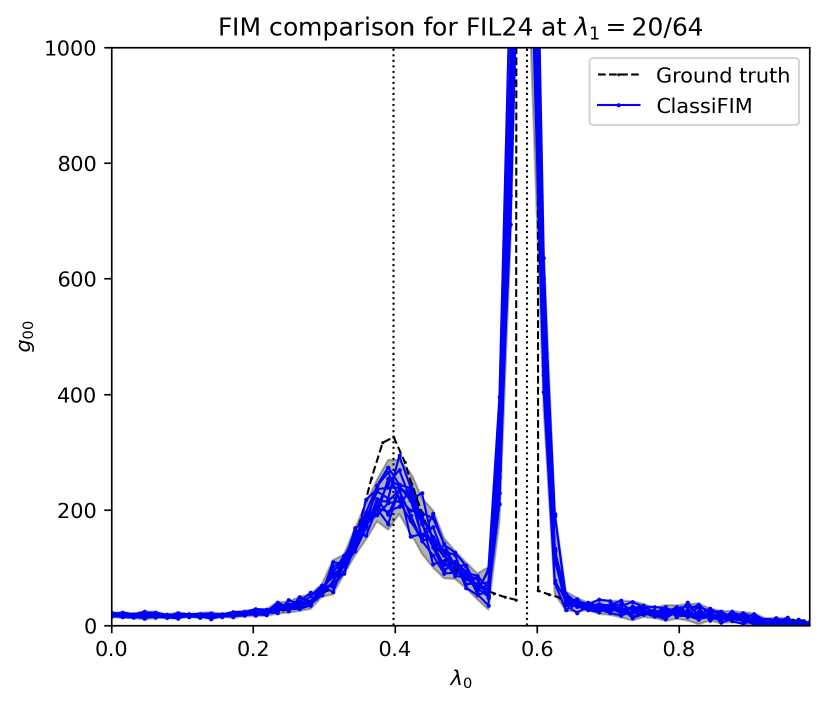}
\includegraphics[width=0.495\linewidth]{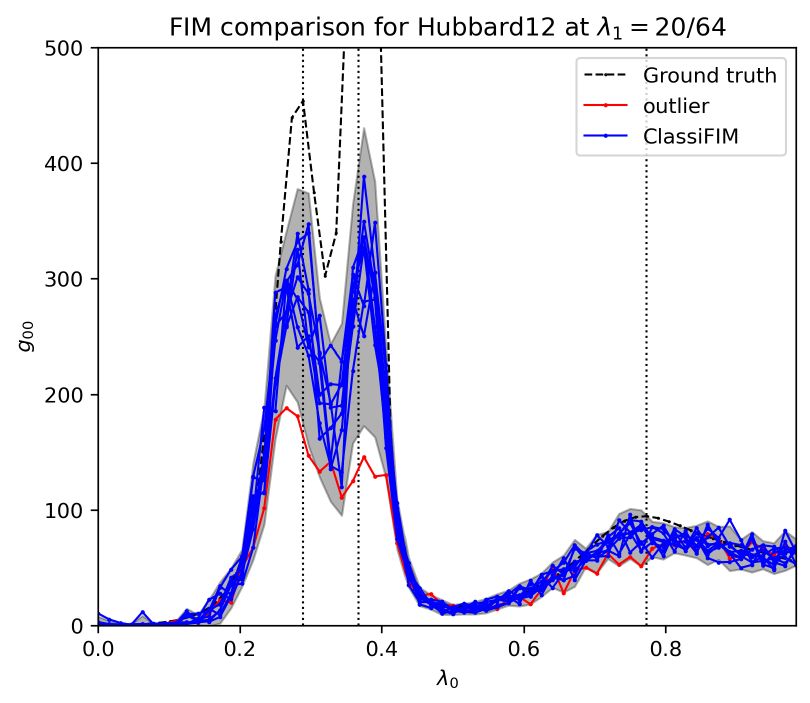}
\caption{1D sections of the phase diagrams from \cref{fig:phase2d}. Left: FIL24. Right: Hubbard12. Shaded regions correspond to two standard deviations. The peaks correspond to the locations of putative quantum phase transitions, and are in good agreement with the dotted vertical lines, which mark the locations of the ground truth peaks. Sharp peaks likely correspond to first-order phase transitions while the more subtle peaks may correspond to higher-order phase transitions.
The outlier curve in the right plot corresponds to seed=$42$, where an instability was encountered during training (see \cref{fig:phase2d}, bottom left). }
\label{fig:supp-phase1D}
\end{figure*}

\section{Numerical experiments}
\label{sec:numerical-experiments}

In this section, we demonstrate that ClassiFIM can successfully estimate ground truth FIM diagrams and that these diagrams correctly encode known physics.
In more detail, in this section, we discuss the physics of the ground truth FIM found for the systems used for the dataset generation in the previous section and report on the results of applying ClassiFIM along with the other methods discussed earlier to these datasets. To implement ClassiFIM, we designed a neural network (NN) as described in Ref.~\cite{ClassiFIM-ML}. 
For all datasets, the NN included a convolutional neural network (CNN)-like set of layers for processing input bitstrings according to the lattice geometry; e.g., for FIL24 and Hubbard12, the information could propagate along the edges of the 12-site lattice. The time in minutes we gave ClassiFIM to estimate the FIM for each dataset on a single NVIDIA GeForce GTX 1060 6GB GPU is given in \cref{tab:datasets}. These times were chosen as follows. First, we found that scaling the times linearly with the system size was sufficient. For example, since FIL24 was given 10 minutes, XXZ300 (with 12.5 times more qubits) was given 125 minutes. This estimate comes from machine learning intuition on the cost of training NN of various sizes, but it is not a rigorous complexity estimate. 
Second, for the QPT datasets, we gave ClassiFIM substantially less time than it took to generate the datasets and ground truth FIM (e.g., this took about 3000 CPU-hours for Hubbard12 and about 12000 CPU-hours for XXZ300; see \cref{ass:compute} for more details).

\subsection{ClassiFIM performance by phase diagrams for FIL24 and Hubbard12}
\label{sec:phase-diagrams}

A significant benefit of our method is the ability to generate visual FIM phase diagrams. To illustrate this, \cref{fig:phase2d} shows the FIM phase diagram generated by ClassiFIM compared to the ground truth computed by exact diagonalization (right column) for the FIL24 (top) and Hubbard12 (bottom) models. It is evident that ClassiFIM and ground truth are in close qualitative agreement, in particular regarding the phase boundaries. The FIL24 FIM phase diagram agrees with previously established phase diagrams; e.g., see Refs.~\cite[Fig.~2]{takada2021phase} and \cite[Fig.~3]{laumann2012quantum}, after changing coordinates to match their conventions. 

While first-order phase transitions correspond to singularities of the FIM in the thermodynamic limit and to sharp peaks of the FIM for finite-size systems (i.e., finite-size precursors of phase transitions), higher-order phase transitions --- corresponding to a discontinuity in higher than the first derivative of the (free or ground state) energy \cite{GU:2010vv,Sachdev:book} --- may manifest as subtle peaks for finite-size systems, which are more difficult to identify. This is what we observe in \cref{fig:supp-phase1D}, where sharp peaks correspond to first-order phase transitions, and more subtle, flatter peaks likely correspond to higher-order phase transitions. 

The FIL24 phase diagram in the top row of \cref{fig:phase2d} is consistent with earlier 1D results for a ladder geometry~\cite{laumann2012quantum} and indeed features both first and second-order phase transitions. This is seen most clearly in \cref{fig:supp-phase1D}, where the large (small) peak near $\lambda_0=0.6$ ($0.4$) corresponds to a first-order (second-order) QPT. We find that for this model, the phase structure of a two-layer 2D lattice is inherited from the 1D physics of the ladder. 

The bottom row of \cref{fig:phase2d} can be mapped to the $t-t'$ phase diagram of the Hubbard model found in Ref.~\cite[Fig.~1]{kyung2006mott}. We identify the bottom right ($t>t'$) phase with an unfrustrated antiferromagnet (AFM), while the two phases in the $t'>t$ regime contain frustration on sublattices connected by $t'$ terms. Mapping the phase diagram \cite{kyung2006mott} we identify the smaller $t'$ phase as a spin liquid and the higher $t'$ phase as a featureless metal. The minor features along the $t=t'$ line are likely boundary-condition dependent and do not directly correspond to an island of superconductivity found in Ref.~\cite{kyung2006mott}. Besides the transitions mentioned above, a barely visible peak (\cref{fig:supp-phase1D}, right) can be found in the FIM data going from the top left corner all the way into the unfrustrated phase, ending in a vertical line $\lambda_0\approx 0.77$. References \cite{turkowski2021one,arovas2022hubbard,gull2013superconductivity,tanaka2019metal,terletska2011quantum} suggest that the metal-insulator transition indeed continues into the AFM phase (lower triangle). Since it is a dynamical transition, we did not a priori expect it to be visible in static observables or in the quantum fidelity susceptibility. It is a surprising result that the signatures of the metal-insulator transition appear in our data, albeit weakly. 

We remark that a fixed amount of training data per $\vlambda$ ($126$ computational basis measurements) was sufficient to produce an accurate FIM phase diagram (and hence detect QPTs) for each dataset. This is especially surprising for XXZ300 (discussed in detail in \cref{sec:technical-comparison} below) which (i) exhibits a TQPT with no local order parameter and (ii) has a Hilbert space dimension of $2^{300}$. This is consistent with the result of Ref.~\cite{Huang:22} but less demanding as they used $500$ classical shadows per $\vlambda$ for the same model.

Finally, we note that we used the same neural network architecture and training procedure for FIL24 as for Hubbard12 without any additional fine-tuning. Despite representing completely different physical systems (24 spins and 12 fermions, respectively), this strategy yielded good ClassiFIM results for both FIL24 and Hubbard12. This suggests that the success of ClassiFIM does not rely on hyperparameter choices specific to a dataset, and the main work needed to adapt ClassiFIM to a new dataset is to adjust the neural network according to the format, size, and symmetries of the input samples.

\subsection{Comparison with Refs.~\cite{vanNieuwenburg2016LearningPT,Huang:22}}
\label{sec:technical-comparison}

\begin{figure*}
\includegraphics[width=0.99\linewidth]{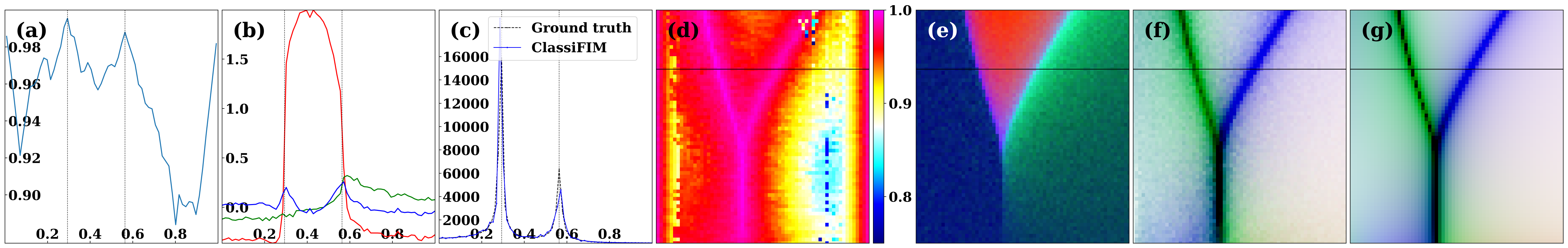}
\caption{Phase diagrams for the XXZ300 dataset, generated by W~\cite{vanNieuwenburg2016LearningPT} (a,d), SPCA~\cite{Huang:22} (b,e), and ClassiFIM (c,f). Panels (a)-(c) are horizontal slices through the 2D phase diagrams (d)-(f). Panel (g) depicts the ground truth FIM $g_{\mu \nu}$ obtained using DMRG. The vertical lines in (a-c) mark the positions of the ground truth peaks in $g_{00}$. In (a) and (d) the value between 0 and 1 is the W-shaped confusion~\cite{vanNieuwenburg2016LearningPT} that is expected to peak at the phase transition. In (b) and (e), the first three principal components obtained from classical shadows \cite{Huang:22} are plotted as the red, green, and blue curves in (b) and a color in RGB format in (e). Following Ref.~\cite{Huang:22}, we identify the phases in (e) as trivial (blue), symmetry-broken (red), and topological (green); thus the vertical phase transition line separating the blue and green phases is a TQPT.
The component $\hat{g}_{00}$ of the estimated FIM is plotted in panel (c), while the color scheme of panels (f) and (g) is the same as in \cref{fig:phase2d}. The phase diagrams in panels (d)-(g) depict the square $0\leq \lambda_0, \lambda_1 \leq 1$, where the rescaled parameters $\lambda_0 =\frac{2}{5}J',~\lambda_1 = \frac{2}{7}\delta$ are defined in \cref{ss:XXZdef}.
}
\label{fig:xxz-1x4}
\end{figure*}

We now compare ClassiFIM and the methods proposed
in Refs.~\cite{vanNieuwenburg2016LearningPT,Huang:22} while focusing on the XXZ300 dataset.
Various technical details are presented in \cref{as:peakrmse}.

First, we observe that the methods of Refs.~\cite{vanNieuwenburg2016LearningPT,Huang:22}
can be applied to our datasets (which contain measurements in a fixed basis):
Ref.~\cite{vanNieuwenburg2016LearningPT} already demonstrated how to apply
the Confusion Scheme they introduced, which we refer to as W,
to the bitstring data by applying it to the classical Ising model.
We discuss the application of the method of Ref.~\cite{Huang:22},
which we refer to as SPCA, to the bitstring data in \cref{as:spca-bitstring}.

To obtain a ground truth result for the XXZ300 dataset, we implemented DMRG,
which gives access to the state in matrix product state (MPS) form;
from here we can extract state overlaps and the fidelity susceptibility
in terms of finite differences (see \cref{ass:xxz300-dataset-gen}).
These procedures result in the phase diagrams shown in \cref{fig:xxz-1x4}.

The most notable aspect is that the phase diagrams \cref{fig:xxz-1x4}(d)-(g)
display a remarkable visual agreement.
Despite this, the phase diagrams have different meanings and cannot be quantitatively compared.
In particular, the W and SPCA plots do not have a rigorous quantitative connection to QPTs,
unlike the ClassiFIM phase diagram. A rigorous quantitative connection is possible using the methods of \cite{arnold2023machine,duy2022fisher}. The former presents a modification of the W method, which provides a lower bound for the FIM, and another method ($I_3$), which also produces an estimate of the FIM; the latter provides a method for estimating the FIM. A comparison with these methods is beyond the scope of this work.

However, all three methods (W, SPCA, and ClassiFIM) are designed to have a phase boundary
at the locations of phase transitions, as is visible for W, SPCA,
and ClassiFIM in \cref{fig:xxz-1x4}(a)-(c), respectively.
Since the peaks in the FIM are known to correspond to an exact phase boundary,
we can use \cref{fig:xxz-1x4}(g) as the ground truth.
We can then compare the peak locations of each method with the ground truth.

In \cref{as:peakrmse} we introduce a PeakRMSE metric that measures the ability of
the methods to predict the locations of phase boundaries
(which appear as peaks in the FIM and W plots).
The computation of this metric requires modifications to all methods,
so that all of them output a list of guesses
of the locations of phase transitions for each 1D slice of the phase diagram
within a given computational budget using the $\mcDtrain$ dataset.
Such modifications were performed in \cite[Sec.~5 and App.~F]{ClassiFIM-ML} for
all three methods. We refer to these modified implementations
as mod-W, mod-SPCA, and mod-ClassiFIM.
The resulting PeakRMSE values are shown in \cref{tab:comparison};
the main finding is that mod-ClassiFIM and mod-SPCA
emerge as the most accurate methods overall.

\begin{table}[t]
  \centering
  \begin{tabular}{llllllll}
    \toprule
    Statistical Manifold  & Method        & PeakRMSE \\
    \midrule
    IsNNN400 & mod-ClassiFIM & $\mathbf{0.017}(4)$\\
             & mod-SPCA      & $\mathbf{0.016}(1)$\\
             & mod-W         & $0.035(7)$ \\
    \midrule
    Hubbard12& mod-ClassiFIM & $\mathbf{0.035}(7)$\\
             & mod-SPCA      & $\mathbf{0.039}(1)$\\
             & mod-W         & $0.076(2)$ \\
   \midrule
    FIL24    & mod-ClassiFIM & $0.028(8)$ \\
             & mod-SPCA      & $\mathbf{0.017}(1)$\\
             & mod-W         & $0.026(4)$\\
    \midrule
    XXZ300   & mod-ClassiFIM & $\mathbf{0.0078}(5)$\\
             & mod-SPCA      & $0.012(2)$\\
             & mod-W         & $0.011(3)$\\
    \bottomrule
  \end{tabular}
  \caption{Comparison with two other methods using peak position accuracy (RMSE). Statistically significant best performance is in bold. Methods mod-ClassiFIM and mod-SPCA emerge as the most accurate among those compared here. With the exception of FIL24, ClassiFIM captures the peak locations on par with or better than mod-W or mod-SPCA.}
  \label{tab:comparison}
\end{table}

\section{Summary and Outlook}
\label{as:limits-and-future}

In this work, we addressed the problem of detecting quantum or classical phase transitions via unsupervised machine learning. We chose to focus on the Fisher information metric due to its relationship to fidelity susceptibility and utilized ClassiFIM~\cite{ClassiFIM-ML}: an ML algorithm that takes classical data as input. This data can be generated numerically or experimentally using a quantum processor. To justify the use of classical data for detecting quantum phase transitions, we established a number of formal results relating the classical and quantum fidelity susceptibilities. Along the way, we also relaxed the differentiability conditions assumed previously to be needed in order for the fidelity susceptibility to be well-defined.

Using a number of models exhibiting quantum (including topological) and classical phase transitions, we demonstrated that ClassiFIM generates accurate phase diagrams and correctly predicts the locations of phase boundaries based on significantly smaller datasets than those required by leading alternative methods, in particular the W and SPCA methods in their original, unmodified form~\cite{vanNieuwenburg2016LearningPT,Huang:22}. The fact that ClassiFIM achieves this based on purely classical data is particularly noteworthy; the datasets required for the W and SPCA methods are significantly more complex to obtain. Moreover, the theorems presented here and in the companion paper~\cite{ClassiFIM-ML} establish rigorous connections to phase transitions, which is not guaranteed for the W and SPCA methods (but is guaranteed, e.g., for the modification of the W method described in Ref.~\cite{arnold2023machine}).

Next, we briefly discuss the limitations of ClassiFIM and directions for future work.

The first and most obvious limitation is related to the fact that the FIM peak positions are known in some cases not to match those of phase transitions in the thermodynamic limit accurately. In such cases, even if the FIM estimates generated by ClassiFIM are accurate, the FIM peak positions are similarly not guaranteed to detect a phase transition accurately.

Second, we considered a neural network with local features. If two distributions are only globally distinguishable, it is possible that ClassiFIM would fail to estimate the underlying FIM, despite its success in the two TQPT cases we studied (Kitaev20 and XXZ300).

A third limitation is that we did not rigorously study the resource scaling of ClassiFIM. In particular, we observed that, regardless of the model, a fixed number of samples per $\vlambda$ was sufficient for ClassiFIM to estimate the FIM accurately. Since the models ranged from $24$ to $300$ qubits, this suggests that the number of samples needed does not grow rapidly with system size, but we currently do not have a rigorous proof or general guarantee. Similarly, we observed that scaling the training time linearly with the size of the samples (i.e., the number of qubits) is sufficient, but do not have a general guarantee of this either.

Future work may address the following three goals. First, ClassiFIM's capacity to characterize QPTs needs to be extrapolated to the thermodynamic limit, which can be done via finite-size scaling. Second, it is desirable to characterize the application of ClassiFIM to data obtained from an actual quantum processor. To do so meaningfully will require hundreds or even thousands of qubits. This may soon be possible with recent progress using quantum annealers and optical lattices. Finally, it remains to be determined whether ClassiFIM is competitive with other methods for solving the FIM estimation task, in particular Refs.~\cite{arnold2023machine, duy2022fisher}.

\begin{acknowledgments}
This material is based upon work supported by the Defense Advanced Research Projects Agency (DARPA) under Agreement No. HR00112190071 and Agreement No. HR001122C0063. NE was supported by the U.S. Department of Energy
(DOE) Computational Science Graduate Fellowship under
Award Number DE-SC0020347. We thank Itay Hen, Utkarsh Mishra, and Lalit Gupta for many useful discussions.
\end{acknowledgments}

\appendix
\section*{Appendix}

Here, we provide additional details and proofs in support of the main text.
We also provide access to the code and data in support of the experiments described in the main text;
see \cite{public-datasets,ClassiFIM-gencode,ClassiFIM-code} for details on how to use the code and download
the data.

\section{Basic quantum background and notation}
\label{as:background}

This section provides a brief introduction to the quantum terminology used throughout our paper for the benefit of readers from different communities.

\subsection{Fidelity}
\label{ass:fidelity}

We use Dirac bra-ket notation and
refer to elements of the Hilbert space with a unit norm, i.e., 
$\norm{\ket{\psi}} = \bk{\psi}{\psi} = 1$, as pure states. The notation
$\bk{\phi}{\psi}$ denotes the inner product between $\ket{\phi}$ and
$\ket{\psi}$, where the bra vector $\bra{\phi}$ is the complex-conjugate
transpose of the ket vector $\ket{\phi}$. Two states $\ket{\psi}$ and
$\ket{\phi}$ are equivalent if they differ only by a global phase, i.e.,
$\ket{\phi} = e^{i\phi} \ket{\psi}$ for some $\phi \in \mathbb{R}$. One can measure the
similarity between two pure states $\ket{\phi}$ and $\ket{\psi}$ by their fidelity, defined as
\begin{equation}
\label{eq:def.F.pure}
F(\ket{\phi}\!, \ket{\psi}) = \abs{\bk{\phi}{\psi}}.
\end{equation}
This measure is always within $[0,1]$ and is unaffected by any global phase difference.

Quantum states, more generally, are density operators (or density matrices) $\rho$, defined as unit-trace, positive semi-definite operators acting on the Hilbert space $\mathcal{H}$. In this formulation, pure states are rank-$1$ projectors, i.e., the set of density matrices of the outer product form $\rho = \ketbra{\phi}$.

The classical fidelity between two discrete probability distributions $p$ and $q$ is defined as:
\begin{equation}
\label{eq:def.Fc}
  F_c(p, q) = \sum_z \sqrt{p_z q_z}.
\end{equation}
The quantum generalization is the Uhlmann fidelity, which also generalizes \cref{eq:def.F.pure}
for the fidelity between two pure states to the fidelity between two mixed states:
\begin{equation}
\label{eq:def.F.mixed}
  F(\rho, \sigma) \equiv \|\sqrt{\rho}\sqrt{\sigma}\|_1 =\Tr\left(\sqrt{\sqrt{\sigma}\rho\sqrt{\sigma}}\right),
\end{equation}
where $\|\cdot\|_1$ denotes the trace norm, i.e., the sum of the singular values.
Uhlmann's theorem~\cite{uhlmann1976transition}
(see also~\cite[Theorem 9.4]{nielsen-chuang-2010})
states that for any two mixed states $\rho$ and $\sigma$ one has
\begin{equation}
\label{eq:uhlmann}
  F(\rho, \sigma) = \max_{\ket{\psi}, \ket{\phi}} \left| \braket{\psi|\phi} \right|,
\end{equation}
where the maximum is taken over all purifications $\ket{\psi}$ and $\ket{\phi}$
on $\mathcal{H}\otimes\mathcal{H}$. In particular, it follows that
\cref{eq:def.F.mixed} is symmetric in $\rho$ and $\sigma$ and
$F(\rho, \sigma) \leq 1$. For a detailed discussion of the properties of the fidelity
see~\cite[Ch.~13]{bengtsson2006geometry}.

\cref{eq:def.F.pure,eq:def.Fc,eq:def.F.mixed} are consistent in the following sense. 
\begin{enumerate}
\item If $\rho = \ketbra{\psi}$ and $\sigma = \ketbra{\phi}$ are pure states,
then $F(\rho, \sigma) = F(\ket{\psi}, \ket{\phi})$.
\item If $\rho = \diag(p)$ and $\sigma=\diag(q)$ are diagonal in the same basis,
then $F(\rho, \sigma) = F_c(p, q)$.
\item If $\ket{\psi} = \sum_x \sqrt{p_x} \ket{x}$ and $\ket{\phi} = \sum_x \sqrt{q_x} \ket{x}$ have non-negative amplitudes in the same basis, then $F(\ket{\psi}, \ket{\phi}) = F_c(p, q)$.
\item In general, if $p_x = \rho_{xx}$, $q_x = \sigma_{xx}$ (in some basis $\{\ket{x}\}$),
then $0 \leq F(\rho, \sigma) \leq F_c(p, q) \leq 1$, $F(\rho, \sigma) = 1$ iff $\rho = \sigma$, and $F(\rho, \sigma) = 0$ iff $\rho\sigma = 0$.
\end{enumerate}

\subsection{Hamiltonians and observables}

A quantum system is characterized by its Hamiltonian $H_\lambda$:
a $d\times d$ Hermitian matrix depending on a parameter or set of parameters $\lambda$.
Most of the systems we consider in this work contain $n$ degrees of freedom that are either fermions or spin-$1/2$ particles (qubits).
Each of the $n$ orbitals or spins is characterized by a binary variable:
an orbital occupation number $c_i^\dag c_i$ (where $c_i$ and $c_i^\dagger$ are fermionic annihilation and creation operators, respectively)
or the eigenvalue of a spin operator, e.g., $\sigma^z_i$ (one of the three Pauli matrices).
Here, $c_i^\dag c_i$ and $\sigma^z_i$ are also represented by $d\times d$ Hermitian matrices, with $d = 2^n$,
and the notation implies a tensor product with the identity matrix on the remaining $n-1$ operators with index not equal to $i$.
The ground state of the Hamiltonian $H$ is its eigenvector $|\phi\rangle =\sum_z \phi_z |z\rangle$ with the lowest eigenvalue.
Here we have chosen the ``computational basis'' $\{|z\rangle\}$ of the simultaneous eigenvectors of all the $c_i^\dag c_i$ or $\sigma^z_i$. The vector components $\{\phi_z\}$ are called ``probability amplitudes'', or just ``amplitudes'', and the set $\{|\phi_z|^2\}$ forms a probability distribution with normalization $\sum_z |\phi_z|^2=1$.

An observable $O$ can be any $d\times d$ Hermitian matrix. Its expectation value is $\langle \phi| O |\phi\rangle =\sum_{z,z'} \phi_z^* O_{zz'}\phi_z$. The matrix components $O_{zz'} = \bra{z}O\ket{z'}$ are the ``matrix elements'' of $O$ in the basis $\{|z\rangle\}$, i.e., $O = \sum_{z,z'} O_{zz'} \ket{z}\!\!\bra{z'}$.

\subsection{Quantum phase transitions}

The textbook definition of a phase transition~\cite{Sachdev:book} concerns a Hamiltonian with a simple analytic dependence on a parameter $\lambda$, e.g., $H = H_0 + \lambda H_1$.
A QPT occurs if the ground state energy is non-analytic at some point $\lambda_c$.
A first-order QPT is characterized by a finite discontinuity in the first derivative of the ground state energy with respect to $\lambda$, in the thermodynamic limit $n\to\infty$.
A second-order (or continuous) QPT is similarly characterized by a finite discontinuity, or divergence, in the second derivative of the ground state energy, assuming the first derivative is continuous.
These characterizations are the zero temperature limits of the classical definition of the corresponding phase transitions, given in terms of the free energy.
Finding the ground state of a quantum system is a hard problem in general, but there are various methods that can result in good approximations in many cases of practical interest.
Quantum annealing~\cite{Hauke:2019aa} or adiabatic quantum computing~\cite{Albash-Lidar:RMP} is one such method that can be used both in the setting of gate-model quantum computers and via specialized quantum annealers~\cite{king2018observation,b2020probing,Weinberg:2020aa,Nishimura2020,king2022coherent,King:22}.

An order parameter is a quantity that takes on different values in different phases and typically becomes non-zero when a system undergoes a QPT to an ordered phase.
It is a measure of the degree of order across the transition.
Thus, an alternative characterization of QPTs is that a QPT occurs when the order parameter of a quantum system becomes discontinuous or singular in the thermodynamic limit.
I.e., a QPT occurs when $\lim_{n\to\infty}\langle \phi| O|\phi\rangle$ or its derivatives, for some observable $O$ (whose expectation value is an order parameter) is discontinuous with $\lambda$.

For example, the transverse field Ising model $H =\sum_{i=1}^n  \lambda\sigma^x_i -\sigma^z_i \sigma^z_{i+1}$ undergoes a phase transition at $\lambda=1$:
the ground state is a ferromagnet for $\lambda <1$ with a non-zero order parameter $O =(\sum_i \sigma^z_i)^2$.
The square here indicates that we are interested in the absolute value of the magnetization $\sum_i \sigma^z_i$, while its sign selects between nearly degenerate ground states in the ferromagnetic phase, consisting of spins that are all aligned either up or down.
The magnetization is a \emph{local} order parameter, constructed as an average of local terms $\sigma^z_i$.
In the presence of a small symmetry-breaking perturbation $\epsilon \sigma^z_1$, the magnetization order parameter can be deduced by measuring a single spin locally and taking the limit $\epsilon \to 0$.

In contrast to the transverse field Ising model, topological phases of matter do not possess a local order parameter.
Instead, a non-local string-like operator needs to be used to distinguish between approximately degenerate ground states.
However, to detect a TQPT, it is often enough to follow the definition above and consider the ground state energy.
One of the prototypical examples of topological order---the toric code~\cite{Kitaev:97}---exhibits a TQPT in the presence of an external magnetic field $\sum_{i=1}^n  \lambda\sigma^x_i$.
The discontinuity at this TQPT was demonstrated in terms of both a non-local (string-like) order parameter~\cite[Fig.~4]{hamma2008entanglement} and a local quantity:
the second derivative of the ground state energy~\cite[Fig.~3(a)]{hamma2008entanglement}.
It is an open question whether there are realistic systems where a TQPT is fully hidden from local observables such as the energy.

Generally, the energy may not be accessible, the order parameter may not be known, or there are multiple competing order parameters.
This situation motivates our approach:
it suggests the need for generic indicators of phase transitions, such as the FIM derived from the probability distribution $p_z = \langle \phi| P_z|\phi\rangle$, where $P_z$ is a projector on the eigenvalues of $c_i^\dag c_i$ or $\sigma^z_i$ corresponding to the bitstring $z$ (not to be confused with the $z$-axis, the superscript in $\sigma^z$).

Finally, we note that another type of PTs, called dynamical phase transitions, can, in principle, be hidden from the ground state entirely, in terms of both local and global features, as they characterize the excitation spectrum, not just a single quantum state.
Nevertheless, as we demonstrate in our Hubbard model results in \cref{sec:phase-diagrams}, a textbook example of a dynamical phase transition---a metal-insulator transition \cite{kyung2006mott,turkowski2021one,arovas2022hubbard,gull2013superconductivity,tanaka2019metal,terletska2011quantum}---turns out to be accompanied by changes in the ground state itself.

In contrast to the thermodynamic limit above, the FIM-estimation task (\cref{sec:task}) is set in the context of a finite system of fixed size.
Such a finite-size system may have features hinting at the existence of a phase transitions:
e.g., peaks of the FIM at locations in the phase diagram close to where an infinite-size system experiences phase transitions.
Such features are called finite-size precursors of phase transitions,
but, for simplicity, throughout this work, we often refer to the locations of these finite-size precursors simply as phase transitions.

\section{Fisher Information Metric}
\label{as:fim}

We introduced the Fisher Information Metric (FIM) in \cref{eq:fim}.
Here we clarify the conditions needed for the FIM to be well-defined.

Given the score function formula $s_{\mu}(x;\vlambda) = \partial{\lambda_{\mu}} \log P_{\vlambda}(x)$ [\cref{eq:sff}], four conditions are naively required in order for the FIM $g_{\mu\nu}(\vlambda)$ to be well-defined:
(i) the sample space should form a disjoint union of one or more manifolds;
(ii) for each $\vlambda$ the probability distribution should be absolutely
continuous on each of those manifolds, so that the density function
$P_{\vlambda}(\bullet)$ is well-defined;
(iii) $P_{\vlambda}(x) \neq 0$ for all $x$;
(iv) $P_{\vlambda}(x)$ is differentiable with respect to $\vlambda$ for all $x$.
However, a more general treatment of the FIM is possible, which does not require
any of these conditions, at the cost of making the proofs more cumbersome. 

Assume that we have a manifold $\mcM$ of parameters $\vlambda$ and a measurable space
$(\Omega, \mathcal{F})$, where $\Omega$ is the set of all possible samples, and
$\mathcal{F}$ is a sigma-algebra on that space. Furthermore, assume that for every
$\vlambda\in\mcM$ we have a probability measure $\sigma_{\vlambda}$.
The classical fidelity (also known as the Bhattacharyya distance) between two probability measures $\sigma$ and $\sigma'$ is
defined as the following generalization of \cref{eq:def.Fc}:\footnote{Some sources define $F_c$ to be the square of r.h.s. of \cref{eq:Fc.def} instead, e.g., \cite[Def.~9.2.4]{wilde2011classical}.}
\begin{equation}
\label{eq:Fc.def}
F_c(\sigma, \sigma') = \int_{\Omega} \sqrt{d\sigma d\sigma'}.
\end{equation}
We note that $F_c$ is always well-defined, where the integral on the r.h.s. is defined using the Radon-Nikodym derivative with respect to any measure with respect to which both $\sigma$ and $\sigma'$ are absolutely continuous (e.g., $\sigma + \sigma'$). When $F_c(\sigma_{\vlambda_0}, \sigma_{\vlambda_0 + \vdlambda})$ has the form
\begin{equation}
\label{eq:Fc.Taylor}
F_c(\sigma_{\vlambda_0}, \sigma_{\vlambda_0 + \vdlambda}) =
1 + \frac{1}{8} \sum_{\mu,\nu} g_{\mu\nu}(\vlambda_0)
\delta\lambda_{\mu} \delta\lambda_{\nu}
+ o(\norm{\vdlambda}^2)
\end{equation}
for some $g_{\mu\nu}(\vlambda_0)$,
we say that the FIM is well-defined. In that case, we can always make it symmetric with respect to the interchange of indices $\mu$ and $\nu$ by replacing it with $(g_{\mu\nu}(\vlambda_0) + g_{\mu\nu}(\vlambda_0))/2$. The resulting  symmetric $g_{\mu\nu}(\vlambda_0)$ is the FIM.

In fact, there are multiple inequivalent choices for a more general treatment, that become equivalent when the four conditions above are all satisfied.
These choices are based on the fact that, as in \cref{eq:Fc.Taylor}, the FIM
coincides (under certain regularity conditions,
up to a constant factor) with the first non-trivial term in the
Taylor expansion of multiple measures of distance or similarity between probability
distributions, including the Kullback-Leibler
divergence, Jensen-Shannon divergence, and Bhattacharyya distance (our choice)~\cite[Sec.~3.5]{amari2016information}.

\section{Proofs of
\texorpdfstring{\cref{th:comput.basis} and \cref{th:avggc}}{Theorems \ref*{th:comput.basis} and \ref*{th:avggc}}}
\label{ass:two.theorems}

For the following two lemmas we fix an orthonormal basis $\{\ket{z}\}_{z\in\mcS}$
(the ``computational basis'')
for the finite-dimensional Hilbert space $\mcH$ and write pure states and operators
as vectors and matrices in that basis. $\mcM$ denotes a manifold.
\begin{mylemma}
  \label{lm:thm1.a}
  Let $H(\vlambda')$ be a continuously differentiable family of Hamiltonians
  with real matrix elements and a non-degenerate ground state
  defined in the neighborhood of $\vlambda \in \mcM$.
  Then its ground state can be chosen to have real amplitudes
  and to be continuously differentiable in some neighborhood of
  $\vlambda$.
\end{mylemma}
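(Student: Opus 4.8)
The plan is to build the ground state from the \emph{spectral projector}, which is a more robust object than the eigenvector itself. First I would observe that, since $H(\vlambda')$ has real matrix elements and is Hermitian, it is a real symmetric matrix for every $\vlambda'$; its eigenvalues are therefore real, and the non-degeneracy assumption means the ground-state energy $E_0(\vlambda')$ is a simple eigenvalue separated from the rest of the spectrum by a positive gap at $\vlambda' = \vlambda$. Because $H$ is continuous, this gap persists on some neighborhood $\mathcal{U}$ of $\vlambda$, so I may fix once and for all a contour $\Gamma$ in the complex plane, symmetric under complex conjugation, that encircles $E_0(\vlambda')$ and no other eigenvalue for all $\vlambda' \in \mathcal{U}$.

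Next I would introduce the Riesz (spectral) projector onto the ground state,
\begin{equation}
P_0(\vlambda') = \frac{1}{2\pi i}\oint_{\Gamma}\bigl(z - H(\vlambda')\bigr)^{-1}\,dz .
\end{equation}
Since $H$ is continuously differentiable and matrix inversion is smooth wherever the resolvent exists, $P_0(\vlambda')$ is a continuously differentiable rank-one projector on $\mathcal{U}$. The crucial additional point is that $P_0(\vlambda')$ is a \emph{real} matrix: it is the orthogonal projection onto the one-dimensional real ground eigenspace of the real symmetric matrix $H(\vlambda')$ (equivalently, the chosen symmetry of $\Gamma$ forces $\overline{P_0(\vlambda')} = P_0(\vlambda')$).

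To extract an eigenvector while avoiding the usual phase (here, sign) ambiguity, I would fix a reference vector $\ket{\phi}$ --- most conveniently a real unit ground state of $H(\vlambda)$ at the base point, which exists by the reality of $H(\vlambda)$ --- and set $\ket{\tilde\psi_0(\vlambda')} = P_0(\vlambda')\ket{\phi}$. This is continuously differentiable and real. At $\vlambda' = \vlambda$ it equals $\ket{\phi} \neq 0$, so by continuity it is nonzero on a possibly smaller neighborhood; normalizing,
\begin{equation}
\ket{\psi_0(\vlambda')} = \frac{\ket{\tilde\psi_0(\vlambda')}}{\bigl\|\,\ket{\tilde\psi_0(\vlambda')}\,\bigr\|},
\end{equation}
yields a continuously differentiable (the norm is $C^1$ and bounded away from zero) unit ground state with real amplitudes, as required.

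The step I expect to demand the most care is the sign/phase choice rather than the differentiability of $P_0$, which is standard analytic perturbation theory. A rank-one projector determines its eigenvector only up to a unit scalar, and there is no canonical global section; the projection-onto-reference construction sidesteps this precisely because the neighborhood can be taken small (a contractible coordinate ball) and $\ket{\phi}$ retains nonzero overlap throughout. The remaining things to verify are that the contour $\Gamma$ can indeed be chosen uniformly on $\mathcal{U}$, which follows from continuity of the spectrum, and that $P_0(\vlambda')$ is genuinely real, which follows from $H(\vlambda')$ being real symmetric.
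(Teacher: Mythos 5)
Your proof is correct, but it takes a genuinely different route from the paper's. The paper establishes the two properties separately and by elementary means: realness of the ground state at each fixed $\vlambda'$ follows from a variational argument (write $\ket{\psi} = \cos\theta\,\ket{\psi_0} + i\sin\theta\,\ket{\psi_1}$ with $\ket{\psi_0},\ket{\psi_1}$ real and observe that the energy expectation forces at least one of them to be a ground state), while continuous differentiability is obtained by applying the implicit function theorem first to $\det(H(\vlambda') - E_0(\vlambda')) = 0$ and then to the eigenvector equation augmented with the normalization constraint. You instead build everything at once from the Riesz projector $P_0(\vlambda') = \frac{1}{2\pi i}\oint_\Gamma (z - H(\vlambda'))^{-1}\,dz$ applied to a fixed real reference ground state at the base point. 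Your construction buys a cleaner, uniform treatment: the projector is manifestly $C^1$ by standard resolvent calculus, its reality follows directly from $H$ being real symmetric, and projecting a fixed reference vector resolves the sign ambiguity transparently --- whereas the paper has to argue separately (and somewhat delicately, as its closing remark acknowledges) that the implicit-function-theorem solution can be reconciled with the reality constraint without introducing sign discontinuities. The paper's argument, in exchange, avoids complex analysis entirely and the variational step for realness is a nice self-contained observation. One small point worth making explicit in your write-up: the identity $P_0(\vlambda)\ket{\phi} = \ket{\phi}$ holds because $\ket{\phi}$ lies in the range of the rank-one projector $P_0(\vlambda)$, and the reality of $P_0(\vlambda')$ can be seen either from the contour symmetry you invoke or, more directly, from the fact that the orthogonal projector onto the span of a real eigenvector is a real matrix.
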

We note that the ``real matrix elements'' condition of this lemma is commonly
satisfied for many Hamiltonians of interest. For example, if one has a system
of spins with one-body and two-body interactions, then there are nine possible
types of terms in such interactions, written in terms of the Pauli matrices as:
$X$, $Y$, $Z$, $XX$, $XY$, $XZ$, $YY$, $YZ$, $ZZ$. Six of these have real
matrix elements, i.e., if such a Hamiltonian only uses the
$X$, $Z$, $XX$, $XZ$, $YY$, and $ZZ$ terms,
then it has real matrix elements. For example, the Hamiltonians currently implemented in the D-Wave quantum annealing devices
only include the $X$, $Z$, and $ZZ$ terms~\cite{King:22}, as long as one ignores additional terms that arise due to geometric phases~\cite{Vinci:2017aa}.

\begin{proof}
  First, let us fix any $\vlambda'$ and show that the ground state can be picked
  to have real amplitudes. Indeed, let $\ket{\psi}$ be any ground state
  with ground state energy $E_0$. We can represent $\ket{\psi}$ as
  \begin{equation}
    \ket{\psi} = \cos(\theta) \ket{\psi_0} + i\sin(\theta) \ket{\psi_1},
  \end{equation}
  where both $\ket{\psi_0}$ and $\ket{\psi_1}$ are normalized and have
  real amplitudes. We have
  \begin{align}
  \label{eq:C2}
    E_0 &= \braket{\psi|H(\lambda)|\psi}\notag\\
    &= \cos^2(\theta) \braket{\psi_0|H(\lambda)|\psi_0}
      + \sin^2(\theta) \braket{\psi_1|H(\lambda)|\psi_1} \notag\\
    &\geq E_0 (\cos^2(\theta) + \sin^2(\theta))
    = E_0 ,
  \end{align}
where the inequality is due to the fact that $E_0$ is the ground state of $H(\lambda)$. Since \cref{eq:C2} must be an equality, we have 
\bes
  \begin{align}
    &\braket{\psi_0|H(\lambda)|\psi_0} = E_0
      \textrm{ or } \cos(\theta) = 0\\
   & \braket{\psi_1|H(\lambda)|\psi_1} = E_0
      \textrm{ or } \sin(\theta) = 0.
 \end{align}
 \ees  
 Thus, at least one of $\psi_0$ or $\psi_1$ is the ground state
  with real amplitudes.

  Now let us show that the ground state can be picked to be continuously
  differentiable. First, applying the implicit function theorem to the
  equation $\det(H(\vlambda') - E_0(\vlambda')) = 0$ yields that
  $E_0(\vlambda')$ is continuously differentiable. Then,
  applying the implicit function theorem to the system of equations obtained
  from $(H(\vlambda') - E_0(\vlambda')) \ket{\psi(\vlambda')}=0$ by replacing
  one of the redundant rows with the equation
  \begin{equation}
    \norm{\ket{\psi(\vlambda')}}^2 = 1,
  \end{equation}
  we find that $\ket{\psi(\vlambda')}$ can indeed be chosen to be
  continuously differentiable. Indeed, this is a system of $\dim(\mcH)$ real equations with $\dim(\mcH)$ real unknowns, where the l.h.s. and r.h.s. are differentiable and the Jacobian is invertible (since the ground state is non-degenerate). Hence, by the implicit function theorem the resulting $\ket{\psi(\vlambda')}$ is continuously differentiable. Note that in general, the ground state $\ket{\psi(\vlambda')}$ need not have real coefficients. The use of the implicit function theorem above rules out the cases when the global phase makes $\ket{\psi(\vlambda')}$ non-continuously differentiable. Even under the additional constraint of $\ket{\psi(\vlambda')}$ being real, the sign can arbitrarily change at some $\vlambda$ making it non-continuously differentiable.
\end{proof}

\begin{mylemma}
  \label{lm:thm1.b}
  Let $\ket{\psi(\vlambda')}$ be a family
  of pure states with real amplitudes defined in the neighborhood
  of $\vlambda \in \mcM$ and differentiable at $\vlambda' = \vlambda$.
  Let $p(\vlambda')$ be the associated probability distribution, obtained after measurement of $\ket{\psi(\vlambda')}$ in the computational basis.
  Then
  \begin{equation}
    \label{eq:lm.thm1.b}
    \chi_{F_c}(\vlambda) = \chi_F(\vlambda).
  \end{equation}
\end{mylemma}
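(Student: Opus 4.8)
The plan is to evaluate both sides using the explicit formulas already derived: \cref{eq:explicit.chi.pure} from \cref{lm:explicit.chi-1} for $\chi_F$, and \cref{eq:explicit.chifc} from \cref{lm:explicit.chi-3} for $\chi_{F_c}$, and then to show the two expressions coincide term by term once we exploit that the amplitudes $\psi_z(\vlambda') = \braket{z | \psi(\vlambda')}$ are real. Throughout I write $\psi_z = \braket{z|\psi(\vlambda)}$ and use $p_z = \psi_z^2$.

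First I would simplify the quantum side. Reality makes $\braket{\partial_\mu\psi | \partial_\nu\psi} = \sum_z (\partial_\mu\psi_z)(\partial_\nu\psi_z)$ already real, so the $\real$ in \cref{eq:explicit.chi.pure} acts trivially. Differentiating the normalization $\sum_z \psi_z(\vlambda')^2 = 1$ at $\vlambda' = \vlambda$ gives $\braket{\partial_\mu\psi | \psi} = \sum_z \psi_z\,\partial_\mu\psi_z = 0$, so the second (rank-one) term of \cref{eq:explicit.chi.pure} vanishes. Hence $(\chi_F(\vlambda))_{\mu\nu} = \sum_{z\in\mcS}(\partial_\mu\psi_z)(\partial_\nu\psi_z)$.

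Next I would evaluate the classical side, with $\partial_\mu p_z = 2\psi_z\,\partial_\mu\psi_z$. Before applying \cref{lm:explicit.chi-3} I must verify its hypotheses; $p_z$ is clearly differentiable, and the delicate point is the required twice-differentiability of $\sum_{z\in\mcS_0} p_z(\vlambda')$. For $z\in\mcS_0$ we have $\psi_z(\vlambda)=0$, so differentiability of $\psi$ gives $\psi_z(\vlambda') = \sum_\mu(\partial_\mu\psi_z)\,\delta\lambda_\mu + o(\norm{\vdlambda})$, and squaring yields $\psi_z(\vlambda')^2 = \bigl(\sum_\mu(\partial_\mu\psi_z)\,\delta\lambda_\mu\bigr)^2 + o(\norm{\vdlambda}^2)$, a genuine second-order Taylor expansion even though $\psi_z$ itself need not be twice differentiable (the cross and remainder terms are $o(\norm{\vdlambda}^2)$ because the linear part is $O(\norm{\vdlambda})$). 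Summing over the finite set $\mcS_0$ gives the Hessian $\frac12\partial_\mu\partial_\nu\sum_{z\in\mcS_0}p_z = \sum_{z\in\mcS_0}(\partial_\mu\psi_z)(\partial_\nu\psi_z)$. On the support, $\frac{(\partial_\mu p_z)(\partial_\nu p_z)}{4 p_z} = (\partial_\mu\psi_z)(\partial_\nu\psi_z)$ for each $z\in\mcS_+$. Adding the two contributions of \cref{eq:explicit.chifc} yields $(\chi_{F_c}(\vlambda))_{\mu\nu} = \sum_{z\in\mcS}(\partial_\mu\psi_z)(\partial_\nu\psi_z)$, which matches the quantum side and proves \cref{eq:lm.thm1.b}.

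I expect the main obstacle to be exactly this handling of the zero set $\mcS_0$: one must recognize that the vanishing of $\psi_z$ at $\vlambda$ is precisely what lets $\psi_z^2$ acquire a well-defined quadratic Taylor term from only first-order information about $\psi$, so that the second term of \cref{eq:explicit.chifc} correctly reproduces the contribution missed by the support sum. The remaining steps (reality eliminating the imaginary part, normalization eliminating the rank-one correction, and the algebraic cancellation $\frac{(\partial_\mu p_z)(\partial_\nu p_z)}{4p_z} = (\partial_\mu\psi_z)(\partial_\nu\psi_z)$) are routine.
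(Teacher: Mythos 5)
Your proof is correct and follows essentially the same route as the paper's: both evaluate $\chi_F$ via \cref{eq:explicit.chi.pure} and $\chi_{F_c}$ via \cref{eq:explicit.chifc}, use reality of the amplitudes together with normalization to kill the $\real$ and the rank-one term, and observe that for $z\in\mcS_0$ the vanishing of $\psi_z$ at $\vlambda$ supplies the quadratic Taylor term needed for the second sum. The only cosmetic difference is that the paper reduces to the $00$-component by equivariance while you treat general $\mu\nu$ directly, and you spell out the verification of the twice-differentiability hypothesis slightly more explicitly than the paper does.
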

\begin{proof}
  First, note that both sides of \cref{eq:lm.thm1.b} are well-defined.
  Indeed, $p_z(\vlambda') = \abs{\braket{z|\psi(\vlambda')}}^2$
  is differentiable at $\vlambda$ and twice differentiable if $p_z(\vlambda) = 0$.

  Because both sides of \cref{eq:lm.thm1.b} are equivariant with respect to a change
  of coordinates $\vlambda'$, it is sufficient to show the equality for the $00$-th
  component of both. Using \cref{eq:explicit.chi.pure} we compute
  \begin{equation}
    (\chi_F(\vlambda))_{00}
    = \real(\braket{\partial_0\psi(\vlambda)|\partial_0\psi(\vlambda)})
      - 0 \cdot 0
    = \norm{\ket{\partial_0\psi(\vlambda)}}^2.
  \end{equation}
  Using \cref{eq:explicit.chifc} we compute
  \bes
  \begin{align}
    (\chi_{F_c}(\vlambda))_{00}
   & = \sum_{z \in \mcS_{+}} \left.
        \frac{
          \left(\partial_{0} \braket{z|\psi(\vlambda')}^2\right)^2
        }{
          4 \braket{z|\psi(\vlambda')}^2
        } \right|_{\vlambda' = \vlambda}\\
&\quad      + \sum_{z\in\mcS_0}
        \frac12 \left.\partial_{0}^2
        \braket{z|\psi(\vlambda')}^2\right|_{\vlambda' = \vlambda}\notag \\
    &= \sum_{z \in \mcS} \braket{z|\partial_0\psi(\vlambda)}^2
    = \norm{\ket{\partial_0\psi(\vlambda)}}^2.
  \end{align}
  \ees
\end{proof}

\cref{th:comput.basis} now
follows directly from \cref{lm:thm1.a} and \cref{lm:thm1.b}.
Next, we prove \cref{th:avggc}

\begin{proof}[Proof of \cref{th:avggc}]
  Let us fix $\vlambda \in \mcM$ and evaluate both sides of
\cref{eq:Emeas}
  at that point using \cref{eq:explicit.chifc,eq:explicit.chi.pure}.
  The second term on the r.h.s. of \cref{eq:explicit.chifc} is only relevant when one of the
  vectors in the measurement basis is orthogonal to $\ket{\psi(\vlambda)}$
  --- a subset of measure $0$ in the space of all measurement bases.
  Thus, we can safely discard it in \cref{eq:Emeas}.
  Since both sides of \cref{eq:Emeas}
  are equivariant with respect to the choice of coordinates $\vlambda$,
  it is sufficient to prove the equality for the $00$-th component of both.
  Thus, it remains to show that
  \begin{align}
    \mathbb{E}_{\textnormal{meas}}
    \sum_{z' \in \mcS'}
        \frac{
          \left(\partial_{0} p'_{z'}(\vlambda)\right)^2
        }{4 p'_{z'}(\vlambda)} 
    &= \frac12 \norm{\ket{\partial_{0} \psi(\vlambda)}}^2 \notag \\
        &\ \ - \frac12 \abs{\braket{\partial_{0} \psi(\vlambda) | \psi(\vlambda)}}^2.
      \label{eq:avggc.1}
\end{align}
  One can simplify the l.h.s. by noting that the expectation of each term
  in the sum is the same. Thus, it remains to average over $\ket{\varphi}$
  s.t. $\norm{\ket{\varphi}} = 1$:
  \begin{equation}
    \label{eq:avggc.2}
    \textnormal{l.h.s. \cref{eq:avggc.1}}
    = \dim(\mcH) \mathbb{E}_{\ket{\varphi}}
      \frac{
        \left(\real\left(
        \braket{\varphi|\partial_{0} \psi(\vlambda)}
        \braket{\psi(\vlambda)|\varphi}\right)\right)^2
      }{\abs{\braket{\varphi|\psi(\vlambda)}}^2}.
  \end{equation}
  Let us denote
    $\ket{\psi_0} = \ket{\psi(\vlambda)}$,
    $\psi_\parallel = \braket{\psi_0|\partial_0\psi(\vlambda)}$,
    $\ket{\psi_\perp} = \ket{\partial_0\psi(\vlambda)} - \psi_\parallel\ket{\psi_0}$,
    $\varphi_\parallel = \braket{\psi_0|\varphi}$,
    $\ket{\varphi_\perp} = \ket{\varphi} - \varphi_\parallel\ket{\psi_0}$.
    Note that $\real(\psi_\parallel) = 0$.  With this notation
  \begin{equation}
    \textnormal{r.h.s. \cref{eq:avggc.1}} = \frac12 \norm{\ket{\psi_\perp}}^2,
  \end{equation}
  and
  \begin{equation}
    \label{eq:avggc.5}
    \braket{\varphi|\partial_{0} \psi(\vlambda)} \braket{\psi(\vlambda)|\varphi} =
    \abs{\varphi_\parallel}^2 \psi_\parallel
    + \varphi_\parallel \braket{\varphi_\perp|\psi_\perp}.
  \end{equation}
  Substituting \cref{eq:avggc.5} into \cref{eq:avggc.2}
  and averaging over the arguments of $\varphi_\parallel$
  we obtain
  \begin{equation}
    \label{eq:avggc.6}
    \textrm{l.h.s. \cref{eq:avggc.1}}
    = \frac{\dim(\mcH)}{2}\mathbb{E}_{\ket{\varphi}}
      \abs{\braket{\varphi_\perp|\psi_\perp}}^2
    = \frac12 \norm{\psi_\perp}^2.
  \end{equation}
\end{proof}

\section{Proof of
\texorpdfstring{\cref{lm:explicit.chi-1}}{Theorem \ref*{lm:explicit.chi-1}}}
\label{app:proof-lm:explicit.chi-1}

\begin{proof}[Proof of \cref{lm:explicit.chi-1}]
  Note that \cref{lm:explicit.chi-2} is a generalization of \cref{lm:explicit.chi-1}, and since the proof of the former does not rely on the latter, the reader may choose to skip the proof of \cref{lm:explicit.chi-1}. However, this proof is simpler, so we present it here for clarity. 

Due to the equivariance of the definition of $\chi_{F}$ with respect to
  a coordinates change $\vlambda' \mapsto \vlambda' - \vlambda$,
 we can prove the statements in the lemma for
  $\vlambda = \veczero$  without loss of generality. Let $\ket{\psi_0} = \ket{\psi(\veczero)}$,
  $\ket{\delta \psi} = \ket{\psi(\vlambda')} - \ket{\psi_0}$.
  We know that
  \begin{equation}
    \label{eq:explicit.chi:part1:1}
    \ket{\delta \psi} = \sum_{\mu} \ket{\partial_\mu\psi(\veczero)} \vlambda_{\mu}
      + \ket{r},
  \end{equation}
  where $\ket{r} = o(\abs{\vlambda'})$.
  We also know that $\braket{\psi(\vlambda')|\psi(\vlambda')} = 1$.
  On the other hand,
  \bes
    \label{eq:explicit.chi:part1:2}
  \begin{align}
    \braket{\psi(\vlambda')|\psi(\vlambda')} &=
    1 + 2\real{\braket{\psi_0|\delta\psi}} + \braket{\delta\psi|\delta\psi}\\
    &= 1 + 2\sum_\mu\real{\braket{\psi_0|\partial_\mu\psi(\veczero)}}\lambda'_{\mu}
      + 2\real{\braket{\psi_0|r}}\notag \\
      &\quad + \sum_{\mu,\nu}\braket{\partial_\mu\psi(\veczero)|\partial_\nu\psi(\veczero)}
        \vlambda'_{\mu} \vlambda'_{\nu}
      + o(\abs{\vlambda'}^2).
  \end{align}
  \ees
  Thus,
  \begin{equation}
    \label{eq:explicit.chi:part1:3}
    \real{\braket{\psi_0|\partial_\mu\psi(\veczero)}} = 0
  \end{equation}
  and
  \begin{equation}
    \label{eq:explicit.chi:part1:4}
    \real{\braket{\psi_0|r}}
      = - \frac12 \sum_{\mu,\nu}
        \braket{\partial_\mu\psi(\veczero)|\partial_\nu\psi(\veczero)}
        \vlambda'_{\mu} \vlambda'_{\nu}
    + o(\abs{\vlambda'}^2).
  \end{equation}
    Note that for real $x,y$ around $x=y=0$ we have
  \begin{equation}
    \abs{1 + x + iy} = 1 + x + \frac{y^2}{2} + O(x^2 + y^4).
  \end{equation}
  Using this fact along with \cref{eq:explicit.chi:part1:1,eq:explicit.chi:part1:3,eq:explicit.chi:part1:4},
  we compute \cref{eq:def.F.pure}:
 \bes
    \label{eq:explicit.chi:part1:5}
  \begin{align}
    F(\veczero, \vlambda') &= \abs{\braket{\psi(\veczero)|\psi(\vlambda')}}
    = \abs{1 + \braket{\psi_0|\delta \psi}}\\
   & = \abs{1 + \sum_\mu\braket{\psi_0|\partial_\mu \psi(\veczero)}\vlambda'_\mu
      + \braket{\psi_0|r}} \\
    &= 1 + \real{\braket{\psi_0|r}} - \frac12 \left(
    \sum_\mu\braket{\psi_0|\partial_\mu \psi(\veczero)}\vlambda'_\mu\right)^2\notag\\
     &\quad + o(\abs{\vlambda'}^2) \\
   & = 1 - \frac12 \sum_{\mu\nu} (\chi_{F}(\veczero))_{\mu\nu} \vlambda'_\mu \vlambda'_\nu
      + o(\abs{\vlambda'}^2),
  \end{align}
  \ees
  where $(\chi_{F}(\veczero))_{\mu\nu}$ is given by \cref{eq:explicit.chi.pure}.
\end{proof}

\section{Proof of
\texorpdfstring{\cref{lm:explicit.chi-2}}{Theorem \ref*{lm:explicit.chi-2}} and of \cref{lm:chif.mixed.bound}}

\subsection{Proof of \texorpdfstring{\cref{lm:explicit.chi-2}}{Theorem \ref*{lm:explicit.chi-2}}}
\label{app:proof-lm:explicit.chi-2}

The main idea of our approach in proving \cref{lm:explicit.chi-2} is to use the residue formula for the
action of a holomorphic function $f$ on a bounded linear operator $A$:
\begin{equation}
    \label{eq:holomorphic}
    f(A) = -\frac{1}{2\pi i} \oint f(z) (A - z)^{-1} dz.
\end{equation}
Here, the function $f$ should be defined and holomorphic on the boundary
and inside the contour, and the spectrum of the operator
$A$ should be inside the integration contour
[in the proof below we apply \cref{eq:holomorphic} to the function $f(z) = \sqrt{z}$].
As a reminder, the residue
theorem states that if a function $g(z)$ is holomorphic and defined
on the integration contour and inside it except for finitely many points
$z_1, \dots, z_n$ inside the contour, then
\begin{equation}
    \label{eq:res.theorem}
    \frac{1}{2\pi i} \oint g(z) dz = \sum_{j=1}^{n} \Res(g, z_k), 
\end{equation}
where $\Res(g, z_k)$ is the coefficient $c_{-1}$ in the Laurent series expansion
of $g(z)$ around the point $z=z_k$:
\begin{equation}
    \label{eq:laurent}
    g(z) = \sum_{l=-\infty}^{\infty} c_{l} (z - z_k)^{l}.
\end{equation}
If $g(z) = \tilde g(z) (z - z_k)^{-m-1}$ for $m \geq 0$ where $\tilde g(z)$
does not have a singularity at $z=z_k$ then
\begin{equation}
  \label{eq:res.explicit}
  \Res(g, z_k)
  = \frac{1}{m!} \left.\frac{\partial^m \tilde g(z)}{\partial z^m}\right|_{z = z_k}.
\end{equation}

\begin{proof}[Proof of \cref{lm:explicit.chi-2}]
  As in the proof of \cref{lm:explicit.chi-1}, without loss of generality set $\vlambda = 0$.
  Denote $\rho_0 = \rho(\veczero)$, $\delta \rho = \rho(\vlambda') - \rho_0$.
  We know that
  \begin{equation}
    \delta \rho = \sum_{\mu} \vlambda'_\mu \partial_\mu\rho(\veczero) + r,
  \end{equation}
  where $r = o(\abs{\vlambda'})$. We also know that
  \begin{equation}
    \Tr\left(\partial_\mu\rho(0)\right) = 0, \qquad \Tr(r) = 0.
  \end{equation}
  From the definition,
  \begin{equation}
    F(\rho_0, \rho_0 + \delta \rho) =
    \Tr \sqrt{\rho_0^2 + \sqrt{\rho_0} \delta \rho \sqrt{\rho_0}}.
  \end{equation}
  Let $\rho_{0{+}} = P_{{+}} \rho_0 P_{{+}}^{\dagger}$,
  $\delta\rho_{+} = P_{{+}} \delta\rho P_{{+}}^{\dagger}$,
  $r_{+} = P_{{+}} r P_{{+}}^{\dagger}$. One can see that the expression under the
  square root acts nontrivially only on $\rho_0(\mathcal{H})$, hence the trace
  can be computed in that subspace:
  \begin{equation}
    \label{eq:F.Tr.sqrt}
    F(\rho_0, \rho_0 + \delta \rho) = \Tr \sqrt{
      \rho_{0{+}}^2 + \sqrt{\rho_{0{+}}} \delta \rho_{+} \sqrt{\rho_{0{+}}}
    }.
  \end{equation}
  For $\vlambda' = \veczero$ the expression under the square root is equal to
  $\rho_{0{+}}^2$ and has only positive eigenvalues.
  Thus, for $\vlambda'$ in some neighborhood of $\veczero$ the spectrum of
  the expression under the square root lies in $(c_1, c_2)$
  for some $c_1,c_2$ satisfying $0 < c_1 \leq c_2 < \infty$.
  In that neighborhood the square root is
  an analytic function and can be expressed as an integral with the
  corresponding resolvent over a contour surrounding $[c_1, c_2]$:
\bes
      \label{eq:oint}
\begin{align}
       \label{eq:oint-1}
   &\sqrt{
      \rho_{0{+}}^2 + \sqrt{\rho_{0{+}}} \delta \rho_{+} \sqrt{\rho_{0{+}}}
    }\\
    &\qquad= \frac{-1}{2\pi i} \oint \sqrt{z} \left(
      \rho_{0{+}}^2 + \sqrt{\rho_{0{+}}} \delta \rho_{+} \sqrt{\rho_{0{+}}} - z
    \right)^{-1} dz \notag\\
      \label{eq:oint-2}
    &\qquad = I_0 + I_1 + I_2 + o(\abs{\vlambda'}^2),
  \end{align}
  \ees
  where 
  \begin{align}
    I_l &= \frac{(-1)^{l+1}}{2\pi i} \oint \sqrt{z}
      \left(\rho_{0{+}}^2 - z\right)^{-1}\times \notag \\
      &\qquad \left(
        \sqrt{\rho_{0{+}}} \delta \rho_{+} \sqrt{\rho_{0{+}}} \left(\rho_{0{+}}^2 - z\right)^{-1}
      \right)^{l} dz.
  \end{align}
  In \cref{eq:oint-2} we used the series expansion $(a + b)^{-1} = (-1)^l a^{-1} \sum_{l=0}^{\infty} (ba^{-1})^{l}$ (which holds when $\norm{ba^{-1}} < 1$) applied to $a = \rho_{0{+}}^2 - z$ and $b = \sqrt{\rho_{0{+}}} \delta \rho_{+} \sqrt{\rho_{0{+}}}$.
  In order to evaluate the metric $\chi_F(\veczero)$, we only need to compute
  the diagonal elements of $I_0, I_1, I_2$ while discarding any terms of order
  $o(\abs{\vlambda'}^2)$.
  We pick the basis where $\rho_{0{+}}$ is diagonal with diagonal elements
  $\xi_1 \geq \xi_1 \geq \dots \geq \xi_{n_{+}} > 0$. Computing $I_0$ and $(I_1)_{jj}$ using \cref{eq:res.explicit} we obtain
  \bes
  \begin{align}
    I_0 &= \rho_{0{+}},\\
    \label{eq:I1jj}
    (I_1)_{jj} &= \frac12 (\delta \rho_{+})_{jj}\notag \\
    &= \frac12 \sum_{\mu} \vlambda'_\mu \partial_{\mu} (\rho_{+}(\veczero))_{jj}
      + \frac12 (r_{+})_{jj}.
  \end{align}
  \ees
  To evaluate the diagonal entries of $I_2$ we note that for the contour
  \begin{center}
    \begin{tikzpicture}
      \tikzstyle{bullet}=[circle, fill,minimum size=4pt, inner sep=0pt,
        outer sep=0pt];
      \draw (0, 0) node[style=bullet,label=90:$0$] {};
      \draw (2, 0) node[style=bullet,label=0:$a$] {};
      \draw (3, 0) node[style=bullet,label=0:$b$] {};
      \draw[-latex] (2.5, 0.5) arc (-270:90:1.5 and 0.5);
    \end{tikzpicture}
  \end{center}
  where $a\le b$ are positive real numbers, we have
  \begin{equation}
    \label{eq:oint.aab}
    \frac1{2\pi i} \oint \frac{\sqrt{z}}{(z-a)^2(z-b)} dz = -\frac{1}{2\sqrt{a}(\sqrt{a} + \sqrt{b})^2}.
  \end{equation}
  One can derive \cref{eq:oint.aab} using \cref{eq:res.explicit}.
  We can now apply \cref{eq:oint.aab} to the summands of
  $(I_2)_{jj}$ (where $a = \xi_j^2$, $b = \xi_k^2$), and evaluate $(I_2)_{jj}$:
\bes
    \label{eq:I2jj}
  \begin{align}
    &(I_2)_{jj} = -\sum_{k=1}^{n_{+}} \frac{\xi_j \delta \rho_{{+},jk}
      \xi_k \delta \rho_{{+},kj}}{2\xi_j (\xi_j + \xi_k)^2}
    = -\sum_{k=1}^{n_{+}} \frac{\abs{\delta \rho_{{+},jk}}^2 \xi_k}{2(\xi_j + \xi_k)^2} \\
   &\quad = -\sum_{k=1}^{n_{+}} \sum_{\mu,\nu} \frac{
        \real\left(
          (\partial_\mu\rho_{+}(0))_{jk}(\partial_\nu\rho_{+}(0))_{kj}
        \right) \xi_k
      }{2(\xi_j + \xi_k)^2} \vlambda'_\mu\vlambda'_\nu\notag\\
&\qquad      + o(\abs{\vlambda'}^2).
  \end{align}
  \ees
  Now we are ready to evaluate
  \begin{equation}
    \label{eq:F=I0+I1+I2}
    F(\rho_0, \rho_0 + \delta \rho) =
      \Tr(I_0 + I_1 + I_2) + o(\abs{\vlambda'}^2),
  \end{equation}
  where the trace on the r.h.s. is taken over $\rho(\veczero)\mcH \subset \mcH$.
  Note that
  \begin{equation}
    \label{eq:TrI0}
    \Tr(I_0) = \Tr(\rho_0) = 1.
  \end{equation}
  The first term of the r.h.s. of \cref{eq:I1jj} cannot have a non-zero contribution
  to $\Tr(I_1)$ due to the fact that $\rho(\vlambda')$ is non-negative and
  has $\Tr(\rho(\vlambda')) = 1$. For the second term, notice that $\Tr(r) = 0$,
  hence $\Tr(r_{+}) + \Tr(P_0 r P_0) = 0$, giving
  \begin{equation}
    \label{eq:TrI1}
    \Tr(I_1) = - \Tr(P_0 r P_0^{\dagger}) / 2 = - \Tr(P_0 \rho(\vlambda') P_0^{\dagger}) / 2,
  \end{equation}
  where in the second equality we observed that  $\Tr(P_{0}\partial_\mu\rho(\veczero)P_{0}^\dagger) = 0$
  [otherwise $\rho(\vlambda')$ cannot be
  positive semidefinite in the neighborhood of $\vlambda'=\veczero$].
  Substituting \cref{eq:TrI0,eq:TrI1,eq:I2jj} into \cref{eq:F=I0+I1+I2}
  and comparing the result with 
  \begin{align}
\label{eq:def.chi.Fmixed}
&  F\left(\rho(\vlambda), \rho(\vlambda')\right) = \\
 & \     1 - \frac{1}{2} \sum_{\mu,\nu} (\vlambda' - \vlambda)_{\mu} (\vlambda' - \vlambda)_{\nu} \left(\chi_{F}(\vlambda)\right)_{\mu\nu} + o\left(\abs{\vlambda' - \vlambda}^2\right)  \notag ,
\end{align}
  we obtain
  \begin{align}
(\chi_{F}(\veczero))_{\mu\nu}
    &= \sum_{j,k} \frac{
        \real\left(
          (\partial_\mu\rho_{+}(\veczero))_{jk}
          (\partial_\nu\rho_{+}(\veczero))_{kj}
        \right) \xi_k
      }{(\xi_j + \xi_k)^2}\notag \\
    \label{eq:explicit.chi:part2:17}
    &\qquad + \frac12 \left.\partial_\mu \partial_\nu \Tr(P_0 \rho(\vlambda')
    P_0^{\dagger}) \right|_{\vlambda'=\veczero}.
\end{align}
  Averaging \cref{eq:explicit.chi:part2:17}
  with itself with $j$ and $k$ interchanged, we obtain \cref{eq:explicit.chi.mixed}.

\end{proof}

\subsection{Proof of \texorpdfstring{\cref{lm:chif.mixed.bound}}{Prop \ref*{lm:chif.mixed.bound}}}
\label{app:proof-lm:explicit.chi-3}

We need the following lemma:
\begin{mylemma}
  \label{lm:block-diagonal}
  Let
  \begin{equation}
    \label{eq:lm:block-diagonal}
    M = \begin{pmatrix}A & B \\ B^{\dagger} & B^{\dagger} A^{-1} B + C\end{pmatrix}
  \end{equation}
  be a finite-dimensional block diagonal matrix over $\mathbb{C}$ with
  positive definite $A$.
  Then $M \geq 0$ iff $C \geq 0$.
  In that case $\rank(M) = \rank(A) + \rank(C)$.
\end{mylemma}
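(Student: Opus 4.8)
The plan is to block-diagonalize $M$ by an explicit congruence transformation, which reduces everything to the trivial case of a block-diagonal matrix. Since $A$ is positive definite, it is invertible and Hermitian, and hence $A^{-1}$ is Hermitian as well. First I would introduce the unit lower-triangular block matrix
\begin{equation}
L = \begin{pmatrix} I & 0 \\ B^{\dagger} A^{-1} & I \end{pmatrix},
\qquad
L^{\dagger} = \begin{pmatrix} I & A^{-1} B \\ 0 & I \end{pmatrix},
\end{equation}
where the form of $L^{\dagger}$ uses $(B^{\dagger} A^{-1})^{\dagger} = A^{-1} B$ (valid because $A^{-1}$ is Hermitian). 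A direct block multiplication then gives
\begin{equation}
L \begin{pmatrix} A & 0 \\ 0 & C \end{pmatrix} L^{\dagger}
= \begin{pmatrix} A & B \\ B^{\dagger} & B^{\dagger} A^{-1} B + C \end{pmatrix} = M,
\end{equation}
so that the Schur complement of the $A$ block in $M$ is exactly $C$, and $M = L\,\diag(A, C)\,L^{\dagger}$.

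Next I would invoke invariance under congruence. Because $L$ is unit lower-triangular it has determinant $1$ and is invertible, so the identity above is a congruence relating $M$ and $D = \diag(A, C)$. Congruence by an invertible matrix preserves positive semidefiniteness, since $v^{\dagger} M v = (L^{\dagger} v)^{\dagger} D (L^{\dagger} v)$ and $v \mapsto L^{\dagger} v$ is a bijection of the vector space onto itself; hence $M \geq 0$ iff $D \geq 0$. A Hermitian block-diagonal matrix is positive semidefinite iff each diagonal block is, and $A > 0$ holds by hypothesis, so the condition $D \geq 0$ collapses to $C \geq 0$. This establishes the first claim.

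For the rank statement I would again use that multiplication by the invertible matrices $L$ and $L^{\dagger}$ leaves rank unchanged, so $\rank(M) = \rank(D) = \rank(A) + \rank(C)$, the last equality holding because $D$ is block diagonal (and $\rank(A)$ equals the dimension of its block since $A$ is positive definite). I do not expect any serious obstacle here: the argument is the standard Schur-complement congruence, and the only point requiring care is the Hermiticity bookkeeping — verifying that $A^{-1}$ is Hermitian and that $L^{\dagger}$ has the claimed upper-triangular form — after which both conclusions follow immediately from the single factorization $M = L\,\diag(A, C)\,L^{\dagger}$.
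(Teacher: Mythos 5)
Your proposal is correct and follows essentially the same route as the paper: both rest on the Schur-complement congruence that factors $M$ as an invertible block-triangular matrix times $\diag(A,C)$ (up to a harmless $\diag(A,1)$ rescaling of the middle factor) times its adjoint, from which positivity and the rank identity follow by congruence invariance. Your write-up is, if anything, slightly more explicit than the paper's about why the rank statement holds.
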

\begin{proof}
  The lemma follows from the decomposition
  \begin{equation}
    M =
      \begin{pmatrix}A & B \\ 0 & 1\end{pmatrix}^{\dagger}
      \begin{pmatrix}A^{-1} & 0 \\ 0 & C\end{pmatrix}
      \begin{pmatrix}A & B \\ 0 & 1\end{pmatrix}.
  \end{equation}
If $M \geq 0$ then 
\beq
\begin{pmatrix}A^{-1} & 0 \\ 0 & C\end{pmatrix} =  \begin{pmatrix}A^{-1} & -A^{-1} B \\ 0 & 1\end{pmatrix}^\dag M \begin{pmatrix}A^{-1} & -A^{-1} B \\ 0 & 1\end{pmatrix} \geq 0,
\eeq 
so $C \geq 0$. If $C \geq 0$ (and we know $A^{-1} \geq 0$ from the precondition of the Lemma), then $\begin{pmatrix}A^{-1} & 0 \\ 0 & C\end{pmatrix} \geq 0$, so $M \geq 0$.
  
\end{proof}

\begin{proof}[Proof of \cref{lm:chif.mixed.bound}]
Here we use the facts and notation from the proof of \cref{lm:explicit.chi-2}.
We are interested in deriving a bound for $\Tr(P_0 r P_0^{\dagger}) = \Tr(P_0 \delta \rho P_0^{\dagger}) = \Tr(P_0 \rho(\vlambda') P_0^{\dagger})$
from \cref{eq:TrI1} using \cref{lm:block-diagonal}. Consider a basis where
$\rho(\vlambda) = \diag(\xi_1,\dots,\xi_{\dim{\mcH}})$ with $\xi_{j} = 0$
for $j > n_{+}$. In that basis
  \begin{equation}
    \label{eq:rhovlambdap.blocks}
    \rho(\vlambda') = \begin{pmatrix}
      \rho_{+}(\vlambda') & P_{+} \delta\rho P_0^{\dagger} \\
      P_{0} \delta\rho P_{+}^{\dagger} & P_0\delta\rho P_0^{\dagger}
    \end{pmatrix}.
  \end{equation}
That is, $\rho(\vlambda')$ (which we know is positive-semidefinite) can
be written in the form \cref{eq:lm:block-diagonal} with
\bes
\begin{align}
  A &= \rho_{0{+}} + \delta\rho_{+},
  \quad
  B = P_{+} \delta\rho P_0^{\dagger}\\
  C &= P_0\delta\rho P_0^{\dagger} - 
  P_{0} \delta\rho P_{+}^{\dagger}
  (\rho_{+}(\vlambda'))^{-1}
  P_{+} \delta\rho P_0^{\dagger}.
\end{align}
\ees
From \cref{lm:block-diagonal} we know that $C\geq 0$, implying
\begin{equation}
  \label{eq:chif.mixed.bound:proof.4}
  \Tr(P_{0} \delta\rho P_{0}^{\dagger})
  \geq \Tr\left(P_{0} \delta\rho P_{+}^{\dagger}
    (\rho_{+}(\vlambda'))^{-1}
    P_{+} \delta\rho P_0^{\dagger}\right).
\end{equation}
Taking the second term in the Taylor expansion near $\vlambda' = \vlambda$
we obtain \cref{eq:chi.mixed2.bound}.

The second statement of \cref{lm:chif.mixed.bound} can be obtained
by noticing that the strict inequality in \cref{eq:chif.mixed.bound:proof.4}
is only possible if $C \neq 0$, i.e., $\rank(C) > 0$. Using \cref{lm:block-diagonal}
we conclude that $\rank(\rho(\vlambda')) > n_{+} = \rank(\rho(\vlambda))$.

The third statement can be obtained by substituting \cref{eq:chi.mixed2.bound} as an equality into \cref{eq:explicit.chi.mixed}.
\end{proof}

\section{Previous work related to our ML task}
\label{app:related-work}

A wide body of research exists on the application of ML to the study of the physics of many-body systems. 
Recently, after this work was concluded, we became aware that the FIM estimation task has already been addressed in Refs.~\cite{arnold2023machine,duy2022fisher}.
These works present methods that solve the FIM estimation task using approaches distinct from ClassiFIM but achieve the same goal.
In particular, \cite{arnold2023machine} discusses the motivation of FIM estimation in the context of estimation of QPTs and provides three methods for doing so, two of which are unsupervised (denoted as $I_2$ and $I_3$ there), i.e., that do not require prior partial knowledge of the phase diagram. The unsupervised ML method FINE from \cite{duy2022fisher}, while not formulated in the context of PTs, is directly applicable to the study of PTs.

Focusing next on earlier work that was concerned with phase transitions, Refs.~\cite{vanNieuwenburg2016LearningPT,Lidiak2020UnsupervisedML} considered a prediction of topological phase transitions without the use of order parameters, the former relying instead on data that are deliberately labeled incorrectly, the latter on unsupervised learning using diffusion maps.
Ref.~\cite{abram2022inferring} used the uncertainty in the inverse problem (finding the model parameters from the observed states) as an indicator for the location of the phase transition. Uncertainty was also used to infer the locations of the transitions in the setting of unsupervised learning~\cite{Li2023MachineLP}. Knowing the phase labels for some of the observed states, one can train an algorithm to distinguish phases on held-out states~\cite{Guo2022LearningPT}. Other works applied various classical machine learning methods to quantum phases of matter data, obtained both numerically~\cite{Karsch2022AML} and experimentally~\cite{Kming2021UnsupervisedML}. The quantum variational eigensolver algorithm was combined with classical support vector machines to classify different phases of the quantum transverse field Ising model~\cite{uvarov2020machine}. 
Two frameworks have been developed for quantum state reconstruction and evolution using neural network ansatzes for representing quantum states: NetKet~\cite{carleo2019netket,vicentini2022netket} and QuCumber~ \cite{beach2019qucumber}. In particular, Ref.~\cite{ch2017machine} used CNNs to map the N\'eel phase boundary between the ordered phase and the disordered high-temperature phase in the 3D Fermi-Hubbard model using data generated via diffusion quantum Monte Carlo (DQMC). These earlier machine-learning applications did not utilize the connection between phase transitions and fidelity susceptibility.

Methods have also been proposed to compute the fidelity susceptibility (or quantum FIM) directly via QMC simulations~\cite{wang2015fidelity}. Unlike our method, which assumes access to sample measurements (i.e., $\mathcal{D}$), QMC methods only require a physical model description (i.e., a Hamiltonian along with initial conditions). When applicable, QMC methods are thus preferred in a simulation context, but there are two well-known situations where QMC might fail. First, QMC samples Gibbs states, not ground states.
Second, QMC may fail to equilibrate, e.g., if it has a sign problem~\cite{Troyer:2005aa} or for non-stoquastic Hamiltonians~\cite{Marvian:2019aa,klassen2019hardness}. On the other hand, our method is compatible with many data sources, including both QMC and methods that can sample ground states without suffering from a sign problem such as DMRG. Our approach is also compatible with quantum computing platforms that could potentially sample systems that are intractable for all known classical numerical methods (e.g., a 3D model with a severe sign problem), as explained in \cref{subsec:data-sources}.

The FIM has wide applications in statistics and ML.
In ML, the FIM has been studied with respect to the weights of a neural network as parameters~\cite{karakida2019universal,martens2020new,kirkpatrick2017overcoming,amari1998natural}. In contrast, we focus on the FIM with respect to a small (up to $5$) number of parameters $\vlambda$. Ref.~\cite{ghimire2021reliable} focuses on improving estimates of the KL divergence ($\DKL$), a non-symmetric measure of distance between two probability distributions. The FIM can be recovered as the first term in the Taylor expansion of $\DKL$, so the estimation of $\DKL$ can be applied to the task we define by comparing the distributions corresponding to nearby points in the parameter space. Generally, most prior ML work focuses on estimating a distance between just two distributions given samples, while our method is designed specifically for phase diagrams, where the number of distributions is significantly larger (e.g. $4096$ for a $64\times 64$ grid used in our 2D datasets).

Ref.~\cite{sriperumbudur2009integral} studies several notions of distance between two probability distributions, including $\phi$-divergences and integral probability metrics (IPMs), and contributes estimators for several integral probability metrics (IPMs). While the FIM can be recovered as the first non-trivial term in the Taylor expansion of $D_\phi$ ($\phi$-divergences) for a wide class of functions $\phi$, it cannot be recovered in a similar way from IPMs. Thus, the methods for estimating IPMs provided by Ref.~\cite{sriperumbudur2009integral} do not seem to be applicable to the task of estimating FIM from samples. On the other hand, our task can be seen as related to the simultaneous estimation of a divergence $D_{\phi}$ between multiple pairs of similar probability distributions. Such divergence measures can often be estimated from estimates of density ratios~\cite{sugiyama2012density}. Meta-learning can be used when a given density ratio estimation task is one of a set of tasks~\cite{kumagai2021meta,finn2017model}.

When the only parameter is time, a closely related problem is change point detection \cite{aminikhanghahi2017survey,truong2020selective}. The task of estimating divergence measures from samples (usually between two distributions) has been studied, e.g., in Ref.~\cite{sugiyama2013direct,ghimire2021reliable,nguyen2010estimating}. Our task can be seen as related to the simultaneous estimation of a divergence measure between multiple pairs of similar probability distributions. However, we note that some divergence measures, such as the integral probability metrics (IPMs) studied in Ref.~\cite{sriperumbudur2009integral}, are not invariant under reparameterizations of the sample space and do not have the FIM as their limit. Other aspects of such divergence measures can be found in Refs.~\cite{sugiyama2012density,kumagai2021meta,finn2017model}.

\section{Generating physical model datasets and ground truth}
\label{as:groundtruth}

Here, we discuss the numerical details of generating the datasets defined in \cref{subsec:datasets}. For each of these models, we consider a grid of $\vlambda$ values (either $64 \times 64$ or just $64$ points depending on the model). The first step in generating a dataset for a model is to prepare a fiducial state of this model for each $\vlambda$ value. This is typically an expensive and non-trivial task, but given this fiducial state, computing the ground truth FIM and generating sample measurements in the computational basis is straightforward and efficient. 

For DMRG (and also Clifford simulations, though not considered in this work), it is also possible to straightforwardly perform random local rotations before measuring in the computational basis and, hence, to generate a classical shadow dataset. This does not hold for Lanczos (or QMC, though not considered in this work either), which would require rotating the initial Hamiltonian first and, hence, rerunning the fiducial state generation procedure for each measurement. This is prohibitively expensive and hence is not considered in this work.

\subsection{Ising400 and IsNNN400}
\label{as:Ising}

\subsubsection{FIM from classical MCMC}
\label{ass:FIM-from-MCMC}

Suppose we are estimating the FIM for a statistical manifold $(\mcM, P)$, where we are able to get samples for any $\vlambda \in \mcM$ and query the values $\tilde{P}_{\vlambda}(x) = Z(\vlambda) P_\vlambda(x)$, where $P_\vlambda(x)$ is the probability to get $x$ according to the distribution $P_{\vlambda}$ and $Z(\vlambda)$ is the partition function (unknown). In this subsection we explain, how to estimate FIM using such access. For example, this procedure can be used when $P_\vlambda$ corresponds to Gibbs distribution for some classical Hamiltonian family $H_{\vlambda}$ and $T=1$ (cases with $T \neq 1$ can be reduced to $T=1$ by rescaling the Hamiltonian), where samples can be obtained using MCMC, such as Ising400 and IsNNN400. In this case one can take $\tilde P_\vlambda(x) = \exp(-H_\vlambda(x))$, but for numerical stability we use
\begin{equation}
  \tilde P_\vlambda(x) = \exp(-(H_\vlambda(x)-E_{\vlambda})),
\end{equation}
where $E_{\vlambda}$ is the mean energy of samples $x$ obtained for that $\vlambda$.

Suppose we have two points in the parameter space: $\vlambda_{+}$ and $\vlambda_{-}$
with $\vdlambda = \vlambda_{+} - \vlambda_{-}$, $\vlambda_0 = (\vlambda_{+} + \vlambda_{-})/2$. We have
\begin{multline}
    g(\vlambda_0)(\vdlambda) / 4 \simeq
    2\sum_{x}\left(1-\sqrt{P(x|\vlambda_{-})P(x|\vlambda_{+})}\right)\\
= 2\mathbb{E}_{x\sim q} \left(1 - \frac{2}{r_x + r_x^{-1}}\right),
\end{multline}
where $g$ is the FIM (with respect to $\vlambda$) to be estimated, $q$ is a probability distribution given by $2q_x = P_{\vlambda_{+}}(x) + P_{\vlambda_{-}}(x)$, and $r_x = \sqrt{P_{\vlambda_{+}}(x) / P_{\vlambda_{-}}(x)}$. Thus, in order to estimate $g(\vlambda_0)(\vdlambda)$ we need to get estimates of $r_x$ at random samples at $\vlambda_{+}$ and $\vlambda_{-}$. We have
\begin{equation}
    r_x = \sqrt{\frac{P_{\vlambda_{+}}(x)}{P_{\vlambda_{-}}(x)}} =
    \sqrt{\frac{\tilde{P}_{\vlambda_{+}}(x)}{\tilde{P}_{\vlambda_{-}}(x)}}
    \sqrt{\frac{Z_{\vlambda_{-}}}{Z_{\vlambda_{+}}}}.
\end{equation}
The ratio $\sqrt{Z_{\vlambda_{-}} / Z_{\vlambda_{+}}}$ can be estimated similarly:
\begin{equation}
\label{eq:ising-zpzm}
    \frac{Z_{\vlambda_{-}}}{Z_{\vlambda_{+}}}
    = \mathbb{E}_{x\sim P_{\vlambda_{+}}} \frac{\tilde P_{\vlambda_{-}}(x)}{\tilde P_{\vlambda_{+}}(x)}
    = \left(\mathbb{E}_{x\sim P_{\vlambda_{-}}} \frac{\tilde P_{\vlambda_{+}}(x)}{\tilde P_{\vlambda_{-}}(x)}\right)^{-1}.
\end{equation}
In fact, \cref{eq:ising-zpzm} provides two alternative ways to estimate this ratio, and we use the square root in an attempt to reduce the variance of the estimate.

The procedure explained here was used to estimate the FIM for both the Ising400 and IsNNN400 datasets, which we discuss next.

\subsubsection{Ising400}
\label{ass:ising400-dataset-gen}

\begin{figure*}
\centering
\includegraphics[width=0.99\linewidth]{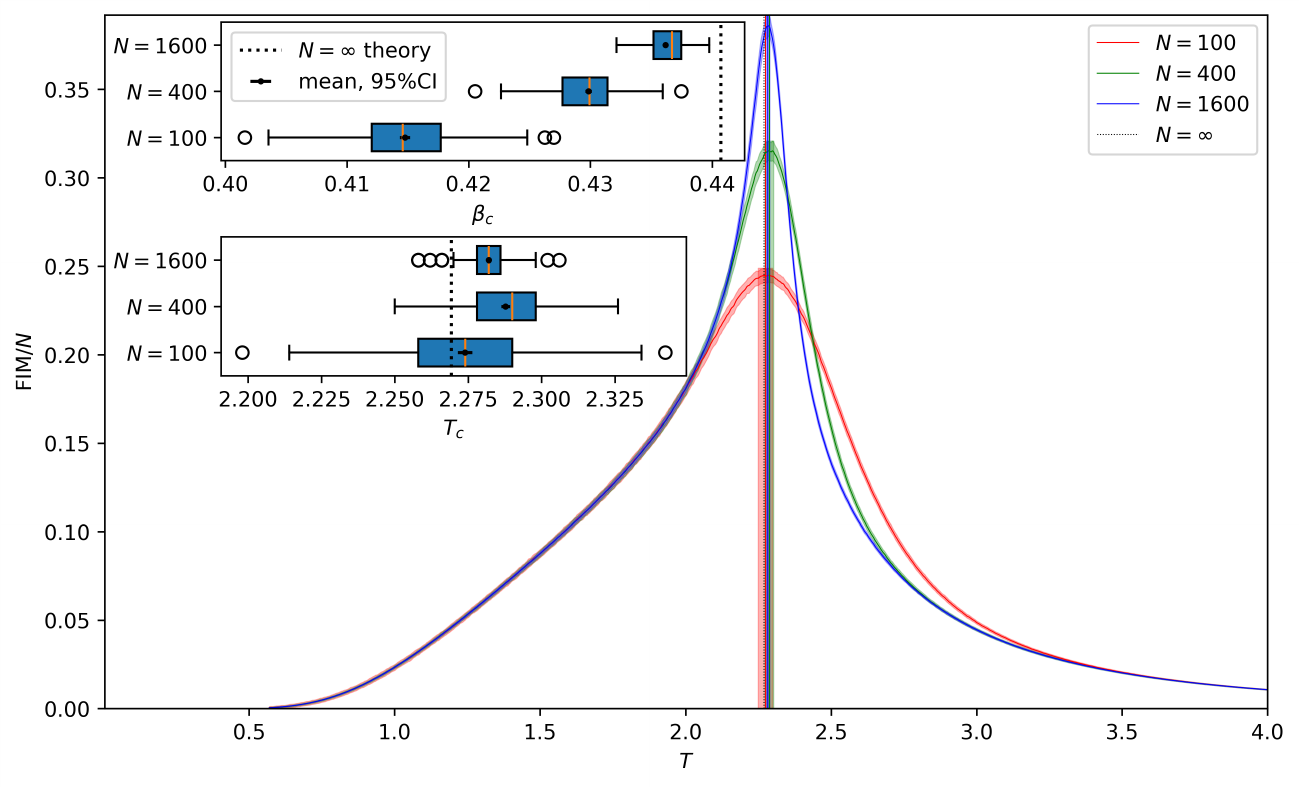}
\caption{Comparison of the theoretical $T_c = 2 / \ln(1 + \sqrt{2})$ for the infinite grid
with the maximum of the FIM. For the result shown here, we repeat MCMC simulation for three sizes: $10\times10$ ($N=100$), $20\times20$ ($N=400$), and $40\times40$ ($N=1600$). The FIM is scaled by the number of spins. For each grid, we produce $11$ independent FIM estimations (with different seeds) and plot the mean and a color band showing $\mathrm{mean} \pm \mathrm{std}$. The insets show whisker plots as a function of $T$ (bottom) and inverse temperature $\beta$ (top), for each of the three sizes. For each mean maximum of the FIM, we performed a two-sided t-test; the observation that the mean of the FIM maximum as a function of $T$ ($\beta$) is greater (less) than $T_c$ ($\beta_c$) is statistically significant for each of the lattice sizes. As expected, the theoretical critical value is approached as $N$ increases.}
\label{fig:ising-fim-gt-vs-theory}
\end{figure*}

For Ising400, the fiducial state is the Gibbs state at temperature $T$, and $T \in (0, 4]$ is the parameter we vary to induce a classical phase transition.
For each dataset, we run $70$ independent MCMC simulations, each starting from
a random ($T=\infty$) state, setting $T=4.0$ and equilibrating for $70$ MCMC steps
(each step involves updating all $20 \times 20$ sites, thus can be split into $400$ substeps). Then for each of $1000$ equally spaced temperatures in the range $(0, 4]$ (going from highest to lowest) we adjust the temperature, run $\floor{30d} + \floor{9d}$ MCMC steps, sample, run another $\floor{9d}$ MCMC steps, and sample again. Here
\begin{equation}
\label{eq:ising400-dT}
d = d(T) = 1 + (0.1 + (T - T_c)^2)^{-1},
\end{equation}
where $T_c = \frac{2}{\ln\left(1 + \sqrt{2}\right)}$, is subtracted to capture longer equilibration times for temperatures close to the critical value. When sampling, we randomly flip (or not) and shift the lattice to further reduce possible auto-correlations. In addition, we record the energies $H(x)$ and estimate the FIM.
For consistency with the other datasets, where all $\lambda_\mu \in [0,1]$, we estimate the FIM with respect to $\lambda = T/4$.

Compared to the other datasets, the generation of Ising400 is extremely fast: all $10$ datasets, together with the FIM estimates, were obtained in under $80$s on a single personal workstation. The MCMC code is implemented from scratch in C++ (using stdlib) and is made available to Python code using ctypes.
For details on how FIM was estimated for Ising400, see \cref{ass:FIM-from-MCMC}.

As this is perhaps the simplest of our models, where the phase transition location is known analytically, \cref{fig:ising-fim-gt-vs-theory} presents preliminary data on finite size scaling, though a full investigation of this aspect is left for future work. 
We use MCMC simulation to approximate the FIM ground truth multiple times (to also obtain error bars via bootstrapping) for 3 sizes: $10\times10$, $20\times20$, and $40\times40$. We doubled the number of iterations for $40\times40$ to ensure equilibration. The resulting FIM scaled by the number of spins experiences a scaling collapse in the low-temperature range, shown in \cref{fig:ising-fim-gt-vs-theory}. We limited our investigation of the scaling near the phase transition to obtaining averages and error bars of the peak location at different sizes (inset of \cref{fig:ising-fim-gt-vs-theory}). Note that the averages behave differently for $T_c$ and $\beta_c$, yet both appear to approach the theoretical value given by $T_c = 2 / \ln(1 + \sqrt{2})$ in the thermodynamic limit.

\subsubsection{IsNNN400}
\label{ass:ising-nnn-400-dataset-gen}
We use $64 \times 64$ grid $\mcM'$ in the parameter space $\mcM$. For each of the vertical
lines of that grid, and for each of the datasets, we run 70 independent MCMC simulations. Each simulation involves starting a parallel tempering (PT) array for all 64 values of $\lambda_1$. Initially we set temperatures for Ising instances in the PT array lower than those dictated by $\lambda_1$ to allow the system to cool down using MCMC steps for individual instances (involving spin updates and line updates) and parallel tempering steps. Then we restore the original temepratures and run additional MCMC iterations to ``re-heat'' the system. Then we sample two configurations for each $\lambda_1$ with 32 MCMC steps between them.

\subsection{FIL24, Hubbard12, and FIL1Dn datasets}
\label{ass:fill-hubbard-datasetgen}

For FIL24 and Hubbard12, the fiducial state is the ground state on a $64\times 64$ grid of $\vlambda$ values. For these models, we diagonalized the corresponding Hamiltonian
using the Lanczos-Arnoldi algorithm with implicit restarts \cite{sorensen1997implicitly}
as implemented in ARPACK-NG \cite{lehoucq1998arpack} available in Python
via SciPy \verb!scipy.sparse.linalg.eigsh!.
This allowed us to obtain
a ground state distribution (technically, a probability distribution
corresponding to a low temperature of $10^{-7}$; see below). This was the most resource-intensive part, consuming about 3000 CPU-hours for the Hubbard12 and FIL24 Hamiltonian families.
From this probability distribution we generated
both the ground truth FIM and
datasets $\mcD$ obtained by selecting $140$ samples from each of the $64^2$-sized
probability distributions.
The training dataset $\mcDtrain$ is a random $90\%$ sample of this dataset,
and was used for training and producing estimates $\hat{g}$. The remaining $10\%$
were only used for the final evaluation.

\subsubsection{Computing the ground state with Lanczos}
\label{ass:eigdeg}

For each of two Hamiltonian families (Hubbard12 and FIL24) we used ARPACK-NG via the \verb!scipy.sparse.linalg.eigsh! function to generate the four lowest eigenstates
$\{\ket{\psi_j(\vlambda)}\}_{j=0,1,2,3}$
and the corresponding eigenvalues $\{E_j(\vlambda)\}_{j=0,1,2,3}$
for each $\vlambda=(\lambda_0, \lambda_1)$ on an $r \times r$ grid
\begin{equation}
\mcM' = \{(l_0/r, l_1/r): l_0,l_1 \in \{0, 1, \dots, r-1\}\},
\end{equation}
where we selected $r=64$. We note that if one is interested in a higher resolution estimate in some region of the parameter space (e.g., near peaks of $\hat{g}$), one can zoom in on that region and collect a new dataset. We then used a truncated Boltzmann distribution for the probabilities,
i.e., we set
\begin{equation}
p_z(\vlambda) = \frac{1}{C(\vlambda)} \sum_{j=0}^{3} \abs{\braket{z|\psi_j}}^2 e^{-\beta (E_j(\vlambda)-E_0(\vlambda))},
\end{equation}
where $C(\vlambda)$ is the normalization constant chosen to ensure $\sum_{z} p_z(\vlambda) = 1$.
We used $\beta = 10^{7}$. The temperature $1/\beta = 10^{-7}$ is typically lower than
the energy gap between the ground state and the first excited state,
so the Boltzmann distribution we obtain is typically dominated by the ground state.
A notable exception to this observation is a set of $18$ points in the parameter space of Hubbard12, where the ground state is degenerate up to numerical precision, and the result of the above procedure is highly non-deterministic.
All these points are located on the boundary of the parameter space
along the $\lambda_0=0$ line and therefore should not have a significant
effect on the phase diagram: ClassiFIM was not designed to predict
phase boundaries on the boundary of the parameter space.

We use double-precision floating point numbers (float64) for eigenstates,
eigenvalues, and probabilities. Single precision
might be sufficient for our purposes, but it would require additional
effort to verify that the use of lower precision does not
significantly affect the probabilities or the phase diagrams. The effect of
lower precision could be significant, for example, if the gap between
the ground state and the first excited state is between $10^{-10}$ and $10^{-7}$:
in this case, Lanczos-Arnoldi with double precision would easily
distinguish the states, but Lanczos-Arnoldi with single precision
would not be able to do so, resulting in the ``ground state''
returned being an arbitrary linear combination of the true ground state
and the true first excited state.

\subsubsection{Dimension reduction}
The elements of the Hilbert spaces
corresponding to Hubbard12 and FIL24 can be encoded using $n=24$ qubits.
Thus, without any additional tricks, the above
procedure would work with sparse $2^{n}$-dimensional vectors.
We used two different dimensionality reduction techniques to generate the Hubbard12 and FIL24 datasets.

The Hamiltonian conserves the number of particles of each spin, $N_{\uparrow}$ and $N_{\downarrow}$.
This allows us to restrict the problem to the subspace $\mathcal{H}_0$ with
$N_{\uparrow} = N_{\downarrow} = \left|\Lambda\right|/2$. Since we are working
at half-filling (i.e., the number of particles of each spin is fixed to $n/2$),
we only need to work with $24$-bit bitstrings, each half of which has exactly
$n/2=6$ bits set to $1$. The total number of such bitstrings (classical configurations) is
$\binom{n}{n/2}^2 = 924^2 = 853776$, equal to
the Hilbert space dimension $\dim(\mathcal{H}_0)$. This size allows us to use the Lanczos-Arnoldi method with implicit
restarts, as implemented in ARPACK-NG~\cite{ARPACK-NG}, to identify low-energy eigenstates of the Hamiltonian for any
combination of parameters. Each state can be represented
by $853776$ double-precision floating point numbers. We created lookup tables
to convert between the indices $v \in \{0, 1, \dots, 853775\}$ and the
bitstrings $z \in \{0, 1\}^{24}$ and precomputed the sparse matrices
used in the Hamiltonian in the ``$v$'' representation.

For FIL24 we note that all off-diagonal elements of the Hamiltonian are
non-positive (i.e., the Hamiltonian is stoquastic~\cite{Bravyi:2006aa}).
This allows us to reduce the dimensionality of the ground
state problem to be solved using Lanczos by considering
orbits of bitstrings under the symmetry group $D_{12}$ (the 24-element Dihedral group, which is the symmetry group of the 12-site lattice) instead of bitstrings themselves.
The number of such orbits is
the number of bracelets (turnover necklaces) of 12 beads of 4 colors
$\textnormal{A032275}(12) = 704370$~\cite{Bower1998}.
Thus, we reduced the dimensionality of the problem to be solved with Lanczos from $2^{24}$ to $704370$. If
bitstrings $x_1$ and $x_2$ lie on the same orbit of $D_{12}$ and the
ground state $\ket{\psi_0} = \sum_{x\in \mcS} a_x \ket{x}$, then
$a_{x_1} = a_{x_2}$. This allows us to reduce the dimensionality
by indexing the orbits $o_v$ of $D_{12}$ by an index $v \in \{0, 1, \dots, 704369\}$
and storing $b_{v}$ for each orbit $o_v$ instead of $a_x$ for each $x \in o_v$, where
\begin{equation}
b_{v} = \sqrt{\abs{o_v}} a_{x},\quad \textrm{where}\quad x \in o_v.
\end{equation}
Then
\begin{equation}
\ket{\psi_0} = \sum_{v=0}^{704369} b_{v} \ket{o_v},\quad\textrm{where}\quad
\ket{o_v} = \frac{1}{\sqrt{\abs{o_v}}}\sum_{x \in o_v} \ket{x}.
\end{equation}
Similarly to Hubbard12, we precompute the lookup tables to convert between
$x$ and $v$ and precompute a sparse matrix representation of the Hamiltonian
in the $\ket{o_v}$ basis. When using the resulting probability distribution
over the orbits to sample bitstrings, we first sample $v$ according to the
probability distribution, then sample $x$ uniformly from the orbit $o_v$.

\subsubsection{Computing datasets for the ``estimating FIM'' task from Lanczos ground state}
For each of the Hamiltonian families (Hubbard12 and FIL24) we generate a dataset
by randomly sampling 140 bitstrings from each point $\vlambda \in \mcM$,
thus producing a dataset $\mcD$ with
$4096 \cdot 140 = 573440$ pairs $(\vlambda, z)$.
The training dataset $\mcDtrain$
is a random $90\%$ sample of that dataset.

For each of the Hamiltonian families we perform the above dataset
generation 10 times using 10 different random seeds to assess the
stability of the presented metrics.

\subsubsection{Computing the ground truth FIM from the Lanczos ground state}
\label{ass:ground truth}
We compute the ground truth FIM, $g$, by
the finite difference method. For each $l_0 \in \{0, 1, \dots, r-2\}$ and
each $l_1 \in \{0, 1, \dots, r-1\}$ we can estimate
\begin{equation}
\label{eq:gs-chifc.00}
\bal
&[g((l_0 + 1/2) / r, l_1/r)]_{00}
\approx \\
&\qquad 8 r^2 \{1 - \sum_{v}[p_v(l_0/r, l_1/r) p_v((l_0+1)/r, l_1/r)]^{1/2}\},
\eal
\end{equation}
where $p$ is the probability distribution obtained from the Lanczos-Arnoldi method. The quantity within the square root is the product of two probability distributions obtained at nearby points $(l_0/r,l_1/r)$ and $((l_0+1)/r,l_1/r)$, respectively. A detailed derivation of \cref{eq:gs-chifc.00} is provided in Ref.~\cite{ClassiFIM-ML}.
While the finite difference estimate is not exact, we argue that
for a set of estimates on a finite parameter grid $\mcM'$, it is
better than the true value of the FIM for our purposes.
The reason is that the true value may experience a very sharp peak
(or even a Dirac delta in the case of an unavoided crossing
between the two lowest eigenvalues of $H$) at some point in the parameter space.
If the size of the peak is smaller than the grid spacing, then
one can miss it entirely by considering only the values of $g$
at the grid points.

Similarly to
\beq
(g((l_0 + \tfrac12) / r, l_1/r))
\left(\begin{smallmatrix}1\\0\end{smallmatrix}\right)
= (g((l_0 + \tfrac12) / r, l_1/r))_{00}
\eeq
in \cref{eq:gs-chifc.00}, we can estimate
\bes
\begin{align}
&g(l_0/r, (l_1 + \tfrac12)/r)
\left(\begin{smallmatrix}0\\1\end{smallmatrix}\right)
= (g(l_0/r, (l_1 + \tfrac12)/r))_{11}\\
& g((l_0 + \tfrac12)/r, (l_1 + \tfrac12)/r)
\left(\begin{smallmatrix}1\\1\end{smallmatrix}\right)\notag\\
&\qquad= (g((l_0 + \tfrac12)/r, (l_1 + \tfrac12)/r))_{{+}{+}},
\end{align}
\ees
and
\begin{align}
&(g((l_0 + \tfrac12)/r, (l_1 + \tfrac12)/r))
\left(\begin{smallmatrix}1\\-1\end{smallmatrix}\right)\notag\\
&\qquad = (g((l_0 + \tfrac12)/r, (l_1 + \tfrac12)/r))_{{-}{-}}.
\end{align}
Note that these estimates are centered at
different points in the parameter space, which is
inconvenient for the purpose of computing the distance ranking error
or plotting 2D phase diagrams.
Let $\mcM''$ be the set of centers of the grid squares in $\mcM'$.
Two of the four estimates are at points of $\mcM''$.
We average the neighboring estimates of the other two to obtain estimates
at $\mcM''$. We then combine these four estimates to obtain all components
of $g$ at $\mcM''$.

\subsection{Kitaev20}
\label{ass:kit20-dataset-gen}

The Kitaev chain has an efficient exact solution in the momentum basis. We implemented the fermionic Fourier transform to convert it to the position basis.
However, we found that implementing the fermionic Fourier transform efficiently is a non-trivial
task and plan to report our findings in future work.
Again, all numerical algorithms can be found in Ref.~\cite{public-datasets,ClassiFIM-gencode,ClassiFIM-code}. To test our implementation, we compared it with the Lanczos-based solution at one point in the parameter space.

\subsection{XXZ300}
\label{ass:xxz300-dataset-gen}

For XXZ300, the fiducial state is the ground state, so as a first step, we must prepare the ground state for each of the $64 \times 64$ values of $\vlambda$. The underlying model is a 300 qubit 1D spin chain, so we use DMRG which is well suited to this task. At the implementation level, we utilized a fork of DMRGpy library \cite{JoseLado2023} which is a python wrapper on top of ITensor~\cite{itensor} with some additional features. The exact details of our code can be found in Ref.~\cite{ClassiFIM-gencode}.

Our DMRG procedure was inspired by Refs.~\cite{Elben, Huang:22} but deviates from it in some aspects. In particular, the direct implementation of Refs.~\cite{Elben, Huang:22} does not result in the ground state. The procedure we used is the following:
\begin{itemize}
\item Pick three initial candidate states: two random
      states with maximum link (or bond) dimension $16$, and one of the two N{\'e}el states:
      $\ket{0101\ldots01}$, which is aligned with the pinning term $0.1Z_0$ in \cref{eq:XXZ-H0} (the rightmost ``$1$'' corresponds to the $-1$ eigenvalue of $Z_0$).
\item Run DMRG with 1 sweep and \verb!maxlinkdim=16! for each of the three states. Discard one of the states: either find a state with a dot-product with another, lower energy, state close to $\pm 1$, or, if such state does not exist, discard the highest energy state.
\item Run 5 more DMRG iterations with increasing linkdim for the remaining two states.
\item Project the resulting states to eigenvalues of $\otimes_i Z_i$, discarding projections of sufficiently low norm, or close to other states.
\item Pick the state with the lowest energy after one more DMRG sweep.
\item Refine this state using 10 more sweeps.
\item Run excited state DMRG using a similar strategy (re-using the best states discarded in the previous steps). If that state achieves a lower energy, swap it with the ground state and refine using 10 more sweeps.
\end{itemize}

We repeated the generation procedure twice with slightly different choices of details such as the number of sweeps, and with different random seeds. The verification results are illustrated in \cref{afig:dmrgpy}.

\begin{figure*}
\centering
\includegraphics[width=0.99\linewidth]{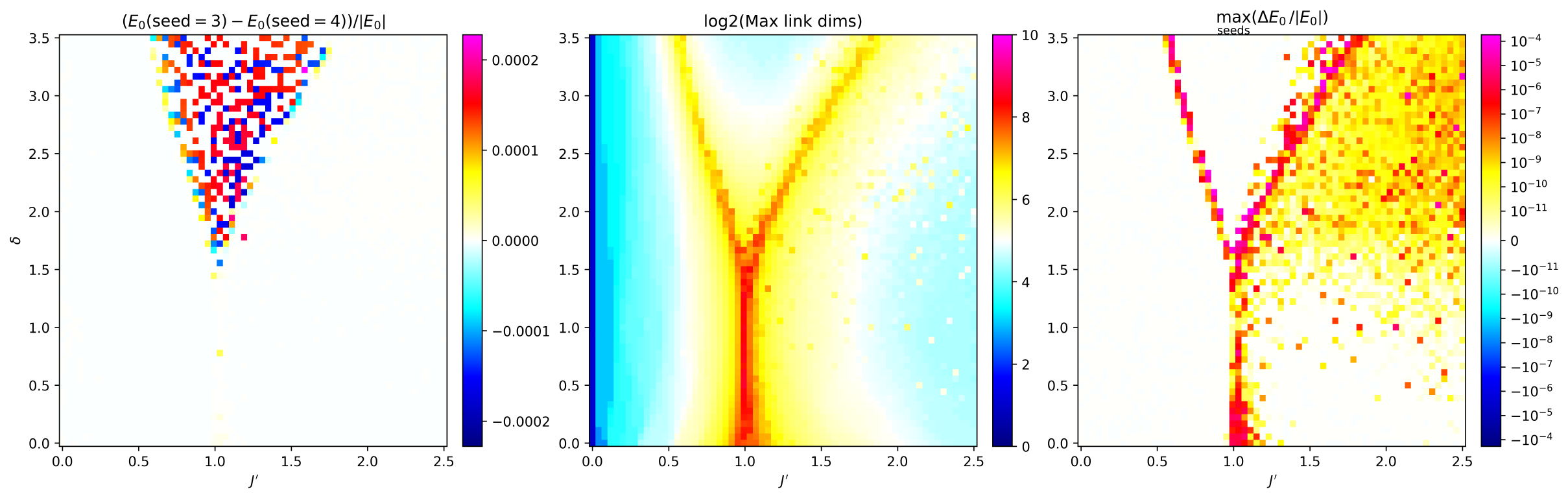}
\includegraphics[width=0.99\linewidth]{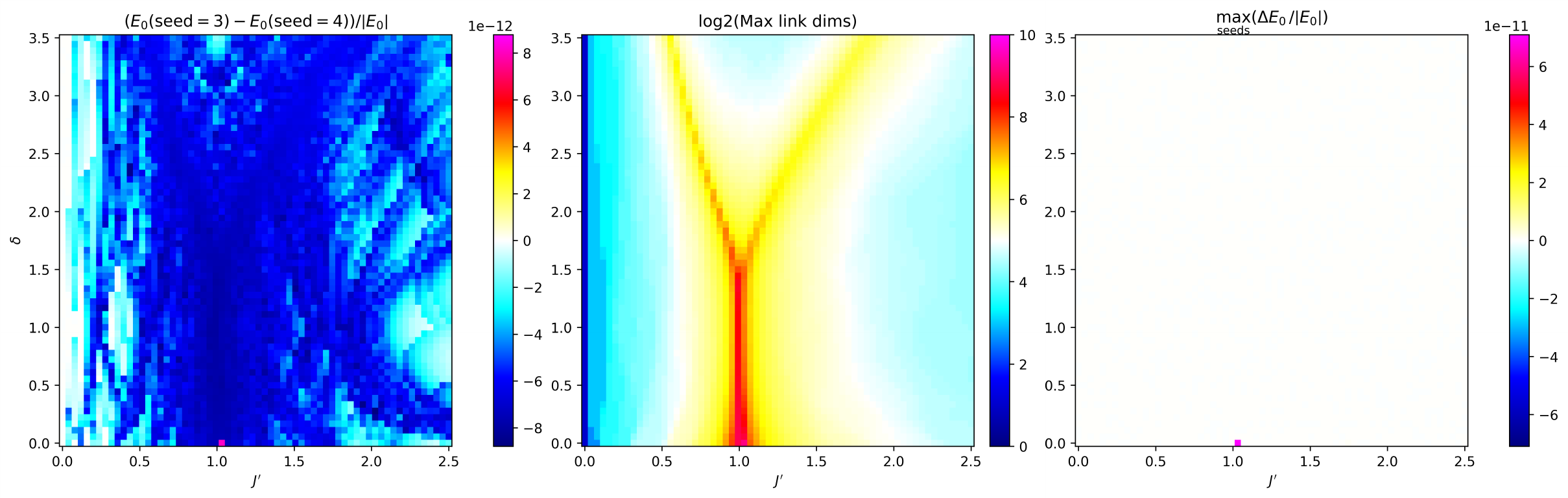}
\caption{Verification of DMRG results. Top: intermediate results from running a simpler procedure without excited state DMRG, using N\'eel states, and projecting to eigenvalues of $\otimes_i Z_i$. Bottom: final results. Left panels: difference between ``ground state'' energies found by different seeds. Middle: $\log_2$ of maximum link dimensions used in DMRG states. Right: improvement in energy made in the last sweep.}
\label{afig:dmrgpy}
\end{figure*}

The final ground state from DMRG is a matrix product state (MPS). Contracting such an MPS is straightforward, and we can use this fact to compute the ground truth FIM and produce samples in the computational basis. At the same time, applying local unitaries on each site is also straightforward, so performing random local measurements would be efficient as well.

\section{Computational resources}
\label{ass:compute}
Here we provide an estimate of the computational resources spent on this project. Most resources were spent on ground state generation, specifically Hubbard12 (${\sim} 3 \cdot 10^3$ CPU-hours) and XXZ300 (${\sim} 1.2 \cdot 10^4$ CPU-hours). This includes initial trial-and-error and generation of the ground states twice (with different random seeds) to verify reproducibility. The net time for generation of dataset and ground truth FIM without verification and initial failed attempts was about $25\%$ of the total. In these estimates, ``CPU-hour'' is one hour on one logical core: e.g., using all cores of one 6-core CPU with hyperthreading for 1 hour is counted as 12 CPU hours. In total, including preliminary experiments not covered here, we spent about $2 \cdot 10^4$ CPU-hours.

Since GPU performance varies by orders of magnitude between different GPU models, to provide resource estimates for training we scale time according to GPU performance by estimating how much time a similar computation would have taken on single NVIDIA GeForce GTX 1060 6GB GPU. According to ``GPU-hour'' defined this way, we spent about $2\cdot 10^2$ GPU hours on model training (including the training of mod-W for comparison).

\section{Application of SPCA to bitstring data}
\label{as:spca-bitstring}
SPCA is a kernel PCA method applied to shadow tomography data
that produces an embedding where different points in the parameter space
correspond to different phases~\cite{Huang:22}.
Classical shadows offer a succinct classical description of quantum many-body states that can be used to accurately predict a wide range of properties with rigorous performance guarantees~\cite{Huang:2020wo}.
Here we review SPCA and describe how to apply it to bitstring data.

\subsection{Review of SPCA}
\label{ass:spca-review}
Ref.~\cite{Huang:22} considers input data consisting of
$T$ random Pauli basis measurements of the state $\rho_\vlambda$
at each point $\vlambda$: each measurement is described by
a corresponding product state
$\ket{s^{(t)}_{\vlambda}} = \bigotimes_{i=1}^n \ket{s_{\vlambda,i}^{(t)}}$,
where each
$\ket{s_{\vlambda,i}^{(t)}} \in \{\ket{0}, \ket{1}, \ket{+}, \ket{-}, \ket{i+}, \ket{i-}\}$
is one of the six Pauli eigenstates describing both the measurement basis ($Z$, $X$, or $Y$)
and its outcome ($+1$ or $-1$) of a single qubit.
Here $t = 1,\dots, T$ indexes different measurements of the same state $\rho_\vlambda$.
A corresponding unbiased approximation to $\rho_{\vlambda}$ is given by
\begin{equation}
  \label{eq:st-sigma-vlambda-t}
  \sigma_{\vlambda}^{(t)} = \bigotimes_{i=1}^{n} \sigma_{\vlambda,i}^{(t)},
\end{equation}
where
\begin{equation}
  \label{eq:st-sigma-vlambda-i-t}
  \sigma_{\vlambda,i}^{(t)} = 3 \ket{s_{\vlambda,i}^{(t)}}\!\!\bra{s_{\vlambda,i}^{(t)}} - I,
\end{equation}
and one can compute a more precise estimate of $\rho_{\vlambda}$
by averaging over $T$ measurements (\cite[Eq.~(1)]{Huang:22}).

The procedure of unsupervised learning of the phase diagram given such
data consists of two steps. The first step is to compute a PCA kernel matrix
(\cite[Eq.~(6)]{Huang:22}):
\begin{equation}
  \label{eq:spca-st-kernel}
  K(\vlambda, \vlambda') = \exp\!\left(
    \frac{\tau}{T^2} \sum_{t,t'=1}^{T}\!\!\exp\!\left(
      \frac{\gamma}{n} \sum_{i=1}^{n}\Tr\left(
        \sigma_{\vlambda,i}^{(t)} \sigma_{\vlambda',i}^{(t')}
      \right)\right)\right),
\end{equation}
where $\tau=1.0$ and $\gamma=1.0$ are hyperparameters.%
\footnote{The referenced paper had a misprint, which we correct here:
the last superscript should be $^{(t')}$ but is mistakenly spelled as
$^{(t)}$ in \cite[Eq.~(6)]{Huang:22}.}
The second is to apply kernel PCA to that matrix to obtain the first few principal components.
Phases are then expected to appear as clusters in the space of the first few principal components.

\subsection{Application to bitstring data}
\label{ass:spca-bitstring}
There are two differences between the FIM-estimation datasets $\mcDtrain$
and shadow-tomography data assumed by \cite{Huang:22}. First,
instead of having $T$ samples for every $\vlambda$,
the number of samples corresponding to a given $\vlambda$ in $\mcDtrain$
is variable ($126$ on average). This is easily accommodated by replacing
the coefficient $\tau / T^2$ in \cref{eq:spca-st-kernel} with
$\tau / (T_{\vlambda} T_{\vlambda'})$ and adjusting the sum limits accordingly.
Second, in our data the samples are bitstrings $x_j$
which can be grouped by $\vlambda$ and renamed to $x_{\vlambda}^{(t)}$
to make the notation similar to $\ket{s^{(t)}_{\vlambda}}$ considered above.
Instead of describing a (possibly mixed) state $\rho_{\vlambda}$ they describe
a probability distribution $P_{\vlambda}(\bullet)$ which can be written as
a vector of probabilities $p_{\vlambda} \in \mathbb{R}^{2^n} = (\mathbb{R}^2)^{\otimes n}$.
To make the analogy clearer,
we can describe $p_{\vlambda}$ as a diagonal density matrix
$\rho_{\vlambda} = \sum_{x} p_{\vlambda}(x) \ket{x}\!\!\bra{x}$.
A corresponding unbiased approximation of $\rho_{\vlambda}$ is given by
\cref{eq:st-sigma-vlambda-t}, where $\sigma_{\vlambda,i}^{(t)}$ is now defined
as
\begin{equation}
  \label{eq:bs-sigma-vlambda-i-t}
  \sigma_{\vlambda,i}^{(t)} = \ket{x_{\vlambda,i}^{(t)}}\!\!\bra{x_{\vlambda,i}^{(t)}}.
\end{equation}
Correspondingly, the kernel matrix $K$ can now be computed by the same
expression \cref{eq:spca-st-kernel} (with the correction mentioned above),
which simplifies to
\begin{multline}
  \label{eq:spca-bs-kernel}
  K(\vlambda, \vlambda') = \exp\Biggl[
    \frac{\tau}{T_{\vlambda} T_{\vlambda'}}
    \sum_{t=1}^{T_{\vlambda}} \sum_{t'=1}^{T_{\vlambda'}}\\
      \exp\Biggl(\frac{\gamma}{n} \sum_{i=1}^{n} I\left(
        x_{\vlambda,i}^{(t)} = x_{\vlambda',i}^{(t')}
      \right)\Biggr)\Biggr],
\end{multline}
where $I(\bullet)$ is the indicator function.

\subsection{Discussion}
In \cref{ass:spca-bitstring}, we introduced an interpretation of SPCA in the context of its application to bitstring data.
Another possible interpretation is that bitstring data should be used to estimate the kernel matrix $K$ that would be obtained if SPCA were applied to shadow tomography data corresponding to the diagonal density matrix $\rho_{\vlambda}$.
By averaging the inner $\exp$ in \cref{eq:spca-st-kernel},
one can see that \cref{eq:spca-bs-kernel}
with $\tau$ replaced by $\tau \exp(-\gamma)$ can serve as an approximation of such $K$.
This approximation becomes exact as $n, T \to \infty$.

Yet another interpretation is that bitstring data should be used to approximate $K$ from \cref{eq:spca-st-kernel} for a pure state with positive coefficients
$\rho_{\vlambda} = \ket{\psi_{\vlambda}}\!\!\bra{\psi_{\vlambda}}$,
where $\ket{\psi_{\vlambda}} = \sum_{x} \sqrt{p_{\vlambda}(x)} \ket{x}$.
However, it is unclear whether this can be done efficiently.
The following example demonstrates that an expression similar to \cref{eq:spca-bs-kernel} cannot serve as such an approximation.

Assume both $n$ and $T$ are large.
Consider three points $\vlambda_1, \vlambda_2, \vlambda_3$,
where $P_{\vlambda_1}$ assigns probability $1$ to the bitstring $00\dots0$,
$P_{\vlambda_2}$ assigns probability $1$ to the bitstring $00\dots011\dots1$
with the first $(1-\alpha)n$ bits set to $0$ and the remaining $\alpha n$ bits set to $1$
for some $\alpha \in [0, 1]$ (which is a multiple of $1/n$),
and $P_{\vlambda_3}$ assigns equal probability $1/2^n$ to all bitstrings.
Furthermore, let us denote the kernel and hyperparameters in \cref{eq:spca-st-kernel}
as $K_{ST}$, $\tau_{ST}$, and $\gamma_{ST}$ (``ST'' for shadow tomography),
and the kernel and hyperparameters in \cref{eq:spca-bs-kernel}
as $K_{BS}$, $\tau_{BS}$, and $\gamma_{BS}$ (``BS'' for bitstring).
Then we have
\begin{subequations}
\begin{align}
  K_{BS}(\vlambda_1, \vlambda_2)
    &= \exp\left(\tau_{BS} e^{(1-\alpha)\gamma_{BS}}\right),\\
  K_{BS}(\vlambda_3, \vlambda_3)
    &= \exp\left(\tau_{BS} e^{\gamma_{BS}/2}(1 + O(1/\sqrt{nT})\right),
\end{align}
\end{subequations}
where $O(1/\sqrt{nT})$ denotes a term with standard deviation
bounded by $\text{const} / \sqrt{nT}$.
In other words, $K_{BS}(\vlambda_3, \vlambda_3) \approx K_{BS}(\vlambda_1, \vlambda_2)$
for $\alpha = 1/2$.
On the other hand, the shadow tomography kernel for a state with positive coefficients
can be estimated by observing that all three such states are product states:
\begin{subequations}
\begin{align}
  K_{ST}(\vlambda_1, \vlambda_2)
    &= \exp\left(\tau_{ST} e^{-\alpha\gamma_{ST}}(1+\epsilon)\right),\\
  K_{ST}(\vlambda_3, \vlambda_3)
    &= \exp\left(\tau_{ST} (1+\epsilon)\right),
\end{align}
\end{subequations}
where $1+\epsilon$ is short for $1 + O(1/n + 1/\sqrt{nT})$.
In other words, $K_{ST}(\vlambda_3, \vlambda_3)
\approx K_{ST}(\vlambda_1, \vlambda_1) \approx K_{ST}(\vlambda_1, \vlambda_2)$
for $\alpha = 0$ (in contrast to $\alpha=1/2$ in the bitstring case).
This is because $\vlambda_3$ corresponds to a pure state
$\ket{+}^{\otimes n}$, which is equivalent up to 1-local Cliffords to the state
$\ket{0}^{\otimes n}$ corresponding to $\vlambda_1$, thus resulting in the
same distribution of the kernel matrix element.
Therefore, the kernels $K_{ST}$ and $K_{BS}$ are different
and cannot be approximated by each other.

\section{Comparison with \texorpdfstring{Ref.~\cite{vanNieuwenburg2016LearningPT}}{W}}
\label{ass:comparison-W}

Recall that we conducted numerical experiments for one of the three models presented in Ref.~\cite{vanNieuwenburg2016LearningPT}, namely, the Kitaev chain with lattice size $L=20$ [see \cref{eq:KitaevH}]. To prevent confusion, we first recall what Kitaev20 means in our work as compared to the input data used in Ref.~\cite{vanNieuwenburg2016LearningPT} for the same model. In both works, information is obtained from the ground state of this model for different choices of the parameter $\mu$ for $t = 1$ fixed. In our work, the Kitaev20 datasets $\mcDtrain$ are constructed by measuring the ground state in the computational basis ${\approx}126$ times (see \cref{ass:kit20-dataset-gen}). In \cite{vanNieuwenburg2016LearningPT}, the input dataset consisted of the top $10$ eigenvalues of the reduced density matrix of the ground state corresponding to $10$ sites. This is also commonly known as a truncated entanglement spectrum from a bipartition of the ground state. Either way, this dataset can be interpreted as a probability distribution over a set of $11$ possible samples: $\{\#0, \#1, \dots, \#9, \text{``not in top 10''}\}$.

We note that the comparison presented here refers to the original W method as presented in Ref.~\cite{vanNieuwenburg2016LearningPT} and does not include the improvements suggested in Ref.~\cite{arnold2023machine}.

\subsection{The entanglement spectrum data is too powerful}
\label{ass:dataset-too-powerful}

\begin{figure}[t]
\centering
\includegraphics[width=\columnwidth]{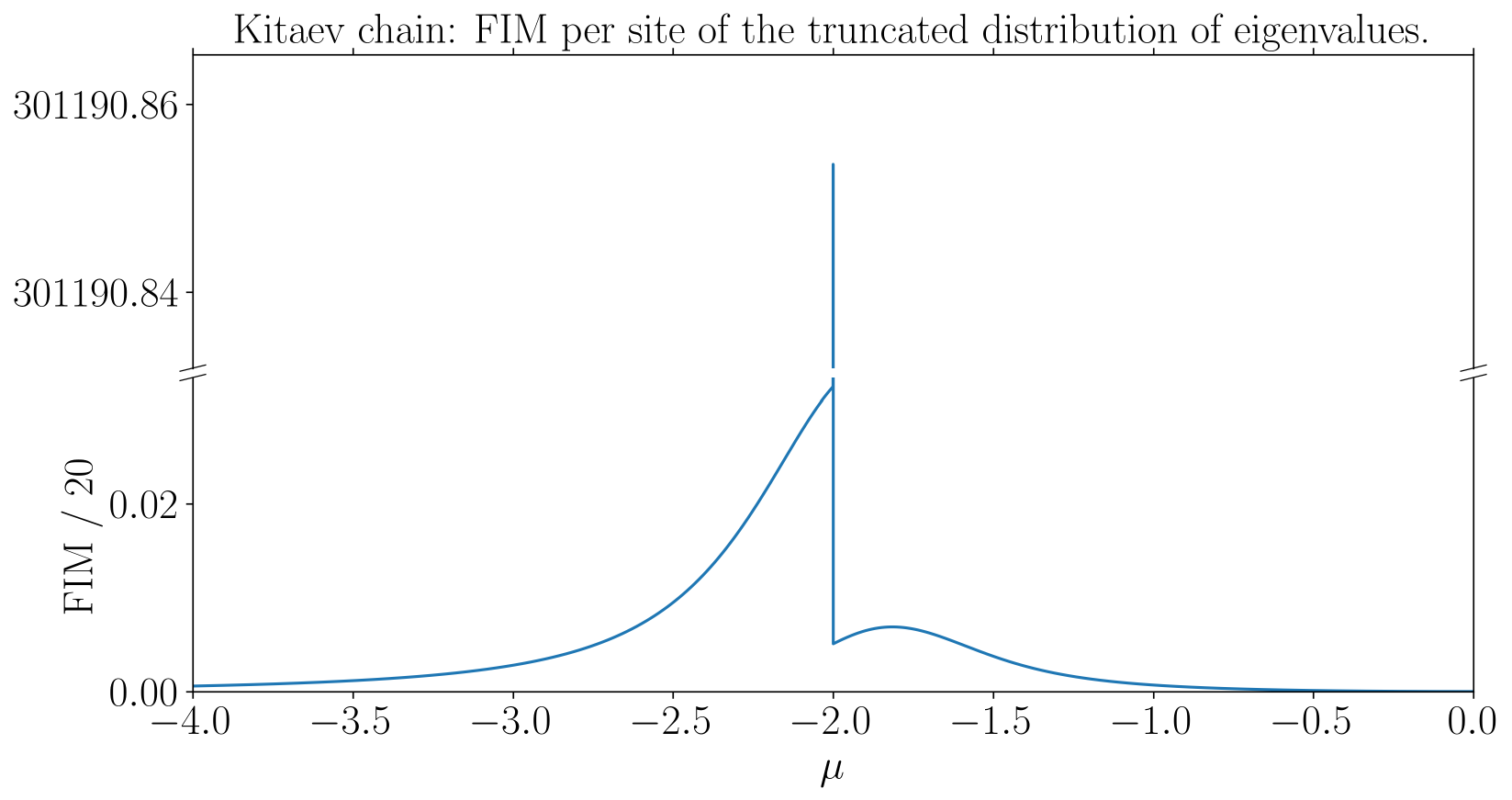}
\caption{The FIM of the distribution obtained by interpretation of the top 10 eigenvalues of the reduced density matrix. The parameter $t$ of the Kitaev chain is assumed to be equal to 1.}
\label{fig:kitaev20-spectrum-fim}
\end{figure}

In \cref{fig:kitaev20-spectrum-fim}, we plot the result of computing the FIM directly from the truncated entanglement spectrum dataset used in Ref.~\cite{vanNieuwenburg2016LearningPT}. Generating this plot does not require the use of ClassiFIM or the ML algorithm described in Ref.~\cite{vanNieuwenburg2016LearningPT}, and yet, it correctly predicts the location of the phase transition. Hence, ML is not needed, and we conclude that the truncated entanglement spectrum is too powerful.

\begin{figure}[t]
\centering
\includegraphics[width=\columnwidth]{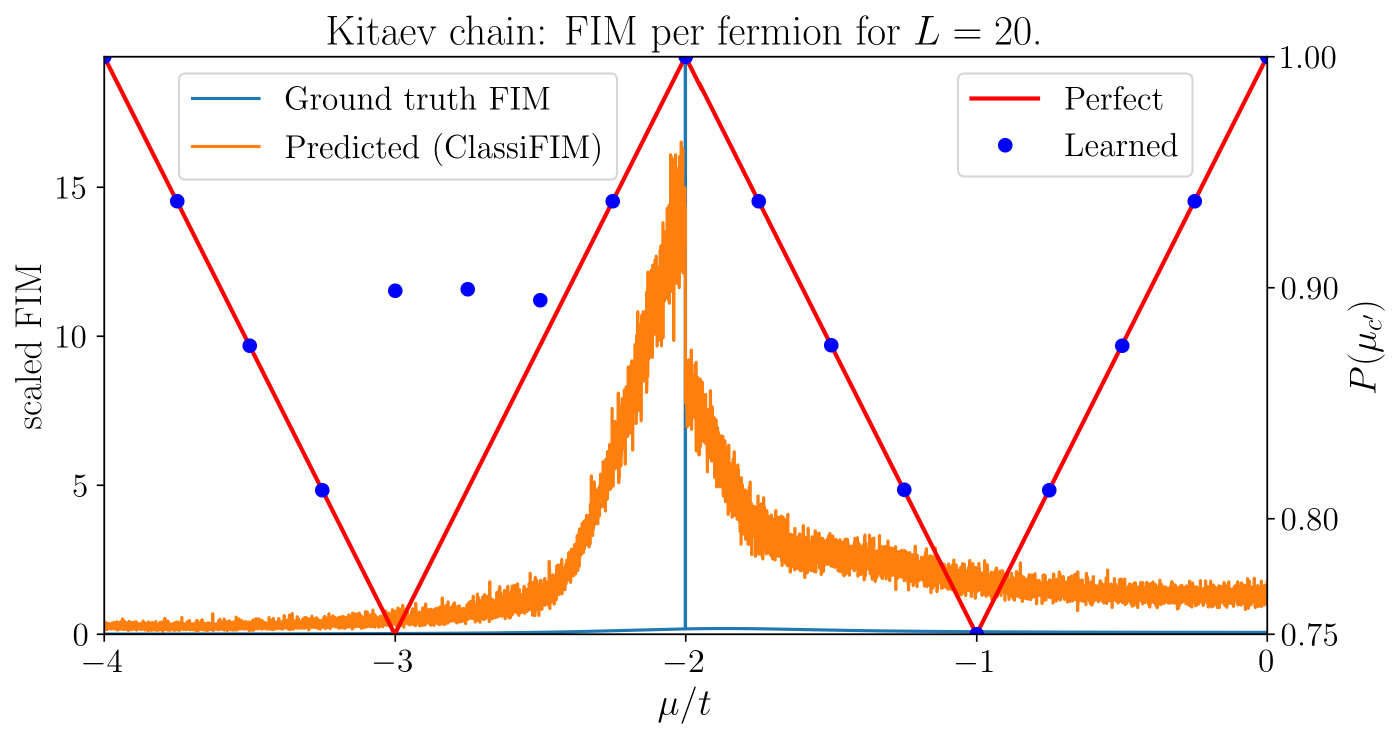}
\caption{Left axis: The FIM in our approach and its estimate computed from Kit1D20. Right axis: ideal and numerically computed value of the indicator \cite{vanNieuwenburg2016LearningPT}.}
\label{fig:KitFid}
\end{figure}

\subsection{Details of our comparison with the ``W" method}
\label{ass:w-comp-details}

Ref.~\cite{vanNieuwenburg2016LearningPT} presented two main ideas related to identifying phase transitions: (i) In order to refine the location of a phase transition, one can train a binary classifier that predicts one of two phases trained on the data sampled from the regions where the phase is known, and then plot average values of predictions in the unknown region, thus obtaining the refined location of the phase transition as a crossover point where the number of samples labeled by each of the two classes become equal to each other; (ii) (Confusion Scheme) In order to draw a phase diagram, one can train multiple neural networks on incorrectly labeled data and expect the accuracy of such a neural network to form a W-like shape with the middle peak pointing to the location of a phase transition. Our method, ClassiFIM, provides an alternative to (ii): method (i) can be applied on top of any method after an approximate phase diagram is obtained. Therefore, a comparison of ClassiFIM with Ref.~\cite{vanNieuwenburg2016LearningPT} means a comparison of ClassiFIM with the Confusion Scheme. Because of how this scheme works, we call it the ``W" method.

In \cref{fig:KitFid}, we compare the output of ClassiFIM to the W method for Kitaev20. Both methods correctly predict $\mu/t = -2$ as the critical point of the TQPT. Let us summarize the meaning of the plot for each method. 

The left vertical axis (``scaled FIM'') pertains to ClassiFIM, as do the orange and blue solid lines, which represent the ClassiFIM output and the ground truth FIM, respectively. The blue solid line is almost zero everywhere except at $\mu/t = -2$ where it is sharply peaked. Evidently, the approximate FIM from ClassiFIM has a much broader peak, but its maximum is still correctly located around $\mu/t = -2$. Hence, ClassiFIM correctly predicts the location of the TQPT. The W method would ideally yield the red curve, whose numerical values correspond to the right vertical axis ($P(\mu_{c'})$), representing the classification accuracy of mislabelled samples. The actual implementation provided in Ref.~\cite{vanNieuwenburg2016LearningPT} yielded the blue points, predicting that the phase transition occurs at $\mu/t = -2$.

ClassiFIM exhibits two technical advantages that may not be apparent from \cref{fig:KitFid}. First, our datasets are different. For this fermionic system, the ``computational basis measurement" dataset used in ClassiFIM corresponds to measuring the presence or absence of a fermion at each site $140$ times per parameter value. This is a much simpler measurement protocol than obtaining the truncated entanglement spectrum that is used in the W method. Moreover, we have already shown in \cref{fig:kitaev20-spectrum-fim} that the W dataset is powerful enough to predict the QPT location \emph{without} ML. No such procedure is possible with the Kitaev20 bitstring dataset.

Second, the W scheme was applied for $\mu \in [-4, 0]$.
We observe that this scheme applied to $[-4, 4]$ instead would fail,
i.e., no peak in the middle would be observed.
Instead of a W-like shape, the Confusion Scheme would return a V-like shape.
The reason is that the truncated entanglement spectrum is symmetric about $\mu = 0$.
On the other hand, ClassiFIM works correctly whether the range is $[-4, 0]$ or $[-4, 4]$.
In fact, the above ClassiFIM data was indeed trained on data from $\mu \in [-4, 4]$,
and the estimated FIM in $[0, 4]$ (not shown) roughly matches the mirror image
of that in $[-4, 0]$.
When we applied ClassiFIM to $\mu \in [-4, 0]$ we found the curve to more
sharply peaked and the peak location to be unchanged.

\section{PeakRMSE: Quantitative comparison in machine learning of phase transitions}
\label{as:peakrmse} 

As mentioned in the Introduction, there is a lack of standardization in the field of unsupervised ML of QPTs that prevents direct quantitative comparisons between methods.
This is also clear from the discussion of Ref.~\cite{vanNieuwenburg2016LearningPT} in the previous section [\cref{ass:comparison-W}]; \cref{fig:KitFid} is ultimately unsatisfactory since any quantitive comparison of peaks is mired by using a different dataset and strategy from the start. While some methods such as ClassiFIM and those presented in Refs.~\cite{arnold2023machine,duy2022fisher} solve the FIM estimation task, W and SPCA do not.
Nevertheless, we have found that it is possible to compare methods for solving the FIM estimation task with the methods producing other types of phase diagrams (e.g. Refs.~\cite{vanNieuwenburg2016LearningPT, Huang:22}) indirectly using a metric ``PeakRMSE'' we introduce,
applicable when ground truth locations of finite size precursors to phase
transitions are known. For simplicity, we refer to these precursors as ``peaks''
below, even though they might be other features such as, e.g., level crossings, depending
on the type of phase diagram used.

\subsection{Motivation behind PeakRMSE}
\label{ass:peakrmse-motivation}
Given different methods for building a phase diagram estimate different quantities
we must focus on something common they all allow to predict: the phase boundaries.
E.g. the phase boundary predicted by W consists of peaks with accuracy close to 1,
phase boundary from FIM is indicated by local peaks of FIM,
and SPCA ``colors'' different phases in different colors, so phase boundaries
are the boundaries of the colored regions.

Yet, W can only be applied to 1D parameter space,
so when the full phase diagram has more dimensions
(such as our 2D datasets Hubbard12, FIL24, IsNNN400, XXZ300),
we have to restrict our attention to 1D slices of such dataset:
horizontal slices keep $\lambda_1$ constant and sweep $\lambda_0$ from $0$ to $1$,
and vertical slices keep $\lambda_0$ constant and sweep $\lambda_1$ from $0$ to $1$.

For each fixed $\lambda_1$, we know from the exact FIM plot how many peaks are expected.
Suppose it is one peak at $\lambda_0^*(\lambda_1)$. We now can use method-specific
procedure to extract the most likely peak $\hat\lambda_0(\lambda_1)$.
 We can then compute a squared error between these peak predictions,
 $(\lambda_0^*(\lambda_1) - \hat\lambda_0(\lambda_1))^2$.
When two peaks are present for a given $\lambda_1$, we simply ask the method
for its two highest peaks.
We can then return the square root of mean of such squared errors as PeakRMSE.

In practice, naively following this strategy results in a few issues,
illustrated by the following examples:
\begin{itemize}
  \item Consider an output of a method for a single 1D slice. Such output
    might be noisy and contain a large number of peaks. If we allow the method
    to return all of them as predictions, the PeakRMSE might be small
    (because to any location of the slice, including the ground truth peak,
    there is a nearby predicted peak) even though the method's prediction
    is not useful.
  \item Imagine that, to avoid the above issue, we restricted each method to
    return at most as many peaks as there are ground truth peaks in the slice.
    Imagine, further, that the ground truth contains a single peak in the slice,
    but an additional peak is present just outside the slice's boundary.
    A method misjudging the location of such boundary peak might be penalized
    harshly by PeakRMSE even though it made only a small error.
  \item Similarly to the above example, imagine that ground truth has
    two similar subtle peak-like features in a slice, but only
    one of them qualifies as a peak according to the thresholds we set.
    A method which successfully predicts both of them, but only allowed
    to output one, might misjudge the subtlety of the two locations
    slightly and output the wrong one. Similarly to the above,
    such small error might be penalized harshly by PeakRMSE if those
    locations are far apart.
\end{itemize}
We address these issues by introducing the notion of ``inner peak''
and ``outer peak'' in the ground truth data. Inner peaks are the peaks
passing the thresholds we set and located away from the boundary of the slice,
and outer peaks are
the peak-like features that are close to the boundary of the slice
or only pass a lowered set of thresholds for the definition of a peak.

\subsection{Definition of PeakRMSE}
Consider a method for building a phase-diagram like picture.
It may be solving a FIM-estimation task or some other task.
The procedure to measure PeakRMSE is as follows:
\begin{enumerate}
  \item Split the ground truth phase diagram into 1D slices:
    e.g. 2D phase diagrams are split into all vertical and horizontal slices.
  \item For each slice $s$ determine
    (i) $n_s$ --- the number of all ground truth peaks in the slice
      (both inner and outer);
    (ii) $n'_s$ --- the number of inner ground truth peaks in the slice ($n'_s \leq n_s$);
    (iii) $\{y_{s,j}\}_{j=1}^{n'_s}$ --- the locations of the inner ground truth peaks.
  \item For each slice $s$, give each method the number $n_s$ and ask it to output
    up to $n_s$ guesses for the locations of the peaks in the slice:
    $\{x_{s,j}\}_{j=1}^{n_s}$. Add the boundary points of the slice
    (denoting them $x_{s,0}$ and $x_{s,n_s+1}$) to the list
    of guesses obtaining $\{x_{s,j}\}_{j=0}^{n_s + 1}$.
  \item For each slice and each ground truth peak $y_{s,j}$
    compute the distance to the closest
    predicted peak: $d_{s,j} = \min_{k} \abs{x_{s,k} - y_{s,j}}$.
\end{enumerate}
The PeakRMSE is then defined as
\begin{equation}
  \text{PeakRMSE} = \sqrt{\frac{1}{N} \sum_{s,j} d_{s,j}^2},
\end{equation}
where $N = \sum_s n'_s$ is the total number of inner ground truth peaks in all slices.
In other words, for each slice, we ask each method to produce the number of ``peaks'' equal to the total number of peaks in the ground truth (both inner and outer) but only score the distance for the inner peaks.
Lower PeakRMSE indicates better performance, and a method having access to the ground truth
would achieve PeakRMSE equal to zero.

This abstract definition of PeakRMSE needs the definitions of inner and outer peaks,
which we provide below in \cref{ass:peakrmse-boundary}, and the modifications
to the tested method so that it returns the guesses $\{x_{s,j}\}_{j=1}^{n_s}$
attempting to minimize the PeakRMSE metric, which are performed in
\cite[Sec 5 and App. F]{ClassiFIM-ML}: thus modified versions of W, SPCA, and ClassiFIM
are called mod-W, mod-SPCA, and mod-ClassiFIM.

\subsection{Determining the phase boundary of the exact FIM phase diagram (peak locations)}
\label{ass:peakrmse-boundary}
For the datasets we generate the ground truth is given in the form of the FIM phase diagram.
The phase boundary of the exact FIM phase diagram is defined by peaks in the FIM.
In the thermodynamic limit, these become sharp, but for any finite system,
we must determine these peaks by some numerical method. Consider, for example,
a horizontal slice where we sweep $\lambda_0$ from $0$ to $1$ while keeping $\lambda_1$ constant.
Phase transition-like features are then expected to appear
as local maxima of $g_{00}$ as a function of $\lambda_0$.
In order to avoid selecting very shallow maxima of $g_{00}$
we pick a prominence cutoff.
Here, by prominence, we mean the term widely used in topography,
defined as the height of the peak's summit above the lowest contour
line encircling it and no higher summit.
Formally, for a given peak (i.e., a local maximum) $P$ with elevation $h(P)$,
one first identifies the lowest contour line that encloses $P$.
This contour line determines the saddle point for the peak $P$.
Then, the prominence of peak $P$, denoted as $\text{prom}(P)$,
is the difference in elevation between $P$ and the elevation of the saddle point.
If the elevation of the saddle point is $h(C)$, then:
\begin{equation}
   \text{prom}(P) = h(P) - h(C).
\end{equation}
In our case, this definition is applied in the 1D slice, i.e. the contour ``line''
consists of just two points: to the left and to the right of $P$.

The innner peaks are then the local maxima of $g_{00}$ (as a function of $\lambda_0$)
that are not close to the boundary
of the slice and satisfy the prominence cutoff $C$.
The outer peaks are all other local maxima of $g_{00}$
satisfying the prominence cutoff $C/2$ or additional suspected peaks
outside the slice.
Specifically, we we consider the peaks within $6/64$ to be close to the boundary,
and if a neighboring slice had a peak within $3/64$ of the boundary
but the current one does not have one within $6/64$ of the boundary,
we add one outer peak.

The definition for vertical slices is analogous.


%

\end{document}